\providecommand{\U}[1]{\protect\rule{.1in}{.1in}}
\newtheorem{theorem}{Theorem}
\newtheorem{corollary}[theorem]{Corollary}
\newenvironment{proof}[1][Proof]{\noindent\textbf{#1.} }{\ \rule{0.5em}{0.5em}}
\numberwithin{equation}{section}
\begin{document}

\title{Sequential, successive, and simultaneous decoders for entanglement-assisted
classical communication}
\author{Shen Chen Xu and Mark M. Wilde\\\textit{School of Computer Science, McGill University,}\\\textit{Montreal, Quebec, H3A 2A7 Canada}}
\maketitle

\begin{abstract}
Bennett \textit{et al}.~showed that allowing shared entanglement between a
sender and receiver before communication begins dramatically simplifies the
theory of quantum channels, and these results suggest that it would be
worthwhile to study other scenarios for entanglement-assisted classical
communication. In this vein, the present paper makes several contributions to
the theory of entanglement-assisted classical communication. First, we
rephrase the Giovannetti-Lloyd-Maccone sequential decoding argument as a more
general \textquotedblleft packing lemma\textquotedblright\ and show that it
gives an alternate way of achieving the entanglement-assisted classical
capacity. Next, we show that a similar sequential decoder can achieve the
Hsieh-Devetak-Winter region for entanglement-assisted classical communication
over a multiple access channel. Third, we prove the existence of a quantum
simultaneous decoder for entanglement-assisted classical communication over a
multiple access channel with two senders. This result implies a solution of
the quantum simultaneous decoding conjecture for unassisted classical
communication over quantum multiple access channels with two senders, but the
three-sender case still remains open (Sen recently and independently solved
this unassisted two-sender case with a different technique). We then leverage
this result to recover the known regions for unassisted and assisted quantum
communication over a quantum multiple access channel, though our proof
exploits a coherent quantum simultaneous decoder. Finally, we determine an
achievable rate region for communication over an entanglement-assisted bosonic
multiple access channel and compare it with the Yen-Shapiro outer bound for
unassisted communication over the same channel.

\end{abstract}

Shared entanglement between a sender and receiver leads to surprises such as
super-dense coding~\cite{PhysRevLett.69.2881}\ and
teleportation~\cite{PhysRevLett.70.1895}, and these protocols were the first
to demonstrate that entanglement, classical bits, and quantum bits can
interact in interesting ways. For this reason, one could argue that these
protocols and their noisy generalizations~\cite{DHW05RI,HW08GFP,HW09}\ make
quantum information theory~\cite{book2000mikeandike,W11} richer than its
classical counterpart~\cite{book1991cover}. A good way to think of the
super-dense coding protocol is that it is a statement of resource
conversion~\cite{DHW05RI}: one noiseless qubit channel and one noiseless ebit
are sufficient to generate two noiseless bit channels between a sender and receiver.

Bennett \textit{et al}.~explored a generalization of the super-dense coding
protocol in which a sender and receiver are given noiseless entanglement in
whatever form they wish and access to many independent uses of a noisy quantum
channel, and the goal is to determine how many asymptotically perfect
noiseless bit channels that the sender and receiver can simulate with the
aforementioned resources~\cite{BSST99,BSST02,H02}. The entanglement-assisted
classical capacity theorem provides a beautiful answer to this question.\ The
optimal rate at which they can communicate classical bits in the presence of
free entanglement is equal to the mutual information of the
channel~\cite{BSST02,H02}, defined as%
\[
I\left(  \mathcal{N}\right)  \equiv\max_{\phi^{AA^{\prime}}}I\left(
A;B\right)  _{\rho},
\]
where $\rho^{AB}\equiv\mathcal{N}^{A^{\prime}\rightarrow B}(\phi^{AA^{\prime}%
})$, $\mathcal{N}^{A^{\prime}\rightarrow B}$ is the noisy channel connecting
the sender to the receiver, and $\phi^{AA^{\prime}}$ is a pure, bipartite
state prepared at the sender's end of the channel. This result is the
strongest statement that quantum information theorists have been able to make
in the theory of quantum channels, because the above channel mutual
information is additive as a function of any two channels $\mathcal{N}$ and
$\mathcal{M}$ ~\cite{PhysRevA.56.3470}:
\[
I\left(  \mathcal{N}\otimes\mathcal{M}\right)  =I\left(  \mathcal{N}\right)
+I\left(  \mathcal{M}\right)  ,
\]
and the mutual information $I\left(  A;B\right)  $ is concave in the input
state when the channel is fixed~\cite{PhysRevA.56.3470} (these two properties
imply that we can actually calculate the entanglement-assisted classical
capacity of \textit{any} quantum channel). Furthermore, this information
measure is particularly robust in the sense that a quantum feedback channel
from receiver to sender does not increase it---Bowen showed that the classical
capacity of a quantum channel in the presence of unlimited quantum feedback
communication is equal to the entanglement-assisted classical
capacity~\cite{B04}. For these reasons, the entanglement-assisted classical
capacity of a quantum channel is the best formal analogy of Shannon's
classical capacity of a classical channel~\cite{bell1948shannon}.

The simplification that shared entanglement brings to the theory of quantum
channels suggests that it might be fruitful to explore other scenarios in
which communicating parties share entanglement, and this is precisely the goal
of the present paper. Indeed, we explore five different scenarios for
entanglement-assisted classical communication:

\begin{enumerate}
\item Sequential decoding for entanglement-assisted classical communication
over a single-sender, single-receiver quantum channel.

\item Sequential and successive decoding for entanglement-assisted classical
communication over a quantum multiple access channel (a two-sender,
single-receiver channel).

\item Simultaneous decoding for classical communication over an
entanglement-assisted quantum multiple access channel.

\item Coherent simultaneous decoding for assisted and unassisted quantum
communication over a quantum multiple access channel.

\item Entanglement-assisted classical communication over a bosonic multiple
access channel.
\end{enumerate}

We briefly overview each of these scenarios in what follows.

Our first contribution is a sequential decoder for entanglement-assisted
classical communication, meaning that the receiver performs a sequence of
measurements with \textquotedblleft yes/no\textquotedblright\ outcomes in
order to determine the message that the sender transmits (the receiver
performs these measurements on the channel outputs and his share of the
entanglement). The idea of this approach is the same as the recent
Giovannetti-Lloyd-Maccone (GLM) sequential decoder for unassisted classical
communication~\cite{GLM10} (which in turn bears similarities to the Feinstein
approach \cite{F54,ON07,itit1999winter}). In fact, our approach for proving
that the sequential method works for the entanglement-assisted case is to
rephrase their argument as a more general \textquotedblleft packing
lemma\textquotedblright~\cite{HDW08,W11}\ and exploit the
entanglement-assisted coding scheme of Hsieh \textit{et al}.~\cite{HDW08,W11}.

Our next contribution is to extend this sequential decoding argument to a
quantum multiple access channel. Winter~\cite{W01}\ and Hsieh \textit{et
al}.~\cite{HDW08}\ have already shown that successive decoding works well for
unassisted and assisted transmission of classical information over a quantum
multiple access channel, respectively. (Here, successive decoding means that
the receiver first decodes one sender's message and follows by decoding the
other sender's message). We show that a receiver can exploit a sequence of
measurements with \textquotedblleft yes/no\textquotedblright\ outcomes to
determine the first sender's message, followed by a different sequence of
\textquotedblleft yes/no\textquotedblright\ measurements to determine the
second sender's message. Thus, our decoder here is both sequential and
successive and generalizes the GLM\ sequential decoding scheme.

Our third contribution is to prove that the receiver of an
entanglement-assisted quantum multiple access channel can exploit a quantum
simultaneous decoder to detect two messages sent by two respective senders. A
simultaneous decoder is different from a successive decoder---it can detect
the two senders' messages asymptotically faithfully as long as their
transmission rates are within the pentagonal rate region of the multiple
access channel \cite{el2010lecture,W01,HDW08}. A simultaneous decoder is more
powerful than a successive decoder for two reasons:

\begin{enumerate}
\item A simultaneous decoder does not require the use of time-sharing in order
to achieve the rate region of the multiple access channel (whereas a
successive decoder requires the use of time-sharing). Thus, the technique
should generalize well to the setting of \textquotedblleft
one-shot\textquotedblright\ information theory~\cite{DR09},\ where
time-sharing does not apply because that theory is concerned with what is
possible with a \textit{single} use of a quantum channel.

\item Nearly every proof in classical network information theory exploits a
simultaneous decoder~\cite{el2010lecture}. Thus, a \textit{quantum}
simultaneous decoder would be of broad interest for a network theory of
quantum information. In particular, the strategy for achieving the best known
achievable rate region of the classical interference channel exploits a
simultaneous decoder~\cite{HK81,el2010lecture}. (An interference channel has
two senders and two receivers, and each sender is interested in communicating
with one particular receiver.)
\end{enumerate}

We should mention that Fawzi \textit{et al}.~could prove the existence of a
quantum simultaneous decoder for certain quantum channels~\cite{FHSSW11}, but
a proof for the general case remained missing and they did not address the
entanglement-assisted case. Though, the results of this paper and recent work
of Sen~\cite{S11a} give a quantum simultaneous decoder for unassisted
communication over a two-sender multiple access channel and solve the
conjecture from Ref.~\cite{FHSSW11} for the two-sender case. It remains
unclear how to prove the conjecture for the case of three senders. The results
of this work might be useful for establishing an achievable rate region for a
quantum interference channel setting in which sender-receiver pairs share
entanglement before communication begins, but this remains the topic of future work.

We then leverage the above result to recover the known regions for assisted
and unassisted quantum communication over a quantum multiple access
channel~\cite{nature2005horodecki,YHD05MQAC,HDW08}. We call the decoder a
\textit{coherent quantum simultaneous decoder} because we construct an
isometry from the above simultaneous decoding POVM, and the isometry is what
enables quantum communication between both senders and the receiver.

Our final contribution is to determine an achievable rate region for
entanglement-assisted classical communication over the multiple access bosonic
channel studied in Ref.~\cite{YS05}. This channel is simply a beamsplitter
with two input ports, where the receiver obtains one output port and the
environment of the channel obtains the other output port. The beamsplitter is
a simplified model for light-based free-space communication in a
multiple-access setting.
%As
%one might suspect, the entanglement-assisted achievable rate region for this
%channel is larger than the unassisted region found in Ref.~\cite{YS05}.
In order to calculate the rate region for this setting, we apply the theorem
of Hsieh \textit{et al}.~in Ref.~\cite{HDW08}\ with both senders sharing a
two-mode squeezed vacuum state~\cite{GK04} with the receiver. Since this state
achieves the entanglement-assisted capacity of the single-mode lossy bosonic
channel~\cite{GLMS03,GLMS03a,HW01}, we might suspect that it should do well in
the multiple access setting. Though, it still remains open to determine
whether this strategy is optimal.

\section{Packing Argument for a Sequential Decoder}

\label{sec:sequential-packing}Giovannetti, Lloyd, and Maccone (GLM)\ offered a
scheme for transmitting classical information over a quantum channel that
exploits a sequential decoder~\cite{GLM10}. In their sequential decoding
scheme, the receiver tries to distinguish the transmitted message from a list
of all possible messages one by one until the correct one is identified, by
performing a sequence of projective measurements. We recast this procedure as
a general packing argument in this section, and the next section demonstrates
that the sequential decoding scheme works well for entanglement-assisted
classical communication.

\begin{theorem}
[Sequential Packing]\label{thm:sequential-packing}Let $\{p_{X}\left(
x\right)  ,\rho_{x}\}_{x\in\mathcal{X}}$ be an ensemble of states indexed by
letters in an alphabet $\mathcal{X}$. Each state $\rho_{x}$ has the following
spectral decomposition:%
\begin{equation}
\rho_{x}=\sum_{y}\lambda_{x,y}\left\vert \psi_{x,y}\right\rangle \left\langle
\psi_{x,y}\right\vert ,
\end{equation}
and the expected density operator of the ensemble is as follows:%
\begin{equation}
\rho\equiv\sum_{x\in\mathcal{X}}p_{X}\left(  x\right)  \rho_{x}.
\end{equation}
Suppose there exists a code subspace projector $\Pi$ and codeword subspace
projectors $\left\{  \Pi_{x}\right\}  _{x\in\mathcal{X}}$ such that the
following properties hold for some $D,d\geq0$, $1/2\geq\epsilon>0$, and for
all $x\in\mathcal{X}$:%
\begin{align}
\mathrm{Tr}\left\{  \Pi\rho_{x}\right\}   &  \geq1-\epsilon
,\label{eq:unit-prob-1}\\
\mathrm{Tr}\left\{  \Pi_{x}\rho_{x}\right\}   &  \geq1-\epsilon
,\label{eq:unit-prob-2}\\
\Pi_{x}\rho_{x}\Pi_{x}  &  \geq\frac{1}{d}\Pi_{x},\label{eq:equi-part-1}\\
\Pi\rho\Pi &  \leq\frac{1}{D}\Pi,\label{eq:equi-part-2}\\
\lbrack\Pi_{x},\rho_{x}]  &  =0.
\end{align}
Then corresponding to a message set $\mathcal{M}$, we can construct a random
code $\mathcal{C}=\left\{  c_{m}\right\}  _{m\in\mathcal{M}}$ with $c_{m}%
\in\mathcal{X}$ such that the receiver can reliably distinguish between the
states $\left\{  \rho_{c_{m}}\right\}  _{m\in\mathcal{M}}$ by performing a
sequence of projective measurements using the projectors $\Pi$ and $\Pi_{x}$.
More precisely, suppose that our performance measure is the expectation of the
average success probability where the expectation is with respect to all
possible random choices of codes. Then we can bound this performance measure
from below (as long as $2-\exp\left\{  d\left\vert \mathcal{M}\right\vert
/D\right\}  $ is positive):%
\begin{equation}
\mathbb{E}_{\mathcal{C}}\left\{  \bar{p}_{\text{succ}}\left(  \mathcal{C}%
\right)  \right\}  \geq\left\vert \left(  1-2\epsilon\right)  \left(
2-e^{\frac{d}{D}\left\vert \mathcal{M}\right\vert }\right)  \right\vert ^{2},
\end{equation}
implying that the performance measure becomes arbitrarily close to one if
$D/d$ is large, $\left\vert \mathcal{M}\right\vert \ll D/d$, and $\epsilon$ is
arbitrarily small.
\end{theorem}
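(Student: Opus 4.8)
The plan is to exhibit an explicit GLM-style sequential measurement and to lower-bound its expected success probability by adapting the Hsieh--Devetak--Winter packing argument to a sequential (rather than square-root) decoder. First I would fix an ordering $1,\dots,\abs{\mathcal{M}}$ of the messages and specify the decoder: on the state $\rho_{c_m}$ the receiver performs the binary code-subspace measurement $\{\Pi,I-\Pi\}$, and then, writing $\hat\Pi_x\equiv I-\Pi_x$, performs the tests $\{\Pi_{c_1},\hat\Pi_{c_1}\},\{\Pi_{c_2},\hat\Pi_{c_2}\},\dots$ in order, outputting the index of the first test that accepts. Correct decoding of a true message $m$ then has probability
\[
p_{\mathrm{succ}}(m)=\tr{\Pi_{c_m}\hat\Pi_{c_{m-1}}\cdots\hat\Pi_{c_1}\Pi\,\rho_{c_m}\,\Pi\hat\Pi_{c_1}\cdots\hat\Pi_{c_{m-1}}\Pi_{c_m}},
\]
and the quantity to control is $\mathbb{E}_{\mathcal{C}}\{\bar{p}_{\mathrm{succ}}(\mathcal{C})\}$ with the codewords $c_m$ drawn i.i.d.\ from $p_X$.

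Next I would argue that correct decoding can fail only through three kinds of events: the code-subspace test rejects; some wrong test $\Pi_{c_j}$ with $j<m$ falsely accepts; or the true test $\Pi_{c_m}$ rejects. The first and third are controlled \emph{directly} by the unit-probability hypotheses \eqref{eq:unit-prob-1} and \eqref{eq:unit-prob-2}, each costing a factor of at most $1-\epsilon$; the commutation relation $[\Pi_{c_m},\rho_{c_m}]=0$ guarantees that the final accept is essentially non-disturbing, so these two contribute a combined factor $(1-\epsilon)^2\geq 1-2\epsilon$. The second kind of event is the crux, and I would reduce $p_{\mathrm{succ}}(m)$ to the survival probability of passing all of the wrong tests, tracking the sub-normalized state through the sequence $\hat\Pi_{c_1},\dots,\hat\Pi_{c_{m-1}}$.

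The key estimate is then the expected false-accept probability of a single wrong test on the code-projected state. Because the codewords are independent, $\mathbb{E}_{c_j}\Pi_{c_j}=\sum_x p_X(x)\Pi_x$, and the equipartition property \eqref{eq:equi-part-1} together with $[\Pi_x,\rho_x]=0$ yields the operator inequality $\Pi_x\leq d\,\rho_x$, hence $\mathbb{E}_{c_j}\Pi_{c_j}\leq d\,\rho$. Combining this with the complementary equipartition bound \eqref{eq:equi-part-2}, $\Pi\rho\Pi\leq\frac1D\Pi$, gives
\[
\mathbb{E}_{c_j}\tr{\Pi_{c_j}\Pi\rho_{c_m}\Pi}\leq d\,\tr{\Pi\rho\Pi\,\rho_{c_m}}\leq\frac{d}{D}\,\tr{\Pi\rho_{c_m}}\leq\frac{d}{D}.
\]
Thus each of the at most $\abs{\mathcal{M}}-1$ wrong tests fires with expected probability at most $d/D$, which is exactly where the ratio $d/D$ and the factor $\abs{\mathcal{M}}$ enter. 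Note that it is the initial projection onto the code subspace that ``activates'' the $1/D$ bound; this is the structural reason the hypotheses are posed as they are.

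Finally I would assemble the factors. Surviving the wrong tests contributes, after the code expectation, a product of single-step survival probabilities of the form $\prod_{j<m}(1-d/D)\geq(1-d/D)^{\abs{\mathcal{M}}}\geq 1-\abs{\mathcal{M}}d/D\geq 2-e^{d\abs{\mathcal{M}}/D}$, the last step using $e^u\geq 1+u$; together with the $(1-2\epsilon)$ factor from the true-codeword tests this bounds an amplitude, and an application of Jensen's inequality in the form $\mathbb{E}\{p\}\geq(\mathbb{E}\{\sqrt{p}\})^2$ delivers the claimed bound $\abs{(1-2\epsilon)(2-e^{d\abs{\mathcal{M}}/D})}^2$, which is non-vacuous precisely when $2-e^{d\abs{\mathcal{M}}/D}>0$. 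The main obstacle I anticipate is the reduction in the second paragraph: the tests $\hat\Pi_{c_1},\dots,\hat\Pi_{c_{m-1}}$ do not commute and each one disturbs the state seen by the next, so showing that the accumulated disturbance stays \emph{linear} in the per-step false-accept probabilities---and that the expectation over the independent codewords factorizes into the clean product above, rather than degrading into a looser square-root (non-commutative union) bound---is the delicate part of the argument.
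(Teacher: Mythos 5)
Your plan assembles the right ingredients but leaves the theorem unproved at its crux, and your decoder is specified in a way that breaks your own key estimate. You measure $\{\Pi,I-\Pi\}$ once at the start and then run the tests $\{\Pi_{c_j},I-\Pi_{c_j}\}$ in sequence; in the paper's scheme the receiver re-projects onto the code subspace after \emph{every} NO outcome, so the effective operators are $\bar{Q}_x\equiv\Pi\left(I-\Pi_x\right)\Pi$ and the POVM element is $\Lambda_m=\bar{Q}_{c_1}\cdots\bar{Q}_{c_{m-1}}\bar{\Pi}_{c_m}\bar{Q}_{c_{m-1}}\cdots\bar{Q}_{c_1}$. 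This interleaving is not cosmetic: your per-step estimate $\mathbb{E}_{c_j}\tr{\Pi_{c_j}\,\sigma}\leq d\,\tr{\rho\,\sigma}$ followed by $\Pi\rho\Pi\leq\frac{1}{D}\Pi$ requires $\sigma$ to be supported in the range of $\Pi$, and once the state has passed the first test $I-\Pi_{c_1}$ it has leaked out of the code subspace, so with a single initial projection the $1/D$ ``activation'' you correctly identify is available only for $j=1$. Your displayed inequality indeed treats only $\Pi\rho_{c_m}\Pi$, i.e., the first wrong test; nothing in the proposal controls the later ones.

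More fundamentally, the step you yourself flag as ``the delicate part''---that the accumulated disturbance from non-commuting tests stays linear and that the expectation factorizes into the product $\prod_{j<m}\left(1-d/D\right)$---is exactly the content of the theorem, and the paper does not establish it by any disturbance or union-bound analysis (such a route typically degrades to square-root terms). Instead it lower-bounds an \emph{amplitude}: using $[\Pi_x,\rho_x]=0$ and the spectral decomposition, one application of Cauchy--Schwarz gives $\sum_{y}\sum_{y'\in\mathcal{T}_x}\lambda_{x,y}\left\vert\left\langle\psi_{x,y'}\right\vert\bar{Q}_{c_1}\cdots\bar{Q}_{c_l}\left\vert\psi_{x,y}\right\rangle\right\vert^{2}\geq\left\vert\tr{\Pi_x\rho_x\Pi_x\bar{Q}_{c_1}\cdots\bar{Q}_{c_l}}\right\vert^{2}$, and a second application (your Jensen step, but applied \emph{before} any product decomposition) moves the expectation inside the square, so that independence of the codewords factorizes the trace exactly into $\left\vert\tr{W_1\mathcal{Q}^{l}}\right\vert^{2}$ with $W_1=\sum_x p_X\left(x\right)\Pi_x\rho_x\Pi_x$ and $\mathcal{Q}=\sum_x p_X\left(x\right)\bar{Q}_x$. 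The product structure is then handled algebraically rather than probabilistically: writing $\mathcal{Q}=\bar{I}-\bar{W}_0$ with $W_0=\sum_x p_X\left(x\right)\Pi_x$ and expanding binomially, the hypotheses give $f_0\equiv\tr{W_1\Pi}\geq1-2\epsilon$ and the geometric decay $f_z\leq\left(d/D\right)^{z}f_0$, where each interior factor obeys $\bar{W}_0\leq d\,\Pi\rho\Pi\leq\left(d/D\right)\Pi$---again only because of the interleaved $\Pi$'s---and the alternating series is bounded below by $f_0\left(2-\left(1+d/D\right)^{\left\vert\mathcal{M}\right\vert-1}\right)\geq\left(1-2\epsilon\right)\left(2-e^{d\left\vert\mathcal{M}\right\vert/D}\right)$. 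So the final numerics you quote are reachable, but your route stops precisely where the proof must begin: without the re-projections in the decoder and the double Cauchy--Schwarz reduction to the scalars $f_z$, the product-of-survival-probabilities claim has no justification.
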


\begin{proof}
The proof of this lemma is similar to the GLM proof, and we thus place it in
Appendix~\ref{sec:sequential-packing-proof}.
\end{proof}

\section{Sequential Decoding for Entanglement-Assisted Communication}

In this section, we show an application of the GLM\ sequential decoding scheme
to entanglement-assisted classical communication by exploiting the coding
approach of Hsieh \textit{et al}.~\cite{HDW08}. The approach thus gives
another way of achieving the entanglement-assisted classical capacity of a
quantum channel.

\begin{theorem}
[Entanglement-Assisted Sequential Decoding]\label{thm:ea-sequential}The
sequential decoding scheme can achieve the entanglement-assisted classical
capacity of a quantum channel.
\end{theorem}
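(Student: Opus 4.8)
The plan is to reduce Theorem~\ref{thm:ea-sequential} to the Sequential Packing lemma (Theorem~\ref{thm:sequential-packing}) by constructing an appropriate ensemble out of the Hsieh--Devetak--Winter entanglement-assisted coding scheme and then verifying that this ensemble satisfies the five hypotheses of that lemma. Concretely, I would start from a pure bipartite state $\phi^{AA'}$ achieving the mutual information $I(\mathcal{N}) = I(A;B)_\rho$ with $\rho^{AB} = \mathcal{N}^{A'\to B}(\phi^{AA'})$, and take $n$ copies so that the sender and receiver pre-share $n$ copies of the $A$-system (call it $A^n$) while the sender feeds the $A'^n$ systems through the channel. In the HDW scheme, the sender encodes a message by applying a random unitary drawn from a Heisenberg--Weyl (or more general) set to her share of the entanglement before transmission; this randomizes the reduced state on the sender's side so that, averaged over the message, the global state looks like the product of maximally mixed states. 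The key outputs of the channel together with the receiver's half of the entanglement then form the states $\rho_x$ (here the letter $x$ stands for the applied unitary/message index), whose ensemble average $\rho$ is close to a tensor power of the maximally mixed-style reference state.

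Next I would identify the two projectors that the packing lemma requires. The code subspace projector $\Pi$ should be the \emph{strong typical projector} for the average output state $\rho^{\otimes n}$ on the joint receiver-plus-entanglement system $B^n A^n$, while the codeword projectors $\Pi_x$ should be the \emph{conditionally typical projectors} adapted to each encoded state $\rho_x$. With this choice, the four analytic conditions follow from standard typicality estimates: Eqs.~\eqref{eq:unit-prob-1} and~\eqref{eq:unit-prob-2} are the usual high-probability statements $\tr{\Pi\rho_x}\ge 1-\epsilon$ and $\tr{\Pi_x\rho_x}\ge 1-\epsilon$, the equipartition bound~\eqref{eq:equi-part-1} says that the conditionally typical eigenvalues of $\rho_x$ are no larger than $2^{-n[H(B|A)_\rho - \delta]}$ (so $d \approx 2^{n[H(B|A)+\delta]}$), and~\eqref{eq:equi-part-2} says the typical eigenvalues of $\rho$ are no smaller than $2^{-n[H(B,A)_\rho + \delta]}$ in the relevant subspace (so $D\approx 2^{n[H(B)+H(A)-\delta]}$). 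The commutation condition $[\Pi_x,\rho_x]=0$ holds automatically because $\Pi_x$ is built from the spectral projectors of $\rho_x$. Feeding these into the lemma gives $D/d \approx 2^{n[I(A;B)_\rho - 2\delta]}$, so any rate $\tfrac1n\log|\mathcal{M}| < I(A;B)_\rho$ makes $|\mathcal{M}| \ll D/d$ and drives the bound $|(1-2\epsilon)(2 - e^{(d/D)|\mathcal{M}|})|^2$ to one.

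The main obstacle I anticipate is not the typicality bookkeeping but correctly setting up the entanglement-assisted structure so that the quantities $D$ and $d$ assemble into the mutual information rather than some smaller coherent-information-type rate. In the unassisted GLM argument the gap $\log(D/d)$ is the Holevo quantity $I(X;B)$, whereas here we need the full $I(A;B)_\rho = H(A) + H(B) - H(AB)$, and this only emerges because the decoder measures \emph{jointly} on the channel output $B^n$ and the receiver's retained entanglement $A^n$, and because the HDW randomization makes the average state factor as (roughly) $\rho^{A^n}\otimes\rho^{B^n}$ with the right marginal entropies. I would therefore take care that $\Pi$ projects onto the typical subspace of the \emph{product} of marginals (contributing $H(A)+H(B)$ to $\log D$) while $\Pi_x$ captures the genuine joint state $\rho_x$ (contributing $H(AB)$ to $\log d$); getting these two reference states consistent is the delicate point. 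Once that alignment is in place, Theorem~\ref{thm:sequential-packing} does the remaining work and the achievability of the rate $I(\mathcal{N}) = \max_{\phi}I(A;B)_\rho$ follows by maximizing over the input state $\phi^{AA'}$.
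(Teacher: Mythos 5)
Your proposal is essentially the paper's own proof: the Hsieh--Devetak--Winter random Heisenberg--Weyl encoding with the transpose trick, a code subspace projector equal to the product of marginal typical projectors $\Pi_{\delta}^{A^{n}}\otimes\Pi_{\delta}^{B^{n}}$, codeword projectors equal to unitarily rotated joint typical projectors $U^{T}\left(s\right)\Pi_{\delta}^{A^{n}B^{n}}U^{\ast}\left(s\right)$, and an appeal to Theorem~\ref{thm:sequential-packing} with $D\approx 2^{n\left[H\left(A\right)+H\left(B\right)\right]}$ and $d\approx 2^{nH\left(AB\right)}$, giving rate $I\left(A;B\right)_{\rho}$ after maximizing over $\phi^{AA^{\prime}}$. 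One internal inconsistency needs fixing: in your middle paragraph you take $\Pi_{x}$ to be a \emph{conditionally} typical projector with $d\approx 2^{n\left[H\left(B|A\right)+\delta\right]}$; that parameterization is wrong here and would not even satisfy condition~(\ref{eq:equi-part-1}), because the codeword states $\sigma_{s}$ are unitary rotations of the tensor-power state $\rho^{A^{n}B^{n}}$, whose typical eigenvalues scale as $2^{-nH\left(AB\right)}$, and since $H\left(AB\right)=H\left(B|A\right)+H\left(A\right)\geq H\left(B|A\right)$ the required lower bound $\Pi_{x}\sigma_{s}\Pi_{x}\geq 2^{-n\left[H\left(B|A\right)+\delta\right]}\Pi_{x}$ fails whenever $H\left(A\right)>\delta$ (moreover, with that choice of $d$ your ratio would be $D/d\approx 2^{n\left[H\left(A\right)+I\left(A;B\right)\right]}$, not the mutual information, contradicting your own claim that $D/d\approx 2^{n\left[I\left(A;B\right)-2\delta\right]}$). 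Your final paragraph already corrects this---$\Pi_{x}$ must be the rotated \emph{joint} typical projector contributing $H\left(AB\right)$ to $\log d$, which is exactly the paper's choice---so with the conditional-typicality claim deleted in favor of that alignment, the argument goes through as in the paper.
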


\begin{proof}
Suppose that a quantum channel $\mathcal{N}^{A^{\prime}\rightarrow A}$
connects Alice to Bob and that they share many copies of an arbitrary
entangled pure state $\left\vert \phi\right\rangle ^{A^{\prime}A}$:%
\begin{equation}
\left\vert \phi\right\rangle ^{A^{\prime n}A^{n}}\equiv\left(  \left\vert
\phi\right\rangle ^{A^{\prime}A}\right)  ^{\otimes n}=\left\vert
\phi\right\rangle ^{A^{\prime}A}\otimes\left\vert \phi\right\rangle
^{A^{\prime}A}\otimes\dots\otimes\left\vert \phi\right\rangle ^{A^{\prime}A},
\end{equation}
where Alice has access to the system $A^{\prime}$ and Bob has access to the
system $A$. Alice chooses a message from her message set $\mathcal{M}$
uniformly at random, applies a corresponding encoder to her shares $A^{\prime
n}$ of the entanglement, and sends the systems $A^{\prime n}$ to Bob. Later in
the analysis, we would like to be able to \textquotedblleft
pull\textquotedblright\ these encoding operations through the channel so that
they are equivalent to some other operator acting at Bob's end. In order to do
this, we can write the many copies of the shared entanglement as a direct sum
of maximally entangled states~\cite{HDW08,W11}. Starting from the Schmidt
decomposition for one copy of the state $\left\vert \phi\right\rangle $%
\begin{equation}
\left\vert \phi\right\rangle ^{A^{\prime}A}=\sum_{z}\sqrt{p_{Z}\left(
z\right)  }\left\vert z\right\rangle ^{A^{\prime}}\left\vert z\right\rangle
^{A},
\end{equation}
we can derive the following using the method of types~\cite{book1991cover,W11}%
:%
\begin{align}
\left\vert \phi\right\rangle ^{A^{\prime n}A^{n}}  &  =\sum_{z^{n}}%
\sqrt{p_{Z^{n}}\left(  z^{n}\right)  }\left\vert z^{n}\right\rangle
^{A^{\prime n}}\left\vert z^{n}\right\rangle ^{A^{n}}\\
&  =\sum_{t}\sum_{z^{n}\in T_{t}}\sqrt{p_{Z^{n}}\left(  z^{n}\right)
}\left\vert z^{n}\right\rangle ^{A^{\prime n}}\left\vert z^{n}\right\rangle
^{A^{n}}\\
&  =\sum_{t}\sqrt{p_{Z^{n}}\left(  z_{t}^{n}\right)  d_{t}}\frac{1}{d_{t}}%
\sum_{z^{n}\in T_{t}}\left\vert z^{n}\right\rangle ^{A^{\prime n}}\left\vert
z^{n}\right\rangle ^{A^{n}}\\
&  =\sum_{t}\sqrt{p\left(  t\right)  }\left\vert \Phi_{t}\right\rangle
^{A^{\prime n}A^{n}},
\end{align}
where%
\begin{equation}
p\left(  t\right)  \equiv p_{Z^{n}}\left(  z_{t}^{n}\right)  d_{t},
\end{equation}
$T_{t}$ is a type class, $d_{t}$ is the dimension of a type class subspace
$t$, $z_{t}^{n}$ is a representative sequence for the type class $t$, and each
$\left\vert \Phi_{t}\right\rangle ^{A^{\prime n}A^{n}}$ is maximally entangled
on the type class subspace specified by $t$ (see Refs.~\cite{HDW08,W11} for
more details on this approach). Thus, applying an operator acting on type
class subspaces at Alice's end is equivalent to applying the transpose of the
same operator at Bob's end. As in Refs.~\cite{HDW08,W11}, Alice constructs her
encoders using the Heisenberg-Weyl set of operators $\left\{  X\left(
x_{t}\right)  Z\left(  z_{t}\right)  \right\}  _{x_{t},z_{t}}$ that act on
each of the type class subspaces%
\begin{equation}
U\left(  s\right)  \equiv\bigoplus_{t}\left(  -1\right)  ^{b_{t}}X\left(
x_{t}\right)  Z\left(  z_{t}\right)  ,
\end{equation}
where $b_{t}$ determines a phase that is applied to the operators in each
subspace. We denote this unitary by $U\left(  s\right)  $ where $s$ is some
vector that contains all the needed indices $x_{t}$, $z_{t}$ and $b_{t}$. Let
$\mathcal{S}$ denote the set of all such possible vectors. We construct a
random code $\left\{  s_{m}\right\}  _{m\in\mathcal{M}}$ where $s_{m}$ is a
vector chosen uniformly at random from $\mathcal{S}$ and the corresponding set
of encoders is then $\left\{  U\left(  s_{m}\right)  \right\}  _{m\in
\mathcal{M}}$. Since the \textquotedblleft transpose trick\textquotedblright%
\ holds for each of these unitaries, we have that
\begin{equation}
U\left(  s\right)  ^{A^{\prime n}}\left\vert \phi\right\rangle ^{A^{\prime
n}A^{n}}=U^{T}\left(  s\right)  ^{A^{n}}\left\vert \phi\right\rangle
^{A^{\prime n}A^{n}}.
\end{equation}
The induced ensemble at Bob's end is then
\begin{equation}
\left\{  \frac{1}{\left\vert \mathcal{S}\right\vert },\sigma_{s}\right\}
_{s\in\mathcal{S}},
\end{equation}
where%
\begin{align}
\sigma_{s}  &  \equiv U^{T}\left(  s\right)  ^{A^{n}}\rho^{A^{n}B^{n}}U^{\ast
}\left(  s\right)  ^{A^{n}},\\
\rho^{A^{n}B^{n}}  &  \equiv\mathcal{N}^{A^{\prime n}\rightarrow B^{n}}\left(
\left\vert \phi\right\rangle \left\langle \phi\right\vert ^{A^{\prime n}A^{n}%
}\right)  .
\end{align}
Let $\overline{\sigma}$ denote the expected state of the ensemble:%
\begin{equation}
\overline{\sigma}\equiv\frac{1}{\left\vert \mathcal{S}\right\vert }\sum
_{s\in\mathcal{S}}\sigma_{s}.
\end{equation}
We give Bob the following code subspace projector:%
\begin{equation}
\Pi\equiv\Pi_{\delta}^{A^{n}}\otimes\Pi_{\delta}^{B^{n}},
\label{eq:EA-code-projector}%
\end{equation}
and the codeword subspace projectors:%
\begin{equation}
\Pi_{s}\equiv U^{T}\left(  s\right)  ^{A^{n}}\Pi_{\delta}^{A^{n}B^{n}}U^{\ast
}\left(  s\right)  ^{A^{n}}, \label{eq:EA-message-proj}%
\end{equation}
where $\Pi_{\delta}^{A^{n}B^{n}}$, $\Pi_{\delta}^{A^{n}}$, and $\Pi_{\delta
}^{B^{n}}$ are the $\delta$-typical projectors for many copies of the states
$\rho^{A^{n}B^{n}}$, $\rho^{A^{n}}=\mathrm{Tr}_{B}\left\{  \rho^{A^{n}B^{n}%
}\right\}  $ and $\rho^{B^{n}}=\mathrm{Tr}_{A}\left\{  \rho^{A^{n}B^{n}%
}\right\}  $, respectively.

At this point we would like to apply our packing argument from
Theorem~\ref{thm:sequential-packing} and we would like to have the following
conditions hold:%
\begin{align}
\mathrm{Tr}\left\{  \Pi\sigma_{s}\right\}   &  \geq1-\epsilon,\\
\mathrm{Tr}\left\{  \Pi_{s}\sigma_{s}\right\}   &  \geq1-\epsilon,\\
\Pi\overline{\sigma}\Pi &  \leq2^{-n\left(  H\left(  A\right)  _{\rho
}+H\left(  B\right)  _{\rho}-\eta\left(  n,\delta\right)  -\delta\right)  }%
\Pi\\
\Pi_{s}\sigma_{s}\Pi_{s}  &  \geq2^{-n\left(  H\left(  AB\right)  _{\rho
}+\delta\right)  }\Pi_{s},\\
\left[  \Pi_{s},\sigma_{s}\right]   &  =0,
\end{align}
where the function $\eta\left(  n,\delta\right)  $ goes to zero as
$n\rightarrow\infty$ and $\delta\rightarrow0$. The first three conditions are
shown in Refs.~\cite{HDW08,W11}. The fourth condition follows from the
equipartition property of typical subspaces~\cite{W11} and the fact that
$U^{T}U^{\ast}=I$ for any unitary operator $U$. The fifth condition follows
from the fact that the projector $\Pi_{s}$ commutes with the density operator
$\sigma_{s}$. By our packing argument in Theorem~\ref{thm:sequential-packing}
that gives a bound on the expectation of the average success probability,
there exists a particular code, with which Alice can transmit messages from
her set $\mathcal{M}$ and Bob can detect the transmitted state by performing a
series of projective measurements, with its average success probability being
greater than%
\begin{align}
\bar{p}_{\text{succ}}  &  \geq\left\vert \left(  1-2\epsilon\right)  \left(
2-\exp\left\{  2^{-n\left(  H\left(  A\right)  _{\rho}+H\left(  B\right)
_{\rho}-H\left(  AB\right)  _{\rho}-\eta\left(  n,\delta\right)
-2\delta\right)  }\left\vert \mathcal{M}\right\vert \right\}  \right)
\right\vert ^{2}\\
&  =\left\vert \left(  1-2\epsilon\right)  \left(  2-\exp\left\{  2^{-n\left(
I\left(  A\,;\,B\right)  _{\rho}-\eta\left(  n,\delta\right)  -2\delta\right)
}\left\vert \mathcal{M}\right\vert \right\}  \right)  \right\vert ^{2}%
\end{align}
Therefore, Alice can pick the size of $\mathcal{M}$ to be $2^{n\left(
I\left(  A\,;\,B\right)  _{\rho}-\eta\left(  n,\delta\right)  -3\delta\right)
}$, and the rate of communication is then
\begin{equation}
C=\frac{1}{n}\log_{2}\left\vert \mathcal{M}\right\vert =I\left(  A;B\right)
_{\rho}-\eta\left(  n,\delta\right)  -3\delta,
\end{equation}
with the average success probability becoming greater than%
\begin{equation}
\bar{p}_{\text{succ}}\geq\left\vert \left(  1-2\epsilon\right)  \left(
2-\exp\left\{  2^{-n\delta}\right\}  \right)  \right\vert ^{2}.
\end{equation}
Thus, for sufficiently large $n$, the sequential decoding scheme achieves the
entanglement-assisted classical capacity with arbitrarily high success probability.

As a final note, we should clarify a bit further:\ there is a codebook
$\left\{  U\left(  s_{m}\right)  \right\}  _{m\in\mathcal{M}}$ for Alice with
entanglement-assisted quantum codewords of the following form:%
\begin{equation}
U^{A^{\prime n}}\left(  s_{m}\right)  \left\vert \phi\right\rangle ^{A^{\prime
n}A^{n}}.
\end{equation}
If Alice sends message $m$, Bob performs a sequence of measurements in the
following order (assuming a correct sequence of events):%
\begin{equation}
\Pi\rightarrow I-\Pi_{s_{1}}\rightarrow\Pi\rightarrow I-\Pi_{s_{2}}%
\rightarrow\Pi\rightarrow\cdots\rightarrow\Pi\rightarrow\Pi_{s_{m}},
\end{equation}
with $\Pi$ and $\Pi_{s_{i}}$ of the form in (\ref{eq:EA-code-projector}) and
(\ref{eq:EA-message-proj}), respectively.
\end{proof}

\section{Packing Argument for Sequential and Successive Decoding over a
Multiple Access Channel}

We now extend the packing argument from Section~\ref{sec:sequential-packing}%
\ to a multiple-access setting, in which there are two senders and one
receiver. The resulting scheme is both sequential and successive---sequential
in the above sense where the receiver linearly tests one codeword at a time
and successive in the sense that the receiver first decodes one sender's
message and follows by decoding the other sender's message. After doing so, we
then briefly remark how this argument achieves the known strategies for both
unassisted~\cite{W01} and assisted classical communication~\cite{HDW08}.

\begin{theorem}
[Sequential and Successive Decoding]Suppose there exists a doubly-indexed
ensemble of quantum states, where two independent distributions generate the
different indices $x$ and $y$:%
\begin{equation}
\left\{  p_{X}\left(  x\right)  p_{Y}\left(  y\right)  ,\rho_{x,y}\right\}  .
\end{equation}
Averaging with the distributions $p_{X}\left(  x\right)  $ and $p_{Y}\left(
y\right)  $ leads to the following states:%
\begin{equation}
\rho_{x}\equiv\sum_{y}p_{Y}\left(  y\right)  \rho_{x,y}%
,\ \ \ \ \ \ \ \ \ \ \rho_{y}\equiv\sum_{x}p_{X}\left(  x\right)  \rho
_{x,y},\ \ \ \ \ \ \ \ \ \ \rho\equiv\sum_{x,y}p_{X}\left(  x\right)
p_{Y}\left(  y\right)  \rho_{x,y}.
\end{equation}
Suppose that there exist projectors $\Pi_{x}$, $\Pi_{y}$, $\Pi_{x,y}$, and
$\Pi$ such that%
\begin{align}
\mathrm{Tr}\left\{  \Pi\rho_{x}\right\}   &  \geq1-\epsilon,\\
\mathrm{Tr}\left\{  \Pi_{x}\rho_{x}\right\}   &  \geq1-\epsilon,\\
\Pi_{x}\rho_{x}\Pi_{x}  &  \geq\frac{1}{d_{1}^{\left(  -\right)  }}\Pi_{x},\\
\Pi\rho\Pi &  \leq\frac{1}{D_{1}}\Pi,\\
\lbrack\Pi_{x},\rho_{x}]  &  =0.
\end{align}
and%
\begin{align}
\mathrm{Tr}\left\{  \Pi_{x}\rho_{x,y}\right\}   &  \geq1-\epsilon,\\
\mathrm{Tr}\left\{  \Pi_{x,y}\rho_{x,y}\right\}   &  \geq1-\epsilon,\\
\Pi_{x,y}\rho_{x,y}\Pi_{x,y}  &  \geq\frac{1}{d_{2}}\Pi_{x,y},\\
\Pi_{x}\rho_{x}\Pi_{x}  &  \leq\frac{1}{d_{1}^{\left(  +\right)  }}\Pi_{x},\\
\lbrack\Pi_{x,y},\rho_{x,y}]  &  =0.
\end{align}
Suppose that $D_{1}/d_{1}^{\left(  -\right)  }$ is large, $\left\vert
\mathcal{L}\right\vert \ll D_{1}/d_{1}^{\left(  -\right)  }$, $d_{1}^{\left(
+\right)  }/d_{2}$ is large, $\left\vert \mathcal{M}\right\vert \ll
d_{1}^{\left(  +\right)  }/d_{2}$, and $\epsilon$ is arbitrarily small. Then
there exists a sequential and successive decoding scheme for the receiver that
succeeds with high probability, in the sense that the expectation of the
average success probability is arbitrarily high:%
\begin{equation}
\mathbb{E}_{\mathcal{C}}\left\{  \bar{p}_{\text{succ}}\left(  \mathcal{C}%
\right)  \right\}  \geq\left\vert \left(  1-2\epsilon\right)  \left(
2-e^{d_{2}\left\vert \mathcal{M}\right\vert /d_{1}^{\left(  +\right)  }%
}\right)  \right\vert ^{2}-2\sqrt{2\left(  \epsilon+\epsilon^{\prime}\right)
},
\end{equation}
with $\epsilon^{\prime}$ chosen so that%
\begin{equation}
2-e^{d_{1}^{\left(  -\right)  }\left\vert \mathcal{L}\right\vert /D_{1}}%
\geq1-\epsilon^{\prime}.
\end{equation}

\end{theorem}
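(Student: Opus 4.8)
The plan is to read the two blocks of hypotheses as two separate invocations of the sequential packing lemma (Theorem~\ref{thm:sequential-packing}), one per sender, stitched together by the gentle measurement lemma. First I would fix two independent random codebooks: $\{c_\ell\}_{\ell\in\mathcal{L}}$ drawn i.i.d.\ from $p_X$ for the first sender and $\{d_m\}_{m\in\mathcal{M}}$ drawn i.i.d.\ from $p_Y$ for the second, so that transmitting the pair $(\ell,m)$ presents the receiver with $\rho_{c_\ell,d_m}$. The decoder is a concatenation of two sequential decoders: Stage~1 runs the GLM sequence $\Pi\to I-\Pi_{c_1}\to\Pi\to\cdots\to\Pi\to\Pi_{c_{\hat\ell}}$ using the first block of projectors to produce $\hat\ell$, and Stage~2 then runs the analogous sequence with $\Pi_{c_{\hat\ell}}$ playing the role of the code-subspace projector and $\{\Pi_{c_{\hat\ell},d_m}\}_m$ the role of the codeword projectors, producing $\hat m$. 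The point of the way the statement is arranged is that the first block is \emph{exactly} the hypothesis of Theorem~\ref{thm:sequential-packing} for the ensemble $\{p_X(x),\rho_x\}$ with $(D,d)=(D_1,d_1^{(-)})$, while the second block is exactly its hypothesis for $\{p_Y(y),\rho_{x,y}\}$ at fixed $x$, under the substitutions $\Pi\mapsto\Pi_x$ and $(D,d)=(d_1^{(+)},d_2)$ (here the relevant average state is $\sum_y p_Y(y)\rho_{x,y}=\rho_x$).

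Next I would analyze each stage in isolation. Because the success probability is linear in the input state and Stage~1's projectors depend only on the first codebook, taking the expectation over the second codebook and the uniform message $m$ replaces the genuine input $\rho_{c_\ell,d_m}$ by its average $\rho_{c_\ell}=\sum_y p_Y(y)\rho_{c_\ell,y}$. Theorem~\ref{thm:sequential-packing} then applies verbatim to give $\mathbb{E}[\,\hat\ell=\ell\,]\ge\abs{(1-2\epsilon)(2-e^{d_1^{(-)}\abs{\mathcal{L}}/D_1})}^2\ge\abs{(1-2\epsilon)(1-\epsilon')}^2$ by the stipulated choice of $\epsilon'$, so the expected Stage~1 failure probability is controlled by $\epsilon$ and $\epsilon'$, the bookkeeping giving a bound of $2(\epsilon+\epsilon')$. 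Symmetrically, Theorem~\ref{thm:sequential-packing} applied to the second block shows that, run on the \emph{undisturbed} state $\rho_{c_\ell,d_m}$, Stage~2 decodes $m$ with expected probability at least $\abs{(1-2\epsilon)(2-e^{d_2\abs{\mathcal{M}}/d_1^{(+)}})}^2$, which is precisely the leading term of the target bound.

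The crux is gluing the stages together, i.e.\ showing that running Stage~2 on the \emph{post--Stage-1} state costs little. I would package the correct Stage-1 acceptance path as one operator $A_\ell=\Pi_{c_\ell}\Pi(I-\Pi_{c_{\ell-1}})\cdots(I-\Pi_{c_1})\Pi$, so that $\Lambda_\ell\equiv A_\ell^\dagger A_\ell$ satisfies $0\le\Lambda_\ell\le I$ and the subnormalized post-measurement state is $A_\ell\,\rho_{c_\ell,d_m}\,A_\ell^\dagger$. The gentle operator lemma bounds $\norm{A_\ell\,\rho_{c_\ell,d_m}\,A_\ell^\dagger-\rho_{c_\ell,d_m}}_1\le 2\sqrt{1-\tr{\Lambda_\ell\,\rho_{c_\ell,d_m}}}$, and averaging with Jensen's inequality (concavity of $\sqrt{\cdot}$) converts the Stage-1 failure bound into an expected disturbance of at most $2\sqrt{2(\epsilon+\epsilon')}$. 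Since trace distance upper-bounds the difference in acceptance probability of \emph{any} measurement, Stage~2's success on the disturbed state stays within $2\sqrt{2(\epsilon+\epsilon')}$ of its value on $\rho_{c_\ell,d_m}$, and combining the three estimates yields the advertised inequality. The main obstacle I anticipate is exactly this interface: Theorem~\ref{thm:sequential-packing} controls Stage~1 only through the averaged state $\rho_x$, whereas the gentle-measurement step must act on the genuine joint state $\rho_{c_\ell,d_m}$ fed into Stage~2---so one must lean on the second-block condition $\tr{\Pi_x\rho_{x,y}}\ge 1-\epsilon$ to guarantee that the true state really survives Stage~1's final acceptance, and one must check that the gentle operator lemma remains robust when applied to the entire \emph{sequence} $A_\ell$ of Stage-1 projectors rather than to a single measurement.
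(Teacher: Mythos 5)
Your proposal is correct and follows essentially the same route as the paper: the same i.i.d.\ random codebooks, the same two-stage sequential/successive decoder, Theorem~\ref{thm:sequential-packing} invoked once per block (Stage~1 on $\{p_X(x),\rho_x\}$ with $(D_1,d_1^{(-)})$ after averaging out the second codebook, Stage~2 on $\{p_Y(y),\rho_{x,y}\}$ with $\Pi\mapsto\Pi_x$ and $(d_1^{(+)},d_2)$), glued by the bound $\mathrm{Tr}\left\{\Gamma\omega\right\}\geq\mathrm{Tr}\left\{\Gamma\rho\right\}-\left\Vert\rho-\omega\right\Vert_1$ together with the Gentle Operator Lemma for ensembles, yielding exactly the paper's $2\sqrt{2\left(\epsilon+\epsilon'\right)}$ correction. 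Even the interface subtlety you flag---applying the gentle-measurement bound to the full Stage-1 product operator $A_\ell$ rather than to a single POVM element of the form $\sqrt{\Lambda}$---is handled in the paper in precisely the way you propose (a direct appeal to Lemma~9.4.3 of Ref.~\cite{W11} for the sandwiched state $\omega_{x,y,l,m}$), so your plan reconstructs the paper's argument step for step.
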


\begin{proof}
The random construction of the code is similar to that in the proof of
Theorem~\ref{thm:sequential-packing}. Given a message set $\mathcal{L}%
=\left\{  1,2,\dots,\left\vert \mathcal{L}\right\vert \right\}  $, we
construct a code $\mathcal{C}_{1}\equiv\left\{  x\left(  l\right)  \right\}
_{l\in\mathcal{L}}$ for Alice randomly such that each $x\left(  l\right)  $
takes a value $x\in\mathcal{X}$ with probability $p_{X}\left(  x\right)  $.
Similarly, given a message set $\mathcal{M}=\left\{  1,2,\dots,\left\vert
\mathcal{M}\right\vert \right\}  $, we construct a code $\mathcal{C}_{2}%
\equiv\left\{  y\left(  m\right)  \right\}  _{m\in\mathcal{M}}$ for Bob
randomly such that each $y\left(  m\right)  $ takes a value $y\in\mathcal{Y}$
with probability $p_{Y}\left(  y\right)  $. Using this code, Alice chooses a
message $l$ from the message set $\mathcal{L}$, Bob chooses a message $m$ from
the message set $\mathcal{M}$, and they encode their messages in the quantum
codeword $\rho_{x\left(  l\right)  ,y\left(  m\right)  }$.

Suppose that the first sender Alice transmits message $l$ and the second
sender Bob transmits message $m$. Without loss of generality, the receiver
first tries to recover the message that Alice transmits. In order to do so, he
measures $\Pi$ followed by $\Pi_{x\left(  1\right)  }$ to determine if the
transmitted message corresponds to the first codeword of Alice, with
$\Pi_{x\left(  1\right)  }$ corresponding to the outcome YES and $Q_{x\left(
1\right)  }\equiv I-\Pi_{x\left(  1\right)  }$ corresponding to the outcome
NO. Suppose that the outcome is NO. He then measures $\Pi$ to project the
state back into the large subspace. Assuming a correct sequence of events, the
receiver continues and measures $Q_{x\left(  i\right)  }$ and $\Pi$ for
$i\in\left\{  2,\ldots,l-1\right\}  $ until getting to the correct outcome
$\Pi_{x\left(  l\right)  }$. Thus, the sequence of projectors measured is as
follows, under the assumption of a correct sequence of events:%
\begin{equation}
\Pi\rightarrow Q_{x\left(  1\right)  }\rightarrow\Pi\rightarrow Q_{x\left(
2\right)  }\rightarrow\Pi\rightarrow\cdots\rightarrow Q_{x\left(  i\right)
}\rightarrow\Pi\rightarrow\cdots\rightarrow\Pi\rightarrow\Pi_{x\left(
l\right)  }.
\end{equation}
After receiving a YES\ outcome from $\Pi_{x\left(  l\right)  }$, the receiver
assumes that the first sender transmitted message $l$. The receiver then tries
to determine the codeword that Bob transmitted by exploiting the projectors
$\Pi_{x\left(  l\right)  }$ and $\Pi_{x\left(  l\right)  ,y\left(  j\right)
}$. He does this in a similar fashion as above, proceeding in the following
order (again under the assumption of a correct sequence of events):%
\begin{equation}
Q_{x\left(  l\right)  ,y\left(  1\right)  }\rightarrow\Pi_{x\left(  l\right)
}\rightarrow Q_{x\left(  l\right)  ,y\left(  2\right)  }\rightarrow
\Pi_{x\left(  l\right)  }\rightarrow\cdots\rightarrow Q_{x\left(  l\right)
,y\left(  i\right)  }\rightarrow\Pi_{x\left(  l\right)  }\rightarrow
\cdots\rightarrow\Pi_{x\left(  l\right)  }\rightarrow\Pi_{x\left(  l\right)
,y\left(  m\right)  }.
\end{equation}

The POVM\ corresponding to the above measurement strategy is as follows:%
\begin{equation}
\Lambda_{l,m}\equiv M_{l,m}^{\dag}M_{l,m},
\end{equation}
where%
\begin{align}
M_{l,m}  &  \equiv\Pi_{x\left(  l\right)  ,y\left(  m\right)  }\overline
{\overline{Q}}_{x\left(  l\right)  ,y\left(  m-1\right)  }\cdots
\overline{\overline{Q}}_{x\left(  l\right)  ,y\left(  1\right)  }\Pi_{x\left(
l\right)  }\overline{Q}_{x\left(  l-1\right)  }\cdots\overline{Q}_{x\left(
1\right)  },\\
\overline{\Theta}  &  \equiv\Pi\Theta\Pi,\\
\overline{\overline{\Theta}}  &  \equiv\Pi_{x\left(  l\right)  }\Theta
\Pi_{x\left(  l\right)  }.
\end{align}
The average success probability of any particular code $c$ is%
\begin{equation}
\bar{p}_{\text{succ}}\left(  c\right)  \equiv\frac{1}{\left\vert
\mathcal{L}\right\vert \left\vert \mathcal{M}\right\vert }\sum_{l,m}%
\text{Tr}\left\{  \Lambda_{l,m}\rho_{x\left(  l\right)  ,y\left(  m\right)
}\right\}  ,
\end{equation}
and the expectation of the average success probability is%
\begin{multline}
\mathbb{E}_{X,Y}\left\{  \bar{p}_{\text{succ}}\left(  C\right)  \right\}
=\sum_{\substack{x\left(  1\right)  ,\ldots,x\left(  \left\vert \mathcal{L}%
\right\vert \right)  ,\\y\left(  1\right)  ,\ldots,y\left(  \left\vert
\mathcal{M}\right\vert \right)  }}p_{X}\left(  x\left(  1\right)  \right)
\cdots p_{X}\left(  x\left(  \left\vert \mathcal{L}\right\vert \right)
\right)  p_{Y}\left(  y\left(  1\right)  \right)  \cdots\label{eq:succ-prob-1}%
\\
\cdots p_{Y}\left(  y\left(  \left\vert \mathcal{M}\right\vert \right)
\right)  \frac{1}{\left\vert \mathcal{L}\right\vert \left\vert \mathcal{M}%
\right\vert }\sum_{l,m}\text{Tr}\left\{  \Lambda_{l,m}\rho_{x\left(  l\right)
,y\left(  m\right)  }\right\}  .
\end{multline}%
\begin{equation}
=\frac{1}{\left\vert \mathcal{L}\right\vert \left\vert \mathcal{M}\right\vert
}\sum_{l,m}\sum_{x,y}p_{X}\left(  x\right)  p_{Y}\left(  y\right)
\text{Tr}\left\{  \Psi_{x}^{m-1}\left(  \Pi_{x}\Phi^{l-1}\left(  \rho
_{x,y}\right)  \Pi_{x}\right)  \Pi_{x,y}\right\}  ,
\end{equation}
where%
\begin{align}
\Phi\left(  \cdot\right)   &  \equiv\sum_{x}p_{X}\left(  x\right)
\overline{Q}_{x}\left(  \cdot\right)  \overline{Q}_{x},\\
\Psi_{x}\left(  \cdot\right)   &  \equiv\sum_{y}p_{Y}\left(  y\right)
\overline{\overline{Q}}_{x,y}\left(  \cdot\right)  \overline{\overline{Q}%
}_{x,y}.
\end{align}
Observe that we can rewrite the success probability in (\ref{eq:succ-prob-1})
as follows:%
\begin{equation}
\sum_{\substack{x\left(  1\right)  ,\ldots,x\left(  \left\vert \mathcal{L}%
\right\vert \right)  ,\\y\left(  1\right)  ,\ldots,y\left(  \left\vert
\mathcal{M}\right\vert \right)  }}p_{X}\left(  x\left(  1\right)  \right)
\cdots p_{X}\left(  x\left(  \left\vert \mathcal{L}\right\vert \right)
\right)  p_{Y}\left(  y\left(  1\right)  \right)  \cdots p_{Y}\left(  y\left(
\left\vert \mathcal{M}\right\vert \right)  \right)  \frac{1}{\left\vert
\mathcal{L}\right\vert \left\vert \mathcal{M}\right\vert }\sum_{l,m}%
\text{Tr}\left\{  \Gamma_{x,y,l,m}\omega_{x,y,l,m}\right\}  ,
\label{eq:succ-prob-2}%
\end{equation}
where%
\begin{align}
\omega_{x,y,l,m}  &  \equiv\Pi_{x\left(  l\right)  }\overline{Q}_{x\left(
l-1\right)  }\cdots\overline{Q}_{x\left(  1\right)  }\rho_{x\left(  l\right)
,y\left(  m\right)  }\overline{Q}_{x\left(  1\right)  }\cdots\overline
{Q}_{x\left(  l-1\right)  }\Pi_{x\left(  l\right)  },\\
\Gamma_{x,y,l,m}  &  \equiv\overline{\overline{Q}}_{x\left(  l\right)
,y\left(  1\right)  }\cdots\overline{\overline{Q}}_{x\left(  l\right)
,y\left(  m-1\right)  }\Pi_{x\left(  l\right)  ,y\left(  m\right)  }%
\overline{\overline{Q}}_{x\left(  l\right)  ,y\left(  m-1\right)  }%
\cdots\overline{\overline{Q}}_{x\left(  l\right)  ,y\left(  1\right)  }.
\end{align}
We can then obtain the following lower bound on (\ref{eq:succ-prob-2}):%
\begin{multline}
\sum_{\substack{x\left(  1\right)  ,\ldots,x\left(  \left\vert \mathcal{L}%
\right\vert \right)  ,\\y\left(  1\right)  ,\ldots,y\left(  \left\vert
\mathcal{M}\right\vert \right)  }}p_{X}\left(  x\left(  1\right)  \right)
\cdots p_{X}\left(  x\left(  \left\vert \mathcal{L}\right\vert \right)
\right)  p_{Y}\left(  y\left(  1\right)  \right)  \cdots
\label{eq:lower-bound-seq-mac}\\
\cdots p_{Y}\left(  y\left(  \left\vert \mathcal{M}\right\vert \right)
\right)  \frac{1}{\left\vert \mathcal{L}\right\vert \left\vert \mathcal{M}%
\right\vert }\sum_{l,m}\left[  \text{Tr}\left\{  \Gamma_{x,y,l,m}%
\rho_{x\left(  l\right)  ,y\left(  m\right)  }\right\}  -\left\Vert
\rho_{x\left(  l\right)  ,y\left(  m\right)  }-\omega_{x,y,l,m}\right\Vert
_{1}\right]  ,
\end{multline}
by exploiting the following inequality:%
\begin{equation}
\text{Tr}\left\{  \Gamma_{x,y,l,m}\omega_{x,y,l,m}\right\}  \geq
\text{Tr}\left\{  \Gamma_{x,y,l,m}\rho_{x\left(  l\right)  ,y\left(  m\right)
}\right\}  -\left\Vert \rho_{x\left(  l\right)  ,y\left(  m\right)  }%
-\omega_{x,y,l,m}\right\Vert _{1},
\end{equation}
which holds for all positive operators $\Gamma_{x,y,l,m}$, $\omega_{x,y,l,m}$,
and $\rho_{x\left(  l\right)  ,y\left(  m\right)  }$ that have spectrum less
than one. So it remains to show that both Tr$\left\{  \Gamma_{x,y,l,m}%
\rho_{x\left(  l\right)  ,y\left(  m\right)  }\right\}  $ is arbitrarily close
to one and $\left\Vert \rho_{x\left(  l\right)  ,y\left(  m\right)  }%
-\omega_{x,y,l,m}\right\Vert _{1}$ is arbitrarily small when averaging over
all codewords and taking the expectation over random codes. We can apply
Theorem~\ref{thm:sequential-packing} to obtain the following inequality:%
\begin{align}
&  \sum_{\substack{x\left(  1\right)  ,\ldots,x\left(  \left\vert
\mathcal{L}\right\vert \right)  ,\\y\left(  1\right)  ,\ldots,y\left(
\left\vert \mathcal{M}\right\vert \right)  }}p_{X}\left(  x\left(  1\right)
\right)  \cdots p_{X}\left(  x\left(  \left\vert \mathcal{L}\right\vert
\right)  \right)  p_{Y}\left(  y\left(  1\right)  \right)  \cdots p_{Y}\left(
y\left(  \left\vert \mathcal{M}\right\vert \right)  \right)  \text{Tr}\left\{
\omega_{x,y,l,m}\right\} \\
&  \geq\left\vert \left(  1-\epsilon\right)  \left(  2-e^{d_{1}^{\left(
-\right)  }\left\vert \mathcal{L}\right\vert /D_{1}}\right)  \right\vert
^{2}\\
&  \geq\left\vert \left(  1-\epsilon\right)  \left(  1-\epsilon^{\prime
}\right)  \right\vert ^{2}\\
&  \geq1-2\left(  \epsilon+\epsilon^{\prime}\right)  ,
\end{align}
with $\epsilon^{\prime}$ chosen as given in the statement of the theorem. We
can then apply the Gentle Operator Lemma for ensembles (Lemma~9.4.3 in
Ref.~\cite{W11}) to prove the following inequality:%
\begin{equation}
\sum_{\substack{x\left(  1\right)  ,\ldots,x\left(  \left\vert \mathcal{L}%
\right\vert \right)  ,\\y\left(  1\right)  ,\ldots,y\left(  \left\vert
\mathcal{M}\right\vert \right)  }}p_{X}\left(  x\left(  1\right)  \right)
\cdots p_{X}\left(  x\left(  \left\vert \mathcal{L}\right\vert \right)
\right)  p_{Y}\left(  y\left(  1\right)  \right)  \cdots p_{Y}\left(  y\left(
\left\vert \mathcal{M}\right\vert \right)  \right)  \frac{1}{\left\vert
\mathcal{L}\right\vert \left\vert \mathcal{M}\right\vert }\left\Vert
\rho_{x\left(  l\right)  ,y\left(  m\right)  }-\omega_{x,y,l,m}\right\Vert
_{1}\leq2\sqrt{2\left(  \epsilon+\epsilon^{\prime}\right)  }.
\end{equation}
Invoking Theorem~\ref{thm:sequential-packing} one more time gives us the
following lower bound:%
\begin{multline}
\sum_{\substack{x\left(  1\right)  ,\ldots,x\left(  \left\vert \mathcal{L}%
\right\vert \right)  ,\\y\left(  1\right)  ,\ldots,y\left(  \left\vert
\mathcal{M}\right\vert \right)  }}p_{X}\left(  x\left(  1\right)  \right)
\cdots p_{X}\left(  x\left(  \left\vert \mathcal{L}\right\vert \right)
\right)  p_{Y}\left(  y\left(  1\right)  \right)  \cdots p_{Y}\left(  y\left(
\left\vert \mathcal{M}\right\vert \right)  \right)  \frac{1}{\left\vert
\mathcal{L}\right\vert \left\vert \mathcal{M}\right\vert }\sum_{l,m}%
\text{Tr}\left\{  \Gamma_{x,y,l,m}\rho_{x\left(  l\right)  ,y\left(  m\right)
}\right\} \\
\geq\left\vert \left(  1-2\epsilon\right)  \left(  2-e^{d_{2}\left\vert
\mathcal{M}\right\vert /d_{1}^{\left(  +\right)  }}\right)  \right\vert ^{2},
\end{multline}
and this completes the proof of the theorem, by combining the above two
inequalities with the lower bound in (\ref{eq:lower-bound-seq-mac}).
\end{proof}

It is straightforward to apply this packing argument to either unassisted or
assisted transmission of classical information over a quantum multiple access
channel. For the unassisted case, one could exploit Winter's coding scheme
with conditionally typical projectors~\cite{W01}, and we would pick the
parameters as%
\begin{align}
D_{1}  &  =2^{n\left[  H\left(  B\right)  -\delta\right]  },\\
d_{1}^{\left(  +\right)  }  &  =2^{n\left[  H\left(  B|X\right)
-\delta\right]  },\\
d_{1}^{\left(  -\right)  }  &  =2^{n\left[  H\left(  B|X\right)
+\delta\right]  },\\
d_{2}  &  =2^{n\left[  H\left(  B|XY\right)  +\delta\right]  },
\end{align}
so that we would have%
\begin{align}
D_{1}/d_{1}^{\left(  -\right)  }  &  =2^{n\left[  I\left(  X;B\right)
-2\delta\right]  },\\
d_{1}^{\left(  +\right)  }/d_{2}  &  =2^{n\left[  I\left(  Y;B|X\right)
-2\delta\right]  }.
\end{align}
For the entanglement-assisted case, one could exploit the coding structure of
Hsieh \textit{et al}.~\cite{HDW08} that we have discussed throughout this
article, and we would pick the parameters as%
\begin{align}
D_{1}  &  =2^{n\left[  H\left(  A\right)  +H\left(  B\right)  +H\left(
C\right)  -\delta\right]  },\\
d_{1}^{\left(  +\right)  }  &  =2^{n\left[  H\left(  B\right)  +H\left(
AC\right)  -\delta\right]  },\\
d_{1}^{\left(  -\right)  }  &  =2^{n\left[  H\left(  B\right)  +H\left(
AC\right)  +\delta\right]  },\\
d_{2}  &  =2^{n\left[  H\left(  ABC\right)  +\delta\right]  },
\end{align}
so that we would have%
\begin{align}
D_{1}/d_{1}^{\left(  -\right)  }  &  =2^{n\left[  I\left(  A;C\right)
-2\delta\right]  },\\
d_{1}^{\left(  +\right)  }/d_{2}  &  =2^{n\left[  I\left(  B;AC\right)
-2\delta\right]  }.
\end{align}

\section{Entanglement-Assisted Quantum Simultaneous Decoding}

\label{sec:ea-simul}In this section, we prove the existence of a simultaneous
decoder for entanglement-assisted classical communication over a quantum
multiple access channel with two senders. A simultaneous decoder differs from
a successive decoder in the sense that such a decoder allows for the receiver
to reliably detect the messages of both senders with a single measurement as
long as the rates are within the pentagonal rate region specified by Theorem~6
of Ref.~\cite{HDW08} and Theorem~\ref{thm:simul-decoder}\ below (it might also
be helpful to consult Ref.~\cite{el2010lecture}\ to see the difference between
classical successive and simultaneous decoders). The advantage of a
simultaneous decoder over a successive decoder is that there is no need to
invoke time-sharing in order to achieve the Hsieh-Devetak-Winter rate region
of the entanglement-assisted multiple access channel in Ref.~\cite{HDW08}.
Also, an analogous classical decoder is required in order to achieve the
Han-Kobayashi rate region for the classical interference channel~\cite{HK81}
(though it requires a simultaneous decoder for three senders).

Concerning the quantum interference channel, Fawzi \textit{et al}.~made
progress towards demonstrating that a quantized version of the classical
Han-Kobayashi rate region is achievable for classical communication over a
quantum interference channel~\cite{FHSSW11}, though they were only able to
prove this result up to a conjecture regarding the existence of a quantum
simultaneous decoder for general channels. The importance of this conjecture
stems not only from the fact that it would allow for a quantization of the
Han-Kobayashi rate region, but also more broadly from the fact that many
coding theorems in classical network information theory exploit the
simultaneous decoding technique~\cite{el2010lecture}. Thus, having a general
quantum simultaneous decoder for an arbitrary number of senders should allow
for the wholesale import of much of classical network information theory into
quantum network information theory.

Our result below applies only to channels with two senders, and the technique
unfortunately does not generally extend to channels with three senders. Thus,
this important case still remains open as a conjecture. Sen independently
arrived at the results here by exploiting both the proof structure outlined
below and a different technique as well~\cite{S11a}.

\begin{theorem}
[Entanglement-Assisted Simultaneous Decoding]\label{thm:simul-decoder}Suppose
that Alice and Charlie share many copies of an entangled pure state
$\left\vert \phi\right\rangle ^{A^{\prime}A}$ where Alice has access to the
system $A^{\prime}$ and Charlie has access to the system $A$. Similarly, let
Bob and Charlie share many copies of an entangled pure state $\left\vert
\psi\right\rangle ^{B^{\prime}B}$. Let $\mathcal{N}^{A^{\prime}B^{\prime
}\rightarrow C}$ be a multiple access channel that connects Alice and Bob to
Charlie, and let%
\begin{equation}
\rho^{ABC}\equiv\mathcal{N}^{A^{\prime}B^{\prime}\rightarrow C}\left(
\left\vert \phi\right\rangle \left\langle \phi\right\vert ^{A^{\prime}%
A}\otimes\left\vert \psi\right\rangle \left\langle \psi\right\vert
^{B^{\prime}B}\right)  . \label{eq:code-state}%
\end{equation}
Then there exists an entanglement-assisted classical communication code with a
corresponding quantum simultaneous decoder, such that the following rate
region is achievable for $R_{1},R_{2}\geq0$:%
\begin{align}
R_{1}  &  \leq I\left(  A;C|B\right)  _{\rho},\\
R_{2}  &  \leq I\left(  B;C|A\right)  _{\rho},\\
R_{1}+R_{2}  &  \leq I\left(  AB;C\right)  _{\rho},
\end{align}
where the entropies are with respect to the state in (\ref{eq:code-state}).
\end{theorem}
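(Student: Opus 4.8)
The plan is to establish the simultaneous decoding rate region by constructing a quantum measurement (POVM) that, with a single measurement, identifies both senders' messages. The key conceptual tool is the "transpose trick" already developed in the proof of Theorem~\ref{thm:ea-sequential}: by writing the shared entanglement $\left\vert \phi\right\rangle ^{A^{\prime n}A^{n}}$ and $\left\vert \psi\right\rangle ^{B^{\prime n}B^{n}}$ as direct sums of maximally entangled states on type-class subspaces, Alice's and Bob's encodings (Heisenberg--Weyl unitaries $U\left(s_{1}\left(l\right)\right)$ and $U\left(s_{2}\left(m\right)\right)$) applied to their halves of the entanglement become equivalent to the transposed operators acting at Charlie's end. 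This lets us write the induced code states as $\sigma_{l,m}\equiv\left(U^{T}\left(s_{1}\left(l\right)\right)^{A^{n}}\otimes U^{T}\left(s_{2}\left(m\right)\right)^{B^{n}}\right)\rho^{A^{n}B^{n}C^{n}}\left(U^{*}\left(s_{1}\left(l\right)\right)^{A^{n}}\otimes U^{*}\left(s_{2}\left(m\right)\right)^{B^{n}}\right)$, all living at the receiver.

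\emph{Second}, I would build the simultaneous decoding POVM. The natural candidate is a square-root (pretty-good) measurement whose operators are conjugations of a fixed typical projector by the encoding unitaries. Concretely, define $\Gamma_{l,m}\equiv\left(U^{T}\left(s_{1}\left(l\right)\right)^{A^{n}}\otimes U^{T}\left(s_{2}\left(m\right)\right)^{B^{n}}\right)\Pi_{\delta}^{A^{n}B^{n}C^{n}}\left(U^{*}\left(s_{1}\left(l\right)\right)^{A^{n}}\otimes U^{*}\left(s_{2}\left(m\right)\right)^{B^{n}}\right)$ and take the POVM elements to be $\Lambda_{l,m}\equiv\left(\sum_{l',m'}\Gamma_{l',m'}\right)^{-1/2}\Gamma_{l,m}\left(\sum_{l',m'}\Gamma_{l',m'}\right)^{-1/2}$. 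To bound the error, I would invoke the Hayashi--Nagaoka operator inequality, which reduces the average error probability to two kinds of terms: a "correct-codeword" term $\mathrm{Tr}\left\{\left(I-\Gamma_{l,m}\right)\sigma_{l,m}\right\}$, controlled by the typical-subspace properties as in Theorem~\ref{thm:ea-sequential}, and the "competing-codeword" terms $\sum_{\left(l',m'\right)\neq\left(l,m\right)}\mathrm{Tr}\left\{\Gamma_{l',m'}\sigma_{l,m}\right\}$.

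\emph{Third}, I would split the competing terms into three cases matching the three rate constraints: $l'\neq l,\,m'=m$; $l'=l,\,m'\neq m$; and $l'\neq l,\,m'\neq m$. In each case, after taking the expectation over the random code, the independence of the two codebooks factorizes the averaged state into products of marginals, and the key estimate is a bound of the form $\mathbb{E}\,\mathrm{Tr}\left\{\Gamma_{l',m'}\sigma_{l,m}\right\}\leq 2^{-n\left(\text{info quantity}-\delta\right)}$. The three resulting exponents are $I\left(A;C|B\right)_{\rho}$, $I\left(B;C|A\right)_{\rho}$, and $I\left(AB;C\right)_{\rho}$ respectively; multiplying by the number of competing codewords $\left\vert\mathcal{L}\right\vert$, $\left\vert\mathcal{M}\right\vert$, or $\left\vert\mathcal{L}\right\vert\left\vert\mathcal{M}\right\vert$ and demanding the product vanish yields exactly the three stated rate bounds $R_{1}\leq I\left(A;C|B\right)_{\rho}$, $R_{2}\leq I\left(B;C|A\right)_{\rho}$, and $R_{1}+R_{2}\leq I\left(AB;C\right)_{\rho}$.

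\emph{The main obstacle} I anticipate is the crossed term with $l'\neq l$ \emph{and} $m'\neq m$, which must be bounded by the sum-rate mutual information $I\left(AB;C\right)_{\rho}$. Here the averaged state does not simply factor into a product of the $A$ and $B$ marginals against the full $C$ system; one needs a simultaneous-typicality argument showing that the doubly-wrong codeword's projector is nearly orthogonal to the true state, exploiting that the conjugated typical projector $\Gamma_{l',m'}$ projects onto a subspace of dimension $\approx 2^{nH\left(ABC\right)}$ while the relevant marginal is spread over dimension $\approx 2^{n\left(H\left(A\right)+H\left(B\right)+H\left(C\right)\right)}$. Controlling this term in the quantum setting—where one cannot condition on one sender's codeword being correct—is precisely the difficulty that blocks the three-sender generalization, and it is why the two-sender restriction is essential: with two senders the projector algebra remains tractable, whereas with three the analogous crossed terms resist the same pretty-good-measurement analysis.
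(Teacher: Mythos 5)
Your high-level architecture is exactly the paper's: transpose trick to move both encoders to Charlie's side, a square-root measurement, the Hayashi--Nagaoka inequality, and a three-way split of the competing-codeword terms matching the three rate bounds. However, there is a genuine gap in your second step: the POVM you propose---the square-root measurement built from the bare conjugated joint typical projector $\Gamma_{l,m}=\left(U^{T}\left(s_{1}\left(l\right)\right)\otimes U^{T}\left(s_{2}\left(m\right)\right)\right)\Pi^{A^{n}B^{n}C^{n}}\left(U^{\ast}\left(s_{1}\left(l\right)\right)\otimes U^{\ast}\left(s_{2}\left(m\right)\right)\right)$---does not admit the key estimates you assert in your third step. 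The paper instead uses the sandwiched, \emph{asymmetric} operator
\begin{equation*}
\Upsilon_{l,m}\equiv U^{T}\left(s_{1}\left(l\right)\right)^{A^{n}}\hat{\Pi}_{3}\hat{\Pi}_{2}U^{T}\left(s_{2}\left(m\right)\right)^{B^{n}}\Pi^{A^{n}B^{n}C^{n}}U^{\ast}\left(s_{2}\left(m\right)\right)^{B^{n}}\hat{\Pi}_{2}\hat{\Pi}_{3}U^{\ast}\left(s_{1}\left(l\right)\right)^{A^{n}},
\end{equation*}
with $\hat{\Pi}_{1}=\Pi^{A^{n}}\otimes\Pi^{B^{n}C^{n}}$, $\hat{\Pi}_{2}=\Pi^{B^{n}}\otimes\Pi^{A^{n}C^{n}}$, $\hat{\Pi}_{3}=\Pi^{C^{n}}\otimes\Pi^{A^{n}B^{n}}$ (note that the $A$-unitary conjugates the whole operator while the $B$-unitary conjugates only the inner projector), and additionally smooths the code state $\sigma_{l,m}$ into $\theta_{l,m}$ by sandwiching it with the conjugated $\hat{\Pi}_{1}$---the observation the paper credits to Sen. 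These insertions are not cosmetic; they are the mechanism that produces the exponents.

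To see why your bare $\Gamma_{l,m}$ fails, consider the term $l'\neq l$, $m'=m$. The only operator inequality available for the joint projector is $\Pi^{A^{n}B^{n}C^{n}}\leq 2^{n\left(H\left(ABC\right)+c\delta\right)}\rho^{A^{n}B^{n}C^{n}}$, and after averaging over the random code on $A^{n}$ (which pinches and depolarizes within type classes, supplying a factor $\approx 2^{-nH\left(A\right)}$) you are left needing a bound of the form $\mathrm{Tr}\left\{\rho^{B^{n}C^{n}}\omega\right\}\lesssim 2^{-nH\left(BC\right)}$ for a state $\omega$ on $B^{n}C^{n}$. This is false in general: $\left\Vert\rho^{B^{n}C^{n}}\right\Vert_{\infty}$ is not small outside the typical subspace, and a R\'{e}nyi-$2$/purity argument goes the wrong way since $\mathrm{Tr}\left\{\omega^{2}\right\}\geq 2^{-nH\left(C\right)}$. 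The flattening bounds of the form (\ref{eq:pi1sandwich}), i.e. $\hat{\Pi}_{1}\left(\cdot\right)\hat{\Pi}_{1}\leq 2^{-n\left(H\left(A\right)+H\left(BC\right)-\eta-c\delta\right)}\hat{\Pi}_{1}$, are exactly what the inserted projectors buy: $\hat{\Pi}_{1}$ (in the smoothed state) handles the $l'\neq l$ term, $\hat{\Pi}_{2}$ the $m'\neq m$ term, and $\hat{\Pi}_{3}$ together with the $\hat{\Pi}_{1}$-type estimate the doubly-wrong term. So the difficulty is not confined to the crossed term you single out---your single-index-wrong terms are already unprovable with the bare projector, and your stated key estimate $\mathbb{E}\,\mathrm{Tr}\left\{\Gamma_{l',m'}\sigma_{l,m}\right\}\leq 2^{-n\left(\mathrm{info}-\delta\right)}$ is asserted rather than derived precisely at the point where the paper's construction does the work. (Your dimension-counting intuition for the sum-rate exponent is correct, but since the marginals are not flat, dimension counting must be implemented via these projector insertions.) Finally, once $\Upsilon_{l,m}$ carries the sandwiching, the correct-codeword term no longer reduces trivially to $\mathrm{Tr}\left\{\Pi^{A^{n}B^{n}C^{n}}\rho^{A^{n}B^{n}C^{n}}\right\}\geq 1-\epsilon$ as it would for your $\Gamma_{l,m}$; the paper needs four separate trace conditions plus repeated applications of the Gentle Operator Lemma to control it, a step your proposal would also have to supply.
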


\begin{proof}
Suppose that Alice has a message set $\mathcal{L}$ and Bob has a message set
$\mathcal{M}$ from which they will each choose a message $l\in\mathcal{L}$ and
$m\in\mathcal{M}$ uniformly at random to send to Charlie. They construct
random codes $C_{1}\equiv\left\{  s_{1}\left(  l\right)  \right\}
_{l\in\mathcal{L}}$ and $C_{2}\equiv\left\{  s_{2}\left(  m\right)  \right\}
_{m\in\mathcal{M}}$ in the same way as explained in the proof of
Theorem~\ref{thm:ea-sequential}. Both of them encode their messages by
applying unitary encoders to their respective shares of the entanglement,
giving rise to the following states after applying the transpose trick to each
type class~\cite{HDW08,W11}:%
\begin{align}
\left(  U\left(  s_{1}\left(  l\right)  \right)  ^{A^{\prime n}}\otimes
I^{A^{n}}\right)  \left\vert \phi\right\rangle ^{A^{\prime n}A^{n}}  &
=\left(  I^{A^{\prime n}}\otimes U^{T}\left(  s_{1}\left(  l\right)  \right)
^{A^{n}}\right)  \left\vert \phi\right\rangle ^{A^{\prime n}A^{n}},\\
\left(  U\left(  s_{2}\left(  m\right)  \right)  ^{B^{\prime n}}\otimes
I^{B^{n}}\right)  \left\vert \psi\right\rangle ^{B^{\prime n}B^{n}}  &
=\left(  I^{B^{\prime n}}\otimes U^{T}\left(  s_{2}\left(  m\right)  \right)
^{B^{n}}\right)  \left\vert \psi\right\rangle ^{B^{\prime n}B^{n}}.
\end{align}
Then they both send their share of the state to Charlie over the multiple
access channel~$\mathcal{N}^{A^{\prime}B^{\prime}\rightarrow C}$, giving rise
to a state $\sigma_{l,m}$ at Charlie's receiving end:%
\begin{equation}
\sigma_{l,m}\equiv\left(  U^{T}\left(  s_{1}\left(  l\right)  \right)
^{A^{n}}\otimes U^{T}\left(  s_{2}\left(  m\right)  \right)  ^{B^{n}}\right)
\rho^{A^{n}B^{n}C^{n}}\left(  U^{\ast}\left(  s_{1}\left(  l\right)  \right)
^{A^{n}}\otimes U^{\ast}\left(  s_{2}\left(  m\right)  \right)  ^{B^{n}%
}\right)  .
\end{equation}

Charlie decodes with a simultaneous decoding POVM $\left\{  \Lambda
_{l,m}\right\}  _{l\in\mathcal{L},m\in\mathcal{M}}$, defined as follows:%
\begin{equation}
\Lambda_{l,m}\equiv\left(  \sum_{l^{\prime},m^{\prime}}\Upsilon_{l^{\prime
},m^{\prime}}\right)  ^{-\frac{1}{2}}\Upsilon_{l,m}\left(  \sum_{l^{\prime
},m^{\prime}}\Upsilon_{l^{\prime},m^{\prime}}\right)  ^{-\frac{1}{2}},
\end{equation}
where%
\begin{equation}
\Upsilon_{l,m}\equiv U^{T}\left(  s_{1}\left(  l\right)  \right)  ^{A^{n}}%
\hat{\Pi}_{3}\hat{\Pi}_{2}U^{T}\left(  s_{2}\left(  m\right)  \right)
^{B^{n}}\Pi^{A^{n}B^{n}C^{n}}U^{\ast}\left(  s_{2}\left(  m\right)  \right)
^{B^{n}}\hat{\Pi}_{2}\hat{\Pi}_{3}U^{\ast}\left(  s_{1}\left(  l\right)
\right)  ^{A^{n}},
\end{equation}
and%
\begin{align}
\hat{\Pi}_{1}  &  \equiv\left(  \Pi^{A^{n}}\otimes\Pi^{B^{n}C^{n}}\right)  ,\\
\hat{\Pi}_{2}  &  \equiv\left(  \Pi^{B^{n}}\otimes\Pi^{A^{n}C^{n}}\right)  ,\\
\hat{\Pi}_{3}  &  \equiv\left(  \Pi^{C^{n}}\otimes\Pi^{A^{n}B^{n}}\right)  .
\end{align}
The projectors $\Pi^{A^{n}}$, $\Pi^{B^{n}}$, $\Pi^{C^{n}}$, $\Pi^{A^{n}B^{n}}%
$, $\Pi^{A^{n}C^{n}}$, $\Pi^{B^{n}C^{n}}$ and $\Pi^{A^{n}B^{n}C^{n}}$ are
$\delta$-typical projectors for the state $\rho^{A^{n}B^{n}C^{n}}$ onto the
specified systems after tracing out all other systems.

The average error probability when Alice and Bob choose their messages
independently and uniformly at random is%
\begin{equation}
\bar{p}_{e}\equiv\frac{1}{\left\vert \mathcal{L}\right\vert \cdot\left\vert
\mathcal{M}\right\vert }\sum_{l,m}\mathrm{Tr}\left\{  \left(  I-\Lambda
_{l,m}\right)  \sigma_{l,m}\right\}  .\label{eq:avg-error-crit}%
\end{equation}
We can upper bound this error probability from above\footnote{We are indebted
to Pranab Sen for this observation \cite{S11}~(c.f., versions 1 and 2 of this
paper).} as%
\begin{align}
\bar{p}_{e} &  \leq\frac{1}{\left\vert \mathcal{L}\right\vert \cdot\left\vert
\mathcal{M}\right\vert }\sum_{l,m}\mathrm{Tr}\left\{  \left(  I-\Lambda
_{l,m}\right)  U^{T}\left(  s_{2}\left(  m\right)  \right)  \hat{\Pi}%
_{1}U^{\ast}\left(  s_{2}\left(  m\right)  \right)  \sigma_{l,m}U^{T}\left(
s_{2}\left(  m\right)  \right)  \hat{\Pi}_{1}U^{\ast}\left(  s_{2}\left(
m\right)  \right)  \right\}  +\nonumber\\
&  \qquad\qquad\left\Vert U^{T}\left(  s_{2}\left(  m\right)  \right)
\hat{\Pi}_{1}U^{\ast}\left(  s_{2}\left(  m\right)  \right)  \sigma_{l,m}%
U^{T}\left(  s_{2}\left(  m\right)  \right)  \hat{\Pi}_{1}U^{\ast}\left(
s_{2}\left(  m\right)  \right)  -\sigma_{l,m}\right\Vert _{1}\\
&  \leq\frac{1}{\left\vert \mathcal{L}\right\vert \cdot\left\vert
\mathcal{M}\right\vert }\sum_{l,m}\mathrm{Tr}\left\{  \left(  I-\Lambda
_{l,m}\right)  \theta_{l,m}\right\}  +2\sqrt{\epsilon^{\prime}},
\end{align}
where we define%
\begin{align}
\theta_{l,m} &  \equiv U^{T}\left(  s_{2}\left(  m\right)  \right)  ^{B^{n}%
}\hat{\Pi}_{1}U^{\ast}\left(  s_{2}\left(  m\right)  \right)  ^{B^{n}}%
\sigma_{l,m}U^{T}\left(  s_{2}\left(  m\right)  \right)  ^{B^{n}}\hat{\Pi}%
_{1}U^{\ast}\left(  s_{2}\left(  m\right)  \right)  ^{B^{n}}\\
&  =U^{T}\left(  s_{2}\left(  m\right)  \right)  ^{B^{n}}\hat{\Pi}_{1}%
U^{T}\left(  s_{1}\left(  l\right)  \right)  ^{A^{n}}\rho^{A^{n}B^{n}C^{n}%
}U^{\ast}\left(  s_{1}\left(  l\right)  \right)  ^{A^{n}}\hat{\Pi}_{1}U^{\ast
}\left(  s_{2}\left(  m\right)  \right)  ^{B^{n}}.
\end{align}
The first inequality follows from the inequality%
\begin{equation}
\text{Tr}\left\{  \Gamma\rho\right\}  \leq\text{Tr}\left\{  \Gamma
\sigma\right\}  +\left\Vert \rho-\sigma\right\Vert _{1},
\end{equation}
for any operators $0\leq\Gamma,\rho,\sigma\leq I$ (Corollary~9.1.1 of
Ref.~\cite{W11}). The second inequality follows from the properties of quantum
typicality, the Gentle Operator Lemma (Lemma~9.4.2 of Ref.~\cite{W11}), and
the inequality $\mathrm{Tr}\hat{\{\Pi_{1}}U^{T}\left(  s_{1}\left(  l\right)
\right)  ^{A^{n}}\rho^{A^{n}B^{n}C^{n}}U^{\ast}\left(  s_{1}\left(  l\right)
\right)  ^{A^{n}}\}\geq1-\epsilon^{\prime}$ proved in Ref.~\cite{HDW08}. We
now recall the Hayashi-Nagaoka operator inequality~\cite{hayashi2003general}
which holds for any positive operator $S$ and $T$ such that $0\leq S\leq I$
and $T\geq0$:%
\begin{equation}
I-\left(  S+T\right)  ^{-\frac{1}{2}}S\left(  S+T\right)  ^{-\frac{1}{2}}%
\leq2\left(  I-S\right)  +4T.
\end{equation}
Setting%
\begin{align}
S &  =\Upsilon_{l,m},\\
T &  =\sum_{\left(  l^{\prime},m^{\prime}\right)  \neq\left(  l,m\right)
}\Upsilon_{l^{\prime},m^{\prime}},
\end{align}
and applying the Hayashi-Nagaoka operator inequality, we obtain the following
upper bound on the error probability:%
\begin{equation}
\bar{p}_{e}\leq\frac{1}{\left\vert \mathcal{L}\right\vert \cdot\left\vert
\mathcal{M}\right\vert }\sum_{l,m}\left(  2\mathrm{Tr}\left\{  \left(
I-\Upsilon_{l,m}\right)  \theta_{l,m}\right\}  +4\sum_{\left(  l^{\prime
},m^{\prime}\right)  \neq\left(  l,m\right)  }\mathrm{Tr}\left\{
\Upsilon_{l^{\prime},m^{\prime}}\theta_{l,m}\right\}  \right)  +2\sqrt
{\epsilon}.\label{eq:after-HN}%
\end{equation}
Considering the first term $\mathrm{Tr}\left\{  \left(  I-\Upsilon
_{l,m}\right)  \theta_{l,m}\right\}  $, we can prove that
\begin{equation}
\mathrm{Tr}\left\{  \left(  I-\Upsilon_{l,m}\right)  \theta_{l,m}\right\}
\leq\epsilon^{\prime\prime},
\end{equation}
where $\epsilon^{\prime\prime}$ approaches zero when $n$ becomes large. This
inequality follows from the following inequalities%
\begin{align}
\text{Tr}\left\{  \hat{\Pi}_{1}U^{T}\left(  s_{1}\left(  l\right)  \right)
^{A^{n}}\rho^{A^{n}B^{n}C^{n}}U^{\ast}\left(  s_{1}\left(  l\right)  \right)
^{A^{n}}\right\}   &  \geq1-2\epsilon,\\
\text{Tr}\left\{  \hat{\Pi}_{3}U^{T}\left(  s_{2}\left(  m\right)  \right)
^{B^{n}}\rho^{A^{n}B^{n}C^{n}}U^{\ast}\left(  s_{2}\left(  m\right)  \right)
^{B^{n}}\right\}   &  \geq1-2\epsilon,\\
\text{Tr}\left\{  \hat{\Pi}_{2}U^{T}\left(  s_{2}\left(  m\right)  \right)
^{B^{n}}\rho^{A^{n}B^{n}C^{n}}U^{\ast}\left(  s_{2}\left(  m\right)  \right)
^{B^{n}}\right\}   &  \geq1-2\epsilon,\\
\text{Tr}\left\{  \Pi^{A^{n}B^{n}C^{n}}\rho^{A^{n}B^{n}C^{n}}\right\}   &
\geq1-\epsilon,
\end{align}
(which can be proved with the methods of Ref.~\cite{HDW08}) and by applying
\textquotedblleft measurement on approximately close states\textquotedblright%
\ (Corollary 9.1.1 of Ref.~\cite{W11}) and the Gentle Operator Lemma (Lemma
9.4.2 of Ref.~\cite{W11}) several times.

In order to analyze the second term $\sum_{\left(  l^{\prime},m^{\prime
}\right)  \neq\left(  l,m\right)  }\mathrm{Tr}\left\{  \Upsilon_{l^{\prime
},m^{\prime}}\theta_{l,m}\right\}  $, we need to take the expectation over all
random codes and make several observations about the behavior of the codeword
states under the expectation. Note that the encoding unitaries after the
transpose trick and the channel commute because they act on different systems,
so we can apply the encoding unitaries first. To simplify the calculation, we
first consider only applying a random encoding unitary to the system $A^{n}$:%
\begin{align}
&  \mathbb{E}_{\mathcal{C}_{1}}\left\{  U^{T}\left(  s\right)  ^{A^{n}%
}\left\vert \phi\right\rangle \left\langle \phi\right\vert ^{A^{\prime n}%
A^{n}}U^{\ast}\left(  s\right)  ^{A^{n}}\right\} \nonumber\\
=  &  \frac{1}{\left\vert \mathcal{S}_{1}\right\vert }\sum_{s\in
\mathcal{S}_{1}}U^{T}\left(  s\right)  ^{A^{n}}\left(  \sum_{t}\sqrt{p\left(
t\right)  }\left\vert \Phi_{t}\right\rangle ^{A^{\prime n}A^{n}}\right)
\left(  \sum_{t^{\prime}}\sqrt{p\left(  t^{\prime}\right)  }\left\langle
\Phi_{t^{\prime}}\right\vert ^{A^{\prime n}A^{n}}\right)  U^{\ast}\left(
s\right)  ^{A^{n}}\\
=  &  \sum_{t}p\left(  t\right)  \pi_{t}^{A^{\prime n}}\otimes\pi_{t}^{A^{n}}%
\end{align}
where $\pi_{t}$ is the maximally mixed state on the type subspace $t$. To see
why the last equality holds, we note that when $t=t^{\prime}$, averaging over
all elements in $\mathcal{S}_{1}$ gives rise to the state $\mathrm{Tr}_{A^{n}%
}\left\{  \left\vert \Phi_{t}\right\rangle \left\langle \Phi_{t}\right\vert
^{A^{\prime n}A^{n}}\right\}  \otimes\pi_{t}^{A^{n}}=\pi_{t}^{A^{\prime n}%
}\otimes\pi_{t}^{A^{n}}$; when $t\neq t^{\prime}$, it can be shown that the
whole expression sums up to zero~\cite{HDW08,W11}. Now we can append the other
state at Bob's side and send the overall state through the channel. Therefore,
we have that
\begin{align}
\mathbb{E}_{\mathcal{C}_{1}}\left\{  U^{T}\left(  s\right)  ^{A^{n}}%
\rho^{A^{n}B^{n}C^{n}}U^{\ast}\left(  s\right)  ^{A^{n}}\right\}   &
=\mathcal{N}^{A^{\prime n}B^{\prime n}\rightarrow C^{n}}\left(  \sum
_{t}p\left(  t\right)  \pi_{t}^{A^{\prime n}}\otimes\pi_{t}^{A^{n}}\otimes
\psi^{B^{\prime n}B^{n}}\right) \\
&  =\sum_{t}p\left(  t\right)  \pi_{t}^{A^{n}}\otimes\mathcal{N}^{A^{\prime
n}B^{\prime n}\rightarrow C^{n}}\left(  \pi_{t}^{A^{n}}\otimes\psi^{B^{\prime
n}B^{n}}\right)  .
\end{align}
Now consider the above state sandwiched between the projectors $\hat{\Pi}_{1}%
$:%
\begin{align}
&  \hat{\Pi}_{1}\mathbb{E}_{\mathcal{C}_{1}}\left\{  U^{T}\left(  s\right)
^{A^{n}}\rho^{A^{n}B^{n}C^{n}}U^{\ast}\left(  s\right)  ^{A^{n}}\right\}
\hat{\Pi}_{1}\nonumber\\
&  =\left(  \Pi^{A^{n}}\otimes\Pi^{B^{n}C^{n}}\right)  \left(  \sum
_{t}p\left(  t\right)  \pi_{t}^{A^{n}}\otimes\mathcal{N}^{A^{\prime
n}B^{\prime n}\rightarrow C^{n}}\left(  \pi_{t}^{A^{n}}\otimes\psi^{B^{\prime
n}B^{n}}\right)  \right)  \left(  \Pi^{A^{n}}\otimes\Pi^{B^{n}C^{n}}\right) \\
&  =\sum_{t}p\left(  t\right)  \left(  \Pi^{A^{n}}\pi_{t}^{A^{n}}\Pi^{A^{n}%
}\right)  \otimes\left(  \Pi^{B^{n}C^{n}}\mathcal{N}^{A^{\prime n}B^{\prime
n}\rightarrow C^{n}}\left(  \pi_{t}^{A^{n}}\otimes\psi^{B^{\prime n}B^{n}%
}\right)  \Pi^{B^{n}C^{n}}\right)
\end{align}
At this point, we note that $\pi_{t}^{A^{n}}=\Pi_{t}^{A^{n}}/\mathrm{Tr}%
\left\{  \Pi_{t}^{A^{n}}\right\}  $, $\mathrm{Tr}\left\{  \Pi_{t}^{A^{n}%
}\right\}  \geq2^{n\left(  H\left(  A\right)  -\eta\left(  n,\delta\right)
\right)  }$ for a typical type $t$, and $\Pi^{A^{n}}\Pi_{t}^{A^{n}}\Pi^{A^{n}%
}\leq\Pi^{A^{n}}$, where $\Pi_{t}^{A^{n}}$ is a projector onto the
$t^{\mathrm{th}}$ type class subspace. Therefore, the above expression is
bounded from above by the following one:%
\begin{align}
&  \leq2^{-n\left(  H\left(  A\right)  _{\rho}-\eta\left(  n,\delta\right)
\right)  }\Pi^{A^{n}}\otimes\left(  \Pi^{B^{n}C^{n}}\mathcal{N}^{A^{\prime
n}B^{\prime n}\rightarrow C^{n}}\left(  \sum_{t}p\left(  t\right)  \left(
\pi_{t}^{A^{n}}\right)  \otimes\psi^{B^{\prime n}B^{n}}\right)  \Pi
^{B^{n}C^{n}}\right) \nonumber\\
&  =2^{-n\left(  H\left(  A\right)  _{\rho}-\eta\left(  n,\delta\right)
\right)  }\Pi^{A^{n}}\otimes\left(  \Pi^{B^{n}C^{n}}\mathcal{N}^{A^{\prime
n}B^{\prime n}\rightarrow C^{n}}\left(  \phi^{A^{\prime n}}\otimes
\psi^{B^{\prime n}B^{n}}\right)  \Pi^{B^{n}C^{n}}\right) \\
&  \leq2^{-n\left(  H\left(  A\right)  _{\rho}+H\left(  BC\right)  _{\rho
}-\eta\left(  n,\delta\right)  -c\delta\right)  }\hat{\Pi}_{1}.
\label{eq:pi1sandwich}%
\end{align}
We also note that similar observations can be made when applying random
encoding unitaries to the system $B^{\prime n}$ alone or to both the systems
$A^{\prime n}$ and $B^{\prime n}$.

Now we proceed to bound the second term in the RHS of (\ref{eq:after-HN}) from
above by taking the expectation over the random codes $\mathcal{C}_{1}$ and
$\mathcal{C}_{2}$:%
\begin{multline*}
\mathbb{E}_{\mathcal{C}_{1},\mathcal{C}_{2}}\left\{  \sum_{\left(  l^{\prime
},m^{\prime}\right)  \neq\left(  l,m\right)  }\mathrm{Tr}\left\{
\Upsilon_{l^{\prime},m^{\prime}}\theta_{l,m}\right\}  \right\} \\
=\mathbb{E}_{\mathcal{C}_{1},\mathcal{C}_{2}}\left\{  \sum_{l^{\prime}\neq
l}\mathrm{Tr}\left\{  \Upsilon_{l^{\prime},m}\theta_{l,m}\right\}
+\sum_{m^{\prime}\neq m}\mathrm{Tr}\left\{  \Upsilon_{l,m^{\prime}}%
\theta_{l,m}\right\}  +\sum_{l^{\prime}\neq l,\ m^{\prime}\neq m}%
\mathrm{Tr}\left\{  \Upsilon_{l^{\prime},m^{\prime}}\theta_{l,m}\right\}
\right\}  .
\end{multline*}
We bound the first error on the RHS above, which corresponds to Charlie
correctly identifying the message from Bob only:%
\begin{align}
&  \mathbb{E}_{\mathcal{C}_{1},\mathcal{C}_{2}}\left\{  \sum_{l^{\prime}\neq
l}\mathrm{Tr}\left\{  \Upsilon_{l^{\prime},m}\theta_{l,m}\right\}  \right\}
\nonumber\\
&  =\sum_{l^{\prime}\neq l}\mathbb{E}_{\mathcal{C}_{2}}\left\{  \mathrm{Tr}%
\left\{  \mathbb{E}_{\mathcal{C}_{1}}\left\{  \Upsilon_{l^{\prime},m}\right\}
\mathbb{E}_{\mathcal{C}_{1}}\left\{  \theta_{l,m}\right\}  \right\}  \right\}
\\
&  =\sum_{l^{\prime}\neq l}\mathbb{E}_{\mathcal{C}_{2}}\left\{  \mathrm{Tr}%
\left\{  \mathbb{E}_{\mathcal{C}_{1}}\left\{  \Upsilon_{l^{\prime},m}\right\}
U^{T}\left(  s_{2}\left(  m\right)  \right)  ^{B^{n}}\hat{\Pi}_{1}%
\mathbb{E}_{\mathcal{C}_{1}}\left\{  U^{T}\left(  s_{1}\left(  l\right)
\right)  ^{A^{n}}\rho^{A^{n}B^{n}C^{n}}U^{\ast}\left(  s_{1}\left(  l\right)
\right)  ^{A^{n}}\right\}  \hat{\Pi}_{1}U^{\ast}\left(  s_{2}\left(  m\right)
\right)  ^{B^{n}}\right\}  \right\} \\
&  \leq2^{-n\left(  H\left(  A\right)  _{\rho}+H\left(  BC\right)  _{\rho
}-\eta\left(  n,\delta\right)  -c\delta\right)  }\sum_{l^{\prime}\neq
l}\mathbb{E}_{\mathcal{C}_{2}}\left\{  \mathrm{Tr}\left\{  \mathbb{E}%
_{\mathcal{C}_{1}}\left\{  \Upsilon_{l^{\prime},m}\right\}  U^{T}\left(
s_{2}\left(  m\right)  \right)  ^{B^{n}}\hat{\Pi}_{1}U^{\ast}\left(
s_{2}\left(  m\right)  \right)  ^{B^{n}}\right\}  \right\} \\
&  \leq2^{-n\left(  H\left(  A\right)  _{\rho}+H\left(  BC\right)  _{\rho
}-\eta\left(  n,\delta\right)  -c\delta\right)  }\sum_{l^{\prime}\neq
l}\mathbb{E}_{\mathcal{C}_{1},\mathcal{C}_{2}}\left\{  \mathrm{Tr}\left\{
\Pi^{A^{n}B^{n}C^{n}}\right\}  \right\} \\
&  \leq2^{-n\left(  H\left(  A\right)  _{\rho}+H\left(  BC\right)  _{\rho
}-H\left(  ABC\right)  _{\rho}-\eta\left(  n,\delta\right)  -2c\delta\right)
}\left\vert \mathcal{L}\right\vert \\
&  =2^{-n\left(  I\left(  A;C\mid B\right)  \right)  _{\rho}-\eta\left(
n,\delta\right)  -2c\delta}\left\vert \mathcal{L}\right\vert .
\end{align}
The first equality follows from the fact that the codewords for messages $l$
and $l^{\prime}$ are different and therefore independent (because of the way
that we randomly selected the code). The first inequality follows from our
observation in (\ref{eq:pi1sandwich}).

We now bound the second error term, which corresponds to Charlie correctly
identifying the message from Alice only:%
\begin{align}
&  \mathbb{E}_{\mathcal{C}_{1},\mathcal{C}_{2}}\left\{  \sum_{m^{\prime}\neq
m}\mathrm{Tr}\left\{  \Upsilon_{l,m^{\prime}}\theta_{l,m}\right\}  \right\}
\nonumber\\
&  =\sum_{m^{\prime}\neq m}\mathbb{E}_{\mathcal{C}_{1}}\left\{  \mathrm{Tr}%
\left\{  \mathbb{E}_{\mathcal{C}_{2}}\left\{  \Upsilon_{l,m^{\prime}}\right\}
\mathbb{E}_{\mathcal{C}_{2}}\left\{  \theta_{l,m}\right\}  \right\}  \right\}
\\
&  =\sum_{m^{\prime}\neq m}\mathbb{E}_{\mathcal{C}_{1}}\left\{  \mathrm{Tr}%
\left\{  U^{T}\left(  s_{1}\left(  l\right)  \right)  ^{A^{n}}\hat{\Pi}%
_{3}\hat{\Pi}_{2}\mathbb{E}_{\mathcal{C}_{2}}\left\{  U^{T}\left(
s_{2}\left(  m^{\prime}\right)  \right)  ^{B^{n}}\Pi_{A^{n}B^{n}C^{n}}U^{\ast
}\left(  s_{2}\left(  m^{\prime}\right)  \right)  ^{B^{n}}\right\}  \hat{\Pi
}_{2}\hat{\Pi}_{3}U^{\ast}\left(  s_{1}\left(  l\right)  \right)  ^{A^{n}%
}\mathbb{E}_{\mathcal{C}_{2}}\left\{  \theta_{l,m}\right\}  \right\}
\right\}
\end{align}%
\begin{align}
&  \leq2^{n\left(  H\left(  ABC\right)  _{\rho}+c\delta\right)  }%
\sum_{m^{\prime}\neq m}\mathbb{E}_{\mathcal{C}_{1}}\left\{  \mathrm{Tr}%
\left\{
\begin{array}
[c]{c}%
U^{T}\left(  s_{1}\left(  l\right)  \right)  ^{A^{n}}\hat{\Pi}_{3}\hat{\Pi
}_{2}\mathbb{E}_{\mathcal{C}_{2}}\left\{  U^{T}\left(  s_{2}\left(  m^{\prime
}\right)  \right)  ^{B^{n}}\rho_{A^{n}B^{n}C^{n}}U^{\ast}\left(  s_{2}\left(
m^{\prime}\right)  \right)  ^{B^{n}}\right\}  \cdot\\
\hat{\Pi}_{2}\hat{\Pi}_{3}U^{\ast}\left(  s_{1}\left(  l\right)  \right)
^{A^{n}}\mathbb{E}_{\mathcal{C}_{2}}\left\{  \theta_{l,m}\right\}
\end{array}
\right\}  \right\} \\
&  \leq2^{-n\left(  H\left(  B\right)  _{\rho}+H\left(  AC\right)  _{\rho
}-H\left(  ABC\right)  _{\rho}-\eta\left(  n,\delta\right)  -2c\delta\right)
}\sum_{m^{\prime}\neq m}\mathbb{E}_{\mathcal{C}_{1}}\left\{  \mathrm{Tr}%
\left\{  U^{T}\left(  s_{1}\left(  l\right)  \right)  ^{A^{n}}\hat{\Pi}%
_{3}\hat{\Pi}_{2}\hat{\Pi}_{3}U^{\ast}\left(  s_{1}\left(  l\right)  \right)
^{A^{n}}\mathbb{E}_{\mathcal{C}_{2}}\left\{  \theta_{l,m}\right\}  \right\}
\right\} \\
&  \leq2^{-n\left(  H\left(  B\right)  _{\rho}+H\left(  AC\right)  _{\rho
}-H\left(  ABC\right)  _{\rho}-\eta\left(  n,\delta\right)  -2c\delta\right)
}\sum_{m^{\prime}\neq m}\mathrm{Tr}\left\{  \mathbb{E}_{\mathcal{C}%
_{1},\mathcal{C}_{2}}\left\{  \theta_{l,m}\right\}  \right\} \\
&  =2^{-n\left(  I\left(  B;C|A\right)  _{\rho}-\eta\left(  n,\delta\right)
-2c\delta\right)  }\left\vert \mathcal{M}\right\vert .
\end{align}
The first equality follows because the codewords for messages $m$ and
$m^{\prime}$ are different and thus independent. The first inequality follows
from%
\[
\Pi^{A^{n}B^{n}C^{n}}\leq2^{n[H\left(  ABC\right)  +c\delta]}\Pi^{A^{n}%
B^{n}C^{n}}\rho^{A^{n}B^{n}C^{n}}\Pi^{A^{n}B^{n}C^{n}}\leq2^{n[H\left(
ABC\right)  +c\delta]}\rho^{A^{n}B^{n}C^{n}},
\]
and we applied a similar observation as in (\ref{eq:pi1sandwich}) to obtain
the second inequality.

We now bound the third error term:%
\begin{align}
&  \mathbb{E}_{\mathcal{C}_{1},\mathcal{C}_{2}}\left\{  \sum_{l^{\prime}\neq
l,\ m^{\prime}\neq m}\mathrm{Tr}\left\{  \Upsilon_{l^{\prime},m^{\prime}%
}\theta_{l,m}\right\}  \right\}  \nonumber\\
&  =\sum_{l^{\prime}\neq l,\ m^{\prime}\neq m}\mathrm{Tr}\left\{
\mathbb{E}_{\mathcal{C}_{1},\mathcal{C}_{2}}\left\{  \Upsilon_{l^{\prime
},m^{\prime}}\right\}  \mathbb{E}_{\mathcal{C}_{1},\mathcal{C}_{2}}\left\{
\theta_{l,m}\right\}  \right\}  \\
&  =\sum_{\substack{l^{\prime}\neq l\\m^{\prime}\neq m}}\mathrm{Tr}\left\{
\mathbb{E}_{\mathcal{C}_{1},\mathcal{C}_{2}}\left\{  \Upsilon_{l^{\prime
},m^{\prime}}\right\}  \mathbb{E}_{\mathcal{C}_{2}}\left\{  U^{T}\left(
s_{2}\left(  m\right)  \right)  ^{B^{n}}\hat{\Pi}_{1}\mathbb{E}_{\mathcal{C}%
_{1}}\left\{  U^{T}\left(  s_{1}\left(  l\right)  \right)  ^{A^{n}}\rho
^{A^{n}B^{n}C^{n}}U^{\ast}\left(  s_{1}\left(  l\right)  \right)  ^{A^{n}%
}\right\}  \hat{\Pi}_{1}U^{\ast}\left(  s_{2}\left(  m\right)  \right)
^{B^{n}}\right\}  \right\}  \label{eq:third-term-error}%
\end{align}
Consider the following operator inequalities:%
\begin{multline}
\hat{\Pi}_{1}\mathbb{E}_{\mathcal{C}_{1}}\left\{  U^{T}\left(  s_{1}\left(
l\right)  \right)  ^{A^{n}}\rho^{A^{n}B^{n}C^{n}}U^{\ast}\left(  s_{1}\left(
l\right)  \right)  ^{A^{n}}\right\}  \hat{\Pi}_{1}\\
\leq2^{-n\left(  H\left(  A\right)  _{\rho}-\eta\left(  n,\delta\right)
\right)  }\Pi^{A^{n}}\otimes\Pi^{B^{n}C^{n}}\mathcal{N}\left(  \phi^{A^{\prime
n}}\otimes\psi^{B^{\prime n}B^{n}}\right)  \Pi^{B^{n}C^{n}}\\
\leq2^{-n\left(  H\left(  A\right)  _{\rho}-\eta\left(  n,\delta\right)
\right)  }\Pi^{A^{n}}\otimes\mathcal{N}\left(  \phi^{A^{\prime n}}\otimes
\psi^{B^{\prime n}B^{n}}\right)
\end{multline}
The second inequality follows from the fact that the typical projector
commutes with the state $\rho$ and therefore $\Pi\rho\Pi=\sqrt{\rho}\Pi
\sqrt{\rho}\leq\rho$. Thus the quantity in (\ref{eq:third-term-error}) is
upper bounded by the following one:%
\begin{align}
&  \leq2^{-n\left(  H\left(  A\right)  _{\rho}-\eta\left(  n,\delta\right)
\right)  }\sum_{l^{\prime}\neq l,\ m^{\prime}\neq m}\mathrm{Tr}\left\{
\mathbb{E}_{\mathcal{C}_{1},\mathcal{C}_{2}}\left\{  \Upsilon_{l^{\prime
},m^{\prime}}\right\}  \Pi^{A^{n}}\otimes\mathbb{E}_{\mathcal{C}_{2}}\left\{
U^{T}\left(  s_{2}\left(  m\right)  \right)  ^{B^{n}}\mathcal{N}\left(
\phi^{A^{\prime n}}\otimes\psi^{B^{\prime n}B^{n}}\right)  U^{\ast}\left(
s_{2}\left(  m\right)  \right)  ^{B^{n}}\right\}  \right\}  \\
&  =2^{-n\left(  H\left(  A\right)  _{\rho}-\eta\left(  n,\delta\right)
\right)  }\sum_{l^{\prime}\neq l,\ m^{\prime}\neq m}\mathrm{Tr}\left\{
\mathbb{E}_{\mathcal{C}_{1},\mathcal{C}_{2}}\left\{  \Upsilon_{l^{\prime
},m^{\prime}}\right\}  \Pi^{A^{n}}\otimes\left(  \sum_{t}\pi_{t}^{B^{n}%
}\otimes\mathcal{N}\left(  \phi^{A^{\prime n}}\otimes\pi_{t}^{B^{\prime n}%
}\right)  \right)  \right\}  \\
&  =2^{-n\left(  H\left(  A\right)  _{\rho}-\eta\left(  n,\delta\right)
\right)  }\sum_{\substack{l^{\prime}\neq l\\m^{\prime}\neq m}}\mathrm{Tr}%
\left\{  \mathbb{E}_{\mathcal{C}_{1},\mathcal{C}_{2}}\left\{  \Upsilon
_{l^{\prime},m^{\prime}}\right\}  \hat{\Pi}_{3}\left(  \Pi^{A^{n}}%
\otimes\left(  \sum_{t}\pi_{t}^{B^{n}}\otimes\mathcal{N}\left(  \phi
^{A^{\prime n}}\otimes\pi_{t}^{B^{\prime n}}\right)  \right)  \right)
\hat{\Pi}_{3}\right\}
\end{align}%
\begin{align}
&  \leq2^{-n\left(  H\left(  A\right)  _{\rho}+H\left(  B\right)  _{\rho
}-2\eta\left(  n,\delta\right)  \right)  }\sum_{\substack{l^{\prime}\neq
l\\m^{\prime}\neq m}}\mathrm{Tr}\left\{  \mathbb{E}_{\mathcal{C}%
_{1},\mathcal{C}_{2}}\left\{  \Upsilon_{l^{\prime},m^{\prime}}\right\}
\Pi^{A^{n}}\otimes\Pi^{B^{n}}\otimes\Pi^{C^{n}}\mathcal{N}\left(
\phi^{A^{\prime n}}\otimes\sum_{t}\pi_{t}^{B^{\prime n}}\right)  \Pi^{C^{n}%
}\right\}  \\
&  \leq2^{-n\left(  H\left(  A\right)  _{\rho}+H\left(  B\right)  _{\rho
}+H\left(  C\right)  _{\rho}-2\eta\left(  n,\delta\right)  -c\delta\right)
}\sum_{l^{\prime}\neq l,\ m^{\prime}\neq m}\mathrm{Tr}\left\{  \mathbb{E}%
_{\mathcal{C}_{1},\mathcal{C}_{2}}\left\{  \Upsilon_{l^{\prime},m^{\prime}%
}\right\}  \right\}  \\
&  \leq2^{-n\left(  H\left(  A\right)  _{\rho}+H\left(  B\right)  _{\rho
}+H\left(  C\right)  _{\rho}-2\eta\left(  n,\delta\right)  -c\delta\right)
}\sum_{l^{\prime}\neq l,\ m^{\prime}\neq m}\mathrm{Tr}\left\{  \Pi^{A^{n}%
B^{n}C^{n}}\right\}  \\
&  \leq2^{-n\left(  H\left(  A\right)  _{\rho}+H\left(  B\right)  _{\rho
}+H\left(  C\right)  _{\rho}-H\left(  ABC\right)  _{\rho}-2\eta\left(
n,\delta\right)  -2c\delta\right)  }\left\vert \mathcal{L}\right\vert
\cdot\left\vert \mathcal{M}\right\vert \\
&  =2^{-n\left(  I\left(  AB;C\right)  _{\rho}-2\eta\left(  n,\delta\right)
-2c\delta\right)  }\left\vert \mathcal{L}\right\vert \cdot\left\vert
\mathcal{M}\right\vert
\end{align}
Thus, as long as we choose the message set sizes such that the corresponding
rates obey the inequalities in the statement of the theorem, then this ensures
the existence of a code with vanishing average error probability in the
asymptotic limit of large blocklength $n$.
\end{proof}

\subsection{From Average to Maximal Error}

\label{sec:avg-to-max}The above scheme for entanglement-assisted classical
communication satisfies an average error criterion (as specified in
(\ref{eq:avg-error-crit})), but we would like it to satisfy a stronger maximal
error criterion, where we can guarantee that every message pair has a low
error probability. In the single-sender single-receiver case, the standard
argument is just to invoke Markov's inequality to demonstrate that throwing
away half of the codewords ensures that the error for all codewords is less
than $2\epsilon$ if the original average error probability is less than
$\epsilon~$\cite{book1991cover}. This expurgation then only has a negligible
impact on the rate of the code. We cannot employ such an argument for the
multiple access case because the expurgation does not guarantee that the
resulting expurgated codebook of message pairs decomposes as a product of two
expurgated codebooks. Thus, the argument for average-to-maximal error needs to
be a bit more clever.

Yard \textit{et al}.~introduced a straightforward scheme for constructing a
code with low maximal error from one with low average error~\cite{YHD05MQAC},
based on some ideas in Ref.~\cite{CK11} and some further ideas of their own.
The first idea from Ref.~\cite{CK11} is to suppose that the senders and
receiver have access to uniform common randomness. That is, Alice and Charlie
share some common randomness and so do Bob and Charlie. Let $S$ denote the
Alice-Charlie common randomness and let $T$ denote the Bob-Charlie common
randomness. Based on this common randomness, Alice and Bob each compute $l+S$
and $m+T$, where $l$ is Alice's message and $m$ is Bob's message and the
addition is understood to be modulo the size of the respective message sets.
Alice and Bob then encode according to $l+S$ and $m+T$ and Charlie decodes
these messages. Using his share of the common randomness, he subtracts off $S$
and $T$ to obtain the intended messages $l$ and $m$. Now, the expected error
probability for when Alice and Bob transmit the message pair $\left(
l,m\right)  $, where the expectation is with respect to the common randomness,
becomes as follows:%
\begin{align}
\mathbb{E}_{S,T}\left\{  \mathrm{Tr}\left\{  \left(  I-\Lambda_{l+S,m+T}%
\right)  \sigma_{l+S,m+T}\right\}  \right\}    & =\frac{1}{LM}\sum
_{s,t}\mathrm{Tr}\left\{  \left(  I-\Lambda_{l+s,m+t}\right)  \sigma
_{l+s,m+T}\right\}  \\
& =\frac{1}{LM}\sum_{l,m}\mathrm{Tr}\left\{  \left(  I-\Lambda_{l,m}\right)
\sigma_{l,m}\right\}  .
\end{align}
Thus, it becomes clear that the maximal error criterion for each message pair
$\left(  l,m\right)  $ is equivalent to the average error criterion if the
senders and receiver have access to common randomness.

Yard \textit{et al}.~then take this argument further to show that preshared
common randomness is not actually necessary. The main idea is to divide the
overall number of channel uses into $N+1$ blocks each of length $n$. For the
first round, Alice and Bob use the channel $n$ times to establish common
randomness of respective sizes $2^{nR_{1}}$ and $2^{nR_{2}}$ with Charlie.
Since the common randomness is uniformly distributed and our protocol works
well for the uniform distribution, this round fails with probability no larger
than $\epsilon$. Alice, Bob, and Charlie then use this established common
randomness and the randomized protocol given above for the next $N$ rounds
(the key point is that they can use the same common randomness from the first
round for all of the subsequent $N$ rounds). By choosing $N=1/\sqrt{\epsilon}%
$, the first round establishes common randomness at the negligible rates%
\begin{equation}
\frac{1}{nN}\log2^{nR_{i}}=\sqrt{\epsilon}R_{i},
\end{equation}
while ensuring that the subsequent rounds have an error probability no larger
than $N\epsilon=\sqrt{\epsilon}$. Now, the actual distribution resulting from
the first round is $\epsilon$-close to perfect common randomness, but this
only results in an error probability of $2\sqrt{\epsilon}$ for the $N$-blocked
protocol. The resulting achievable rates for classical communication become
$\left(  1-\sqrt{\epsilon}\right)  R_{i}$ for $i\in\left\{  1,2\right\}  $.
Thus, this blocked scheme shows how to convert a protocol with low average
error probability to one with low maximal error probability.

\subsection{Unassisted Simultaneous Decoding}

A simple corollary of Theorem~\ref{thm:simul-decoder} is a simultaneous
decoder for unassisted classical communication. The proof is virtually
identical to the above proof, but it takes advantage of the proof technique in
Section~III-B of Ref.~\cite{HDW08} (thus we omit the details of the proof).
The below result implies a complete solution of the \textquotedblleft strong
interference\textquotedblright\ case for transmitting classical data over a
quantum interference channel \cite{el2010lecture}, if the encoders are
restricted to product-state inputs. Sen independently obtained a proof of the
below corollary with a different technique~\cite{S11a}.

\begin{corollary}
[Unassisted Simultaneous Decoding]\label{thm:unassisted-simul-decoder}Let
$\mathcal{N}^{A^{\prime}B^{\prime}\rightarrow C}$ be a multiple access channel
that connects Alice and Bob to Charlie, and let%
\begin{equation}
\rho^{XYC}\equiv\sum_{x,y}p_{X}\left(  x\right)  p_{Y}\left(  y\right)
\left\vert x\right\rangle \left\langle x\right\vert ^{X}\otimes\left\vert
y\right\rangle \left\langle y\right\vert ^{Y}\otimes\mathcal{N}^{A^{\prime
}B^{\prime}\rightarrow C}\left(  \rho_{x}^{A^{\prime}}\otimes\sigma
_{y}^{B^{\prime}}\right)  . \label{eq-unassisted-code-state}%
\end{equation}
Then there exists a classical communication code with a corresponding quantum
simultaneous decoder, such that the following rate region is achievable for
$R_{1},R_{2}\geq0$:%
\begin{align}
R_{1}  &  \leq I\left(  X;C|Y\right)  _{\rho},\\
R_{2}  &  \leq I\left(  Y;C|X\right)  _{\rho},\\
R_{1}+R_{2}  &  \leq I\left(  XY;C\right)  _{\rho},
\end{align}
where the entropies are with respect to the state in
(\ref{eq-unassisted-code-state}).
\end{corollary}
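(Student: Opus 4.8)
The plan is to mirror the proof of Theorem~\ref{thm:simul-decoder} essentially line for line, substituting the entanglement-assisted coding structure by the classical-quantum coding structure of Section~III-B of Ref.~\cite{HDW08}. Under the dictionary $A\leftrightarrow X$, $B\leftrightarrow Y$, $C\leftrightarrow C$, the three rate bounds $I(X;C|Y)_{\rho}$, $I(Y;C|X)_{\rho}$, $I(XY;C)_{\rho}$ are the exact images of $I(A;C|B)_{\rho}$, $I(B;C|A)_{\rho}$, $I(AB;C)_{\rho}$ (using $I(X;C|Y)=H(X)+H(YC)-H(XYC)$, which parallels the identity $I(A;C|B)=H(A)+H(BC)-H(ABC)$ exploited in the theorem), so the entropic bookkeeping comes out the same. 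First I would let Alice and Bob draw random codebooks $\{x^{n}(l)\}_{l\in\mathcal{L}}$ and $\{y^{n}(m)\}_{m\in\mathcal{M}}$ with each symbol chosen i.i.d.\ from $p_{X}$ and $p_{Y}$; the codeword for the pair $(l,m)$ is the product input $\rho_{x^{n}(l)}^{A^{\prime n}}\otimes\sigma_{y^{n}(m)}^{B^{\prime n}}$, producing the output $\sigma_{l,m}\equiv\mathcal{N}^{\otimes n}(\rho_{x^{n}(l)}^{A^{\prime n}}\otimes\sigma_{y^{n}(m)}^{B^{\prime n}})$ at Charlie's end. The role played by the random Heisenberg--Weyl encoders and the transpose trick in Theorem~\ref{thm:simul-decoder} is now played by this i.i.d.\ selection of classical codewords.

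Next I would build the decoding POVM exactly as in the theorem, but with all projectors living on the output system $C^{n}$ alone, since here Charlie holds no share of any entanglement. In place of the partial typical projectors $\hat\Pi_{1},\hat\Pi_{2},\hat\Pi_{3}$ and the full projector $\Pi^{A^{n}B^{n}C^{n}}$, I would use the fully typical projector $\Pi^{C^{n}}$ for the averaged output $\sum_{x,y}p_{X}(x)p_{Y}(y)\,\mathcal{N}(\rho_{x}\otimes\sigma_{y})$ together with the conditionally typical projectors $\Pi_{x^{n}}^{C^{n}}$, $\Pi_{y^{n}}^{C^{n}}$, and $\Pi_{x^{n},y^{n}}^{C^{n}}$ obtained by conditioning on Alice's sequence, Bob's sequence, and both. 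Setting $\Upsilon_{l,m}$ to the natural analogue of the operator in the theorem and forming the square-root (pretty-good) measurement $\Lambda_{l,m}\equiv(\sum_{l',m'}\Upsilon_{l',m'})^{-1/2}\Upsilon_{l,m}(\sum_{l',m'}\Upsilon_{l',m'})^{-1/2}$, I would apply the Hayashi--Nagaoka inequality~\cite{hayashi2003general} together with Sen's bounding trick (passing to an operator $\theta_{l,m}$ via the gentle operator lemma, exactly as in the step leading to (\ref{eq:after-HN})), reducing the analysis to the same contributions: a ``correct term'' $\mathrm{Tr}\{(I-\Upsilon_{l,m})\theta_{l,m}\}$ that vanishes by conditional typicality, and the three ``wrong-term'' sums over $l'\neq l$, over $m'\neq m$, and over both.

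The heart of the argument is the expectation over the random codebooks in these three sums. Because distinct codewords are independent, I would factor each relevant expectation into a product (e.g.\ $\mathbb{E}\{\Upsilon_{l',m}\,\theta_{l,m}\}=\mathbb{E}\{\Upsilon_{l',m}\}\,\mathbb{E}\{\theta_{l,m}\}$ for $l'\neq l$) and then establish the $C^{n}$-side analogue of the sandwich bound (\ref{eq:pi1sandwich}). Here the supply of exponential factors differs from the entanglement-assisted proof: rather than an averaged state becoming maximally mixed on type-class subspaces, it is the i.i.d.\ average of the conditional output states together with the trace bounds on the conditionally typical projectors (such as $\mathrm{Tr}\{\Pi_{x^{n},y^{n}}^{C^{n}}\}\le 2^{n(H(C|XY)+\delta)}$ and $\Pi^{C^{n}}\rho^{C^{n}}\Pi^{C^{n}}\le 2^{-n(H(C)-\delta)}\Pi^{C^{n}}$, together with their conditional refinements) that produce the desired scaling. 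Carrying these through yields per-term bounds $2^{-n(I(X;C|Y)-\delta')}$, $2^{-n(I(Y;C|X)-\delta')}$, and $2^{-n(I(XY;C)-\delta')}$, which when multiplied by $|\mathcal{L}|$, $|\mathcal{M}|$, and $|\mathcal{L}||\mathcal{M}|$ respectively drive the expected average error to zero under the stated rate constraints (using $H(X)+H(Y)=H(XY)$ by independence). Finally, the conversion from average to maximal error carries over verbatim from Section~\ref{sec:avg-to-max}.

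I expect the main obstacle to be precisely this re-derivation of the sandwich inequality: in Theorem~\ref{thm:simul-decoder} the key $2^{-nH(A)}$-type factor was manufactured by the maximally mixed structure that the Heisenberg--Weyl randomization imposes on the type-class subspaces, and there is no such structure here. One must instead verify that the correct conditional-entropy exponents for $C^{n}$ emerge cleanly from the i.i.d.\ codeword averages and the nesting of the conditionally typical projectors, and check that the pinching by $\Pi^{C^{n}}$ and the cross terms behave as in the quantum case. This is exactly the content imported from Section~III-B of Ref.~\cite{HDW08}, which is why that reference is invoked and why the remaining steps can be omitted.
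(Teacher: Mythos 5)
Your proposal matches the paper's intended argument: the paper omits the details, stating only that the proof is virtually identical to that of Theorem~\ref{thm:simul-decoder} with the coding technique of Section~III-B of Ref.~\cite{HDW08}, which is precisely the substitution you carry out (i.i.d.\ classical codewords in place of the Heisenberg--Weyl encoders, conditionally typical projectors on $C^{n}$ supplying the sandwich bounds that the maximally mixed type-class structure provided in the entanglement-assisted case). You also correctly identify the one step that genuinely needs re-verification---that the analogue of the bound (\ref{eq:pi1sandwich}) now follows from $\mathbb{E}_{x^{n}}\{\rho_{x^{n},y^{n}}\}$ together with the standard operator bounds on conditionally typical projectors---so the proposal is sound and follows the same route as the paper.
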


\section{Quantum communication over a quantum multiple access channel}

In this section, we recover the previously known achievable rate regions for
assisted and unassisted quantum communication over a multiple access channel
\cite{HDW08,nature2005horodecki,YHD05MQAC}. We do so by employing a coherent
version of the protocol from Section~\ref{sec:ea-simul} (many researchers have
often employed this approach in quantum Shannon
theory~\cite{Har03,DHW05RI,HDW08,W11}). Different from prior work, we show
that this region can be achieved without the need for time sharing---the
simultaneous nature of our decoding scheme guarantees this. Our scheme below
achieves quantum communication by employing the blocked protocol from
Section~\ref{sec:avg-to-max}\ (this is again related to the average versus
maximal error issue).

We first recall the resource inequality formalism of Devetak \textit{et
al}.~\cite{DHW05RI}. We denote one noiseless classical bit channel from Alice
to Bob as $\left[  c\rightarrow c\right]  _{AB}$, one noiseless qubit channel
from Alice to Bob as $\left[  q\rightarrow q\right]  _{AB}$, and one ebit of
entanglement shared between Alice and Bob as $\left[  qq\right]  _{AB}$. We
will also be using a coherent channel from Alice to Bob, which is defined to
implement the map \cite{Har03}:%
\begin{equation}
\left\vert i\right\rangle ^{A}\rightarrow\left\vert i\right\rangle
^{A}\left\vert i\right\rangle ^{B}.
\end{equation}
We denote this communication resource as $\left[  q\rightarrow qq\right]
_{AB}$. Note that a coherent channel is a stronger resource than a classical
channel because it can simulate a classical channel if Alice only sends
computational basis states through it. Furthermore, let $\left\langle
\mathcal{N}\right\rangle $ denote one use of a multiple access channel
$\mathcal{N}^{A^{\prime}B^{\prime}\rightarrow C}$. Resource inequalities
describe ways of consuming some communication resources in order to create
others. For example, the protocol described in Section 4 implements the
following resource inequality:%
\begin{equation}
\left\langle \mathcal{N}\right\rangle +H\left(  A\right)  _{\rho}\left[
qq\right]  _{AC}+H\left(  B\right)  _{\rho}\left[  qq\right]  _{BC}\geq
R_{1}\left[  c\rightarrow c\right]  _{AC}+R_{2}\left[  c\rightarrow c\right]
_{BC}.
\end{equation}

We upgrade our protocol for entanglement-assisted classical communication from
the previous section to one for entanglement-assisted coherent communication
with two senders and one receiver.

\begin{theorem}
The following resource inequality corresponds to an achievable coherent
simultaneous decoding protocol for entanglement-assisted coherent
communication over a noisy multiple access quantum channel$~\mathcal{N}$:%
\begin{equation}
\left\langle \mathcal{N}\right\rangle +H\left(  A\right)  _{\rho}\left[
qq\right]  _{AC}+H\left(  B\right)  _{\rho}\left[  qq\right]  _{BC}\geq
R_{1}\left[  q\rightarrow qq\right]  _{AC}+R_{2}\left[  q\rightarrow
qq\right]  _{BC},
\end{equation}
where%
\begin{equation}
\rho^{ABC}\equiv\mathcal{N}^{A^{\prime}B^{\prime}\rightarrow C}\left(
\phi^{A^{\prime}A}\otimes\psi^{B^{\prime}B}\right)  ,\label{eq:sec5rho}%
\end{equation}
as long as%
\begin{align}
R_{1} &  \leq I\left(  A;C|B\right)  _{\rho},\label{eq:ea-rate-region-cond1}\\
R_{2} &  \leq I\left(  B;C|A\right)  _{\rho},\label{eq:ea-rate-region-cond2}\\
R_{1}+R_{2} &  \leq I\left(  AB;C\right)  _{\rho}%
.\label{eq:ea-rate-region-cond3}%
\end{align}
The entropies are with respect to the state in (\ref{eq:sec5rho}).
\end{theorem}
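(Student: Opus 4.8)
The plan is to upgrade the entanglement-assisted classical simultaneous decoding protocol of Theorem~\ref{thm:simul-decoder} to a coherent one, following the standard coherification recipe of quantum Shannon theory~\cite{Har03,DHW05RI}. The starting point is the protocol whose simultaneous decoding POVM $\{\Lambda_{l,m}\}$ achieves $R_1\le I(A;C|B)_\rho$, $R_2\le I(B;C|A)_\rho$, and $R_1+R_2\le I(AB;C)_\rho$ with vanishing \emph{average} error. Because a coherent decoder must succeed on every branch of a superposition of message pairs---not merely on average---I would first invoke the blocked protocol of Section~\ref{sec:avg-to-max} to obtain a code with vanishing \emph{maximal} error at rates $(1-\sqrt\epsilon)R_i$, which is asymptotically lossless and is precisely the reason that section was developed.

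Next I would coherify the encoding. Rather than fixing classical messages $l$ and $m$, Alice and Bob each prepare a message register in an arbitrary superposition and apply their message-controlled encoders coherently---Alice applying $\sum_l \ket{l}\bra{l}^{A_1}\otimes U(s_1(l))^{A^{\prime n}}$ to her share $A^{\prime n}$ of the entanglement, and Bob doing likewise on $B^{\prime n}$. After the transpose trick and the channel, this produces a superposition of the codeword states $\sigma_{l,m}$ entangled with the message registers. I would then replace Charlie's measurement by the isometry obtained from coherifying the square-root decoder, $V:\ket{\psi}^{C^n}\mapsto \sum_{l,m}\sqrt{\Lambda_{l,m}}\,\ket{\psi}^{C^n}\otimes\ket{l}^{\hat C_1}\ket{m}^{\hat C_2}$, which is an isometry since $\sum_{l,m}\Lambda_{l,m}=I$ and which coherently records the decoded pair into two fresh registers instead of destroying the superposition.

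The heart of the argument is a fidelity estimate: I would show that the composite map (coherent encoding, channel, coherent decoding) approximates the ideal coherent-copying isometry $\ket{l}^{A_1}\ket{m}^{B_1}\mapsto \ket{l}^{A_1}\ket{l}^{\hat C_1}\otimes\ket{m}^{B_1}\ket{m}^{\hat C_2}$ up to vanishing error. This follows from the small maximal error together with a coherent reading of the Gentle Operator Lemma (Lemma~9.4.2 of Ref.~\cite{W11}): since $\mathrm{Tr}\{\Lambda_{l,m}\sigma_{l,m}\}\ge 1-\epsilon$ for \emph{every} pair, the post-decoding state on each branch is $O(\sqrt\epsilon)$-close to the correctly labelled state, and these branchwise estimates combine to give high fidelity on the full superposition. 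The upgrade from $[c\rightarrow c]$ to $[q\rightarrow qq]$ is exactly the statement that the message registers of Alice (resp.\ Bob) and Charlie end up coherently copied; accounting for the $H(A)_\rho$ ebits consumed between Alice and Charlie and the $H(B)_\rho$ ebits between Bob and Charlie then yields the stated resource inequality.

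I expect the main obstacle to be the error analysis of the coherent decoder rather than the encoding. The delicate points are: (i) guaranteeing that coherence across the superposition of message pairs is preserved, which is precisely why the maximal-error (blocked) version is required and why an average-error bound alone would not suffice; and (ii) verifying that the \emph{simultaneous} structure of $\{\Lambda_{l,m}\}$, as opposed to a successive decoder, lets both coherent channels be established by a single coherent measurement, so that no time-sharing is needed. Controlling the residual terms $\sqrt{\Lambda_{l',m'}}\,\ket{\psi}$ for wrong pairs $(l',m')\neq(l,m)$ is what must be handled, and this is governed by the rate constraints through the same typical-projector estimates already used in the proof of Theorem~\ref{thm:simul-decoder}.
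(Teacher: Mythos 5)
Your overall route coincides with the paper's: coherify the encoders of Theorem~\ref{thm:simul-decoder} as message-controlled unitaries, replace the POVM by the coherent square-root measurement $\sum_{p,q}\left(\sqrt{\Lambda_{p,q}}\right)\otimes\ket{p}\ket{q}$, bound the fidelity via a gentle-measurement-type estimate, and lean on the blocked scheme of Section~\ref{sec:avg-to-max} for the average-versus-maximal error issue. However, there is a genuine gap in the decoding step: the composite map you describe does \emph{not} approximate the ideal coherent-copying isometry, contrary to your fidelity claim. After Charlie's coherent measurement the state is (up to small error)
\[
\sum_{j,k,l,m}\alpha_{j,l}\beta_{k,m}\ket{j}^{R_{A}}\ket{l}^{A_{1}}\ket{k}^{R_{B}}\ket{m}^{B_{1}}\left(U^{T}\left(s_{1}\left(l+S\right)\right)^{A^{n}}\otimes U^{T}\left(s_{2}\left(m+T\right)\right)^{B^{n}}\right)\ket{\varphi}^{A^{n}B^{n}C^{n}E^{n}}\ket{l+S}^{A_{2}}\ket{m+T}^{B_{2}},
\]
so the entanglement shares $A^{n}B^{n}$ still carry message-dependent unitaries that are entangled with the message registers; without removing them, tracing out those systems decoheres the superposition and no coherent channel is implemented. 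The paper closes this with a step you omit: Charlie (who holds $A^{n}$, $B^{n}$, and the decoded registers) applies the controlled unitary $\sum_{l,m}\ket{l}\bra{l}^{A_{2}}\otimes\ket{m}\bra{m}^{B_{2}}\otimes\left(U^{\ast}\left(s_{1}\left(l\right)\right)^{A^{n}}\otimes U^{\ast}\left(s_{2}\left(m\right)\right)^{B^{n}}\right)$, followed by Pauli shifts $X^{A_{2}}\left(-S\right)$ and $X^{B_{2}}\left(-T\right)$, which returns the side systems to the fixed state $\ket{\varphi}$ and corrects the labels from $l+S$, $m+T$ to $l$, $m$.

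A second, related imprecision is your first step: you assume a fixed code with $\mathrm{Tr}\left\{\Lambda_{l,m}\sigma_{l,m}\right\}\geq1-\epsilon$ for \emph{every} pair, but Section~\ref{sec:avg-to-max} does not supply such a deterministic code---the paper stresses that expurgation fails for the multiple access channel. What it supplies is a randomness-assisted protocol whose error for each fixed pair is small only in expectation over the common randomness $\left(S,T\right)$. Consequently the randomness must be carried \emph{inside} the coherent protocol (encoding by $l+S$ and $m+T$ via additional controlled unitaries conditioned on $S$ and $T$, and the shift corrections above), and your fidelity estimate becomes an \emph{expected} fidelity $\frac{1}{LM}\sum_{s,t}\left(\cdots\right)\geq1-\epsilon$; this is exactly where the average-error criterion (\ref{eq:good-code-avg}) suffices, since the cross terms between branches vanish by orthogonality of the registers $\ket{l+s}$, $\ket{m+t}$ and $\bra{\varphi}U^{\ast}\sqrt{\Lambda_{l+s,m+t}}\,U^{T}\ket{\varphi}\geq\mathrm{Tr}\left\{\Lambda_{l+s,m+t}\sigma_{l+s,m+t}\right\}$ because $\sqrt{\Lambda}\geq\Lambda$ for $0\leq\Lambda\leq I$. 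Finally, note that the paper takes $N=\epsilon^{-1/4}$ rounds (not $1/\sqrt{\epsilon}$ as in the classical case), since each coherent round disturbs the state by $2\sqrt{\epsilon}$, giving total disturbance $2\epsilon^{1/4}$.
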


\begin{proof}
We again exploit the blocked protocol from Section~\ref{sec:avg-to-max}. We
assume that Alice, Bob, and Charlie have already established their common
randomness, and we describe how the protocol operates for the first of the $N$
rounds (the round after the one that establishes common randomness). Let $S$
denote the Alice-Charlie common randomness, and let $T$ denote the Bob-Charlie
common randomness.

Suppose that Alice shares a state with a reference system $R_{A}$:
\begin{equation}
\sum_{j,l=1}^{L}\alpha_{j,l}\left\vert j\right\rangle ^{R_{A}}\left\vert
l\right\rangle ^{A_{1}},
\end{equation}
where $\left\{  \left\vert j\right\rangle \right\}  $ and $\left\{  \left\vert
l\right\rangle \right\}  $ are some orthonormal bases for $R_{A}$ and $A_{1}$
respectively. Similarly, Bob also shares a state with a reference system
$R_{B}$:%
\begin{equation}
\sum_{k,m=1}^{M}\beta_{k,m}\left\vert k\right\rangle ^{R_{B}}\left\vert
m\right\rangle ^{B_{1}},
\end{equation}
where $\left\{  \left\vert k\right\rangle \right\}  $ and $\left\{  \left\vert
m\right\rangle \right\}  $ are some orthonormal bases for $R_{B}$ and $B_{1}$
respectively. The parameters $L$ and $M$ for these states are chosen such that%
\begin{align}
R_{1} &  \approx\frac{1}{n}\log L\leq I\left(  A;C|B\right)  _{\rho
},\label{eq:Schmidt-region-1}\\
R_{2} &  \approx\frac{1}{n}\log M\leq I\left(  B;C|A\right)  _{\rho},\\
R_{1}+R_{2} &  \approx\frac{1}{n}\log\left(  LM\right)  \leq I\left(
AB;C\right)  _{\rho}.\label{eq:Schmidt-region-3}%
\end{align}
Alice would like to simulate the action of a coherent channel on her system
$A_{1}$ to a system $A_{2}$ for Charlie:%
\begin{equation}
\sum_{j,l}\alpha_{j,l}\left\vert j\right\rangle ^{R_{A}}\left\vert
l\right\rangle ^{A_{1}}\rightarrow\sum_{j,l}\alpha_{j,l}\left\vert
j\right\rangle ^{R_{A}}\left\vert l\right\rangle ^{A_{1}}\left\vert
l\right\rangle ^{A_{2}},
\end{equation}
and Bob would like to do the same. We demand that they simulate these
resources with vanishing error in the limit of many channel uses. As before,
Alice and Charlie share many copies of a pure entangled state $\left\vert
\phi\right\rangle ^{AA^{\prime}}$, Bob and Charlie share many copies of
$\left\vert \psi\right\rangle ^{BB^{\prime}}$, and they all have access to
many uses of a noisy multiple access channel $\mathcal{N}^{A^{\prime}%
B^{\prime}\rightarrow C}$.

They have their encoding unitaries $\{U\left(  s_{1}\left(  l\right)  \right)
^{A^{\prime n}}\}_{l}$ and $\{U\left(  s_{2}\left(  m\right)  \right)
^{B^{\prime n}}\}_{m}$ as described in Section~\ref{sec:ea-simul}, and they
employ them now as the following controlled unitaries that act on the systems
$A_{1}$ and $B_{1}$ and their shares of the entanglement:%
\begin{align}
&  \sum_{l}\left\vert l\right\rangle \left\langle l\right\vert ^{A_{1}}\otimes
U\left(  s_{1}\left(  l\right)  \right)  ^{A^{\prime n}},\\
&  \sum_{m}\left\vert m\right\rangle \left\langle m\right\vert ^{B_{1}}\otimes
U\left(  s_{2}\left(  m\right)  \right)  ^{B^{\prime n}}.
\end{align}
The resulting global state after applying these unitaries and the transpose
trick is as follows:%
\begin{equation}
\left(  \sum_{j,l}\alpha_{j,l}\left\vert j\right\rangle ^{R_{A}}\left\vert
l\right\rangle ^{A_{1}}U^{T}\left(  s_{1}\left(  l\right)  \right)  ^{A^{n}%
}\left\vert \phi\right\rangle ^{A^{n}A^{\prime n}}\right)  \otimes\left(
\sum_{k,m}\beta_{k,m}\left\vert k\right\rangle ^{R_{B}}\left\vert
m\right\rangle ^{B_{1}}U^{T}\left(  s_{2}\left(  m\right)  \right)  ^{B^{n}%
}\left\vert \psi\right\rangle ^{B^{n}B^{\prime n}}\right)  .
\end{equation}
Alice and Bob both then apply the respective unitaries $U\left(  s_{1}\left(
S\right)  \right)  ^{A^{\prime n}}$ and $U\left(  s_{2}\left(  T\right)
\right)  ^{B^{\prime n}}$, conditional on their common randomness shared with
Charlie. The resulting state is%
\begin{equation}
\left(  \sum_{j,l}\alpha_{j,l}\left\vert j\right\rangle ^{R_{A}}\left\vert
l\right\rangle ^{A_{1}}U^{T}\left(  s_{1}\left(  l+S\right)  \right)  ^{A^{n}%
}\left\vert \phi\right\rangle ^{A^{n}A^{\prime n}}\right)  \otimes\left(
\sum_{k,m}\beta_{k,m}\left\vert k\right\rangle ^{R_{B}}\left\vert
m\right\rangle ^{B_{1}}U^{T}\left(  s_{2}\left(  m+T\right)  \right)  ^{B^{n}%
}\left\vert \psi\right\rangle ^{B^{n}B^{\prime n}}\right)  ,
\end{equation}
where the addition $l+S$ and $m+T$ is modulo $L$ and $M$, respectively. They
both then send their shares of the states over the multiple access channel
$\mathcal{N}^{A^{\prime}B^{\prime}\rightarrow C}$, whose isometric extension
is $U_{\mathcal{N}}^{A^{\prime}B^{\prime}\rightarrow CE}$ and acts on
$\left\vert \phi\right\rangle ^{A^{n}A^{\prime n}}\otimes\left\vert
\psi\right\rangle ^{B^{n}B^{\prime n}}$ as follows:%
\begin{equation}
\left\vert \varphi\right\rangle ^{A^{n}B^{n}C^{n}E^{n}}\equiv U_{\mathcal{N}%
}^{A^{\prime n}B^{\prime n}\rightarrow C^{n}E^{n}}\left(  \left\vert
\phi\right\rangle ^{A^{n}A^{\prime n}}\otimes\left\vert \psi\right\rangle
^{B^{n}B^{\prime n}}\right)  .
\end{equation}
After the transmission, the overall state becomes%
\begin{equation}
\sum_{j,k,l,m}\alpha_{j,l}\beta_{k,m}\left\vert j\right\rangle ^{R_{A}%
}\left\vert l\right\rangle ^{A_{1}}\left\vert k\right\rangle ^{R_{B}%
}\left\vert m\right\rangle ^{B_{1}}\left(  U^{T}\left(  s_{1}\left(
l+S\right)  \right)  ^{A^{n}}\otimes U^{T}\left(  s_{2}\left(  m+T\right)
\right)  ^{B^{n}}\right)  \left\vert \varphi\right\rangle ^{A^{n}B^{n}%
C^{n}E^{n}}.\label{eq:state-after-mac}%
\end{equation}
Charlie performs the following coherent measurement constructed from the POVM
$\left\{  \Lambda_{p,q}\right\}  $ of Section~\ref{sec:ea-simul}:%
\begin{equation}
\Upsilon=\sum_{p,q}\left(  \sqrt{\Lambda_{p,q}}\right)  ^{A^{n}B^{n}C^{n}%
}\otimes\left\vert p\right\rangle ^{A_{2}}\otimes\left\vert q\right\rangle
^{B_{2}}.
\end{equation}
Given that the original POVM is good on average, in the sense that%
\begin{equation}
\frac{1}{LM}\sum_{l,m}\mathrm{Tr}\left\{  \Lambda_{l,m}\sigma_{l,m}\right\}
\geq1-\epsilon,\label{eq:good-code-avg}%
\end{equation}
for all $\epsilon>0$ and sufficiently large $n$ (where each $\sigma_{l,m}$ is
an entanglement-assisted quantum codeword as before), this coherent
measurement also has little effect on the received state while coherently
copying the basis states in registers $A_{1}$ and $B_{1}$. That is, the
expected fidelity overlap between the states%
\begin{equation}
\Upsilon^{A^{n}B^{n}C^{n}A_{2}B_{2}}\left\vert \omega\right\rangle ^{F},
\end{equation}
and%
\begin{equation}
\sum_{j,k,l,m}\alpha_{j,l}\beta_{k,m}\left\vert j\right\rangle ^{R_{A}%
}\left\vert l\right\rangle ^{A_{1}}\left\vert k\right\rangle ^{R_{B}%
}\left\vert m\right\rangle ^{B_{1}}\left(  U^{T}\left(  s_{1}\left(
l+S\right)  \right)  ^{A^{n}}\otimes U^{T}\left(  s_{2}\left(  m+T\right)
\right)  ^{B^{n}}\right)  \left\vert \varphi\right\rangle ^{A^{n}B^{n}%
C^{n}E^{n}}\left\vert l+S\right\rangle ^{A_{2}}\left\vert m+T\right\rangle
^{B_{2}}%
\end{equation}
is larger than $1-\epsilon$, where $\left\vert \omega\right\rangle $ denotes
the state in (\ref{eq:state-after-mac}), the system $F$ denotes all the
systems $A_{1}B_{1}A^{n}B^{n}C^{n}E^{n}$, and the expectation is with respect
to the common randomness $S$ and $T$. To see why this is true, consider the
following chain of inequalities:%
\begin{align}
&  \frac{1}{LM}\sum_{s,t}\sum_{j,k,l,m}\alpha_{j,l}^{\ast}\beta_{k,m}^{\ast
}\left\langle j\right\vert ^{R_{A}}\left\langle l\right\vert ^{A_{1}%
}\left\langle k\right\vert ^{R_{B}}\left\langle m\right\vert ^{B_{1}%
}\left\langle \varphi\right\vert \left(  U^{\ast}\left(  s_{1}\left(
l+s\right)  \right)  ^{A^{n}}\otimes U^{\ast}\left(  s_{2}\left(  m+t\right)
\right)  ^{B^{n}}\right)  \left\langle l+s\right\vert ^{A_{2}}\left\langle
m+t\right\vert ^{B_{2}}\nonumber\\
&  \left(  \sum_{p,q}\left(  \sqrt{\Lambda_{p,q}}\right)  ^{A^{n}B^{n}C^{n}%
}\left\vert p\right\rangle ^{A_{2}}\left\vert q\right\rangle ^{B_{2}}\right)
\nonumber\\
&  \sum_{j^{\prime},k^{\prime},l^{\prime},m^{\prime}}\alpha_{j^{\prime
},l^{\prime}}\beta_{k^{\prime},m^{\prime}}\left\vert j^{\prime}\right\rangle
^{R_{A}}\left\vert l^{\prime}\right\rangle ^{A_{1}}\left\vert k^{\prime
}\right\rangle ^{R_{B}}\left\vert m^{\prime}\right\rangle ^{B_{1}}\left(
U^{T}\left(  s_{1}\left(  l^{\prime}+s\right)  \right)  ^{A^{n}}\otimes
U^{T}\left(  s_{2}\left(  m^{\prime}+t\right)  \right)  ^{B^{n}}\right)
\left\vert \varphi\right\rangle \nonumber\\
&  =\frac{1}{LM}\sum_{s,t}\sum_{j,k,l,m}\left\vert \alpha_{j,l}\right\vert
^{2}\left\vert \beta_{k,m}\right\vert ^{2}\left\langle \varphi\right\vert
\left(  U^{\ast}\left(  s_{1}\left(  l+s\right)  \right)  ^{A^{n}}\otimes
U^{\ast}\left(  s_{2}\left(  m+t\right)  \right)  ^{B^{n}}\right)
\sqrt{\Lambda_{l+s,m+t}}\ \ \times\nonumber\\
&  \ \ \ \ \ \ \ \ \ \ \ \ \ \ \ \left(  U^{T}\left(  s_{1}\left(  l+s\right)
\right)  ^{A^{n}}\otimes U^{T}\left(  s_{2}\left(  m+t\right)  \right)
^{B^{n}}\right)  \left\vert \varphi\right\rangle \\
&  =\sum_{j,k,l,m}\left\vert \alpha_{j,l}\right\vert ^{2}\left\vert
\beta_{k,m}\right\vert ^{2}\frac{1}{LM}\sum_{s,t}\left\langle \varphi
\right\vert \left(  U^{\ast}\left(  s_{1}\left(  l+s\right)  \right)  ^{A^{n}%
}\otimes U^{\ast}\left(  s_{2}\left(  m+t\right)  \right)  ^{B^{n}}\right)
\sqrt{\Lambda_{l+s,m+t}}\ \ \times\nonumber\\
&  \ \ \ \ \ \ \ \ \ \ \ \ \ \ \ \left(  U^{T}\left(  s_{1}\left(  l+s\right)
\right)  ^{A^{n}}\otimes U^{T}\left(  s_{2}\left(  m+t\right)  \right)
^{B^{n}}\right)  \left\vert \varphi\right\rangle \\
&  \geq\sum_{j,k,l,m}\left\vert \alpha_{j,l}\right\vert ^{2}\left\vert
\beta_{k,m}\right\vert ^{2}\times\nonumber\\
&  \frac{1}{LM}\sum_{s,t}\text{Tr}\left\{  \left(  U^{T}\left(  s_{1}\left(
l+s\right)  \right)  ^{A^{n}}\otimes U^{T}\left(  s_{2}\left(  m+t\right)
\right)  ^{B^{n}}\right)  \left\vert \varphi\right\rangle \left\langle
\varphi\right\vert \left(  U^{\ast}\left(  s_{1}\left(  l+s\right)  \right)
^{A^{n}}\otimes U^{\ast}\left(  s_{2}\left(  m+t\right)  \right)  ^{B^{n}%
}\right)  \Lambda_{l+s,m+t}\right\}  \\
&  \geq\sum_{j,k,l,m}\left\vert \alpha_{j,l}\right\vert ^{2}\left\vert
\beta_{k,m}\right\vert ^{2}\left(  1-\epsilon\right)  \\
&  =1-\epsilon
\end{align}
where the last inequality follows from (\ref{eq:good-code-avg}). Thus, the
resulting state is $2\sqrt{\epsilon}$-close in expected trace distance to the
following state:%
\begin{equation}
\sum_{j,k,l,m}\alpha_{j,l}\beta_{k,m}\left\vert j\right\rangle ^{R_{A}%
}\left\vert l\right\rangle ^{A_{1}}\left\vert k\right\rangle ^{R_{B}%
}\left\vert m\right\rangle ^{B_{1}}\left(  U^{T}\left(  s_{1}\left(
l+S\right)  \right)  ^{A^{n}}\otimes U^{T}\left(  s_{2}\left(  m+T\right)
\right)  ^{B^{n}}\right)  \left\vert \varphi\right\rangle ^{A^{n}B^{n}%
C^{n}E^{n}}\left\vert l+S\right\rangle ^{A_{2}}\left\vert m+T\right\rangle
^{B_{2}}.
\end{equation}
Now Charlie performs the following controlled unitary:%
\begin{equation}
\sum_{l,m}\left\vert l\right\rangle \left\langle l\right\vert ^{A_{2}}%
\otimes\left\vert m\right\rangle \left\langle m\right\vert ^{B_{2}}%
\otimes\left(  U^{\ast}\left(  s_{1}\left(  l\right)  \right)  ^{A^{n}}\otimes
U^{\ast}\left(  s_{2}\left(  m\right)  \right)  ^{B^{n}}\right)  ,
\end{equation}
and the resulting state is as follows:%
\begin{equation}
\sum_{j,k,l,m}\alpha_{j,l}\beta_{k,m}\left\vert j\right\rangle ^{R_{A}%
}\left\vert l\right\rangle ^{A_{1}}\left\vert k\right\rangle ^{R_{B}%
}\left\vert m\right\rangle ^{B_{1}}\left\vert \varphi\right\rangle
^{A^{n}B^{n}C^{n}E^{n}}\left\vert l+S\right\rangle ^{A_{2}}\left\vert
m+T\right\rangle ^{B_{2}}%
\end{equation}
Charlie then performs the generalized Pauli shifts $X^{A_{2}}\left(
-S\right)  $ and $X^{B_{2}}\left(  -T\right)  $ (based on his common
randomness) to produce the state%
\begin{equation}
\left(  \sum_{j,l}\alpha_{j,l}\left\vert j\right\rangle ^{R_{A}}\left\vert
l\right\rangle ^{A_{1}}\left\vert l\right\rangle ^{A_{2}}\right)
\otimes\left(  \sum_{k,m}\beta_{k,m}\left\vert k\right\rangle ^{R_{B}%
}\left\vert m\right\rangle ^{B_{1}}\left\vert m\right\rangle ^{B_{2}}\right)
\otimes\left\vert \varphi\right\rangle ^{A^{n}B^{n}C^{n}E^{n}},
\end{equation}
so that Alice and Bob have successfully generated coherent channels with the
receiver Charlie for this round.

The above scheme constitutes the first of the $N$ blocks after establishing
the common randomness. Alice, Bob, and Charlie perform the same scheme for the
next $N-1$ blocks, and they use the same common randomness for each round. For
similar reasons as given at the end of Section~\ref{sec:avg-to-max}, this
scheme works well if we set the number $N$ of rounds equal to $\epsilon
^{-1/4}$ (we require $\epsilon^{-1/4}$ this time because each round disturbs
the state by $2\sqrt{\epsilon}$ so that the overall disturbance for all $N$
rounds is no larger than $N\left(  2\sqrt{\epsilon}\right)  =2\epsilon^{1/4}$).
\end{proof}

The coherent communication identity is a helpful tool in quantum Shannon
theory, and it results from the protocols coherent teleportation and coherent
super-dense coding~\cite{Har03,W11}. It states that two coherent channels are
equivalent to a noiseless quantum channel and noiseless entanglement:%
\begin{equation}
2\log d\left[  q\rightarrow qq\right]  =\log d\left[  q\rightarrow q\right]
+\log d\left[  qq\right]  ,
\end{equation}
where $d$ is the dimension of the underlying systems. Employing this identity
gives us the following achievable rate region for entanglement-assisted
quantum communication:

\begin{corollary}
There exists an entanglement-assisted quantum communication protocol with a
coherent quantum simultaneous decoder if the rates $\widetilde{R}_{1}$ and
$\widetilde{R}_{2}$ of quantum communication satisfy the following
inequalities:%
\begin{align}
\widetilde{R}_{1} &  \leq\frac{1}{2}I\left(  A;C|B\right)  _{\rho},\\
\widetilde{R}_{2} &  \leq\frac{1}{2}I\left(  B;C|A\right)  _{\rho},\\
\widetilde{R}_{1}+\widetilde{R}_{2} &  \leq\frac{1}{2}I\left(  AB;C\right)
_{\rho}.
\end{align}

\end{corollary}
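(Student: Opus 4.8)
The plan is to combine the resource inequality established in the preceding theorem with the coherent communication identity stated just above it. That theorem gives us an entanglement-assisted protocol that, per channel use, generates coherent channels $[q\rightarrow qq]_{AC}$ and $[q\rightarrow qq]_{BC}$ at rates $R_1$ and $R_2$ from Alice and Bob to Charlie, valid whenever $(R_1,R_2)$ lies in the pentagonal region bounded by $I(A;C|B)_\rho$, $I(B;C|A)_\rho$, and $I(AB;C)_\rho$. Since two coherent channels are equivalent to one noiseless qubit channel plus one ebit, the idea is simply to feed the output coherent channels of that protocol into the identity to extract noiseless quantum communication.

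First I would apply the coherent communication identity $2[q\rightarrow qq]=[q\rightarrow q]+[qq]$ separately to each sender's share of coherent channels on the right-hand side of the theorem's resource inequality. Each block of $R_i$ coherent channels is thereby converted into $R_i/2$ noiseless qubit channels together with $R_i/2$ fresh ebits between that sender and Charlie, so that the right-hand side $R_1[q\rightarrow qq]_{AC}+R_2[q\rightarrow qq]_{BC}$ is rewritten as $\tfrac{R_1}{2}[q\rightarrow q]_{AC}+\tfrac{R_1}{2}[qq]_{AC}+\tfrac{R_2}{2}[q\rightarrow q]_{BC}+\tfrac{R_2}{2}[qq]_{BC}$. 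Next I would carry out the catalytic cancellation that is standard in the resource-inequality formalism: the ebits generated on the right are returned to the entanglement pool consumed on the left, reducing the net entanglement cost. After moving the $\tfrac{R_i}{2}$ ebits across, the inequality becomes $\langle\mathcal{N}\rangle+(H(A)_\rho-\tfrac{R_1}{2})[qq]_{AC}+(H(B)_\rho-\tfrac{R_2}{2})[qq]_{BC}\geq\tfrac{R_1}{2}[q\rightarrow q]_{AC}+\tfrac{R_2}{2}[q\rightarrow q]_{BC}$, which is precisely an entanglement-assisted quantum communication protocol with qubit-channel rates $\widetilde{R}_i=R_i/2$. Reading off the achievable region by halving each constraint from the theorem then yields the stated bounds $\widetilde{R}_1\leq\tfrac{1}{2}I(A;C|B)_\rho$, $\widetilde{R}_2\leq\tfrac{1}{2}I(B;C|A)_\rho$, and $\widetilde{R}_1+\widetilde{R}_2\leq\tfrac{1}{2}I(AB;C)_\rho$.

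The bookkeeping above is routine, so I do not expect a genuine obstacle; the only points requiring care are verifying that the net entanglement consumption $H(A)_\rho-\widetilde{R}_1$ and $H(B)_\rho-\widetilde{R}_2$ stays nonnegative and confirming that the coherent communication identity applies blockwise to the asymptotic protocol rather than to a single channel use. The first follows because $I(A;C|B)_\rho\leq 2H(A)_\rho$ (and symmetrically for $B$), so each $\widetilde{R}_i$ is automatically bounded by $H(A)_\rho$ or $H(B)_\rho$, keeping the entanglement cost positive; the second follows from the composability of resource inequalities. Hence the corollary follows immediately from the two ingredients already in hand.
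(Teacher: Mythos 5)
Your proposal is correct and follows essentially the same route as the paper's own proof: apply the coherent communication identity $2\left[ q\rightarrow qq\right] =\left[ q\rightarrow q\right] +\left[ qq\right]$ to the resource inequality of the preceding theorem and then catalytically cancel the generated ebits against the consumed entanglement, yielding $\widetilde{R}_{i}=R_{i}/2$ and the halved rate region. Your added verification that the net entanglement cost $H\left( A\right) _{\rho}-\frac{1}{2}R_{1}$ stays nonnegative (via $I\left( A;C|B\right) _{\rho}\leq2H\left( A\right) _{\rho}$, and symmetrically for Bob) is a minor point the paper leaves implicit within the catalytic formalism, but it does not alter the argument.
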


\begin{proof}
We simply recall the resource inequality from the previous theorem and apply
the coherent communication identity:
\begin{align}
\left\langle \mathcal{N}\right\rangle +H\left(  A\right)  _{\rho}\left[
qq\right]  _{AC}+H\left(  B\right)  _{\rho}\left[  qq\right]  _{BC} &  \geq
R_{1}\left[  q\rightarrow qq\right]  _{AC}+R_{2}\left[  q\rightarrow
qq\right]  _{BC}\\
&  \geq\frac{1}{2}R_{1}\left[  qq\right]  _{AC}+\frac{1}{2}R_{1}\left[
q\rightarrow q\right]  _{AC}+\frac{1}{2}R_{2}\left[  qq\right]  _{BC}+\frac
{1}{2}R_{2}\left[  q\rightarrow q\right]  _{BC}.
\end{align}
Throughout out the rest of this section, we will assume that $R_{1}$ and
$R_{2}$ satisfy the conditions (\ref{eq:ea-rate-region-cond1}%
)-(\ref{eq:ea-rate-region-cond3}). If we allow catalytic protocols, that is we
allow the use of some resources for free, provided that they are returned at
the end of the protocol, then we obtain a protocol for entanglement-assisted
quantum communication over a multiple access channel that implements the
following resource inequality:
\begin{equation}
\left\langle \mathcal{N}\right\rangle +\left(  H\left(  A\right)  _{\rho
}-\frac{1}{2}R_{1}\right)  \left[  qq\right]  _{AC}+\left(  H\left(  B\right)
_{\rho}-\frac{1}{2}R_{2}\right)  \left[  qq\right]  _{BC}\geq\frac{1}{2}%
R_{1}\left[  q\rightarrow q\right]  _{AC}+\frac{1}{2}R_{2}\left[  q\rightarrow
q\right]  _{BC}.
\end{equation}

\end{proof}

Combining the above protocol further with entanglement distribution $\left[
q\rightarrow q\right]  \geq\left[  qq\right]  $ gives the following corollary:

\begin{corollary}
There exists a catalytic quantum communication protocol (that consumes no net
entanglement) with a coherent quantum simultaneous decoder if the rates
$S_{1}$ and $S_{2}$ of quantum communication satisfy the following
inequalities:%
\begin{align}
S_{1} &  \leq I\left(  A\rangle C|B\right)  _{\rho},\\
S_{2} &  \leq I\left(  B\rangle C|A\right)  _{\rho},\\
S_{1}+S_{2} &  \leq I\left(  AB\rangle C\right)  _{\rho}.
\end{align}

\end{corollary}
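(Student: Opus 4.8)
The plan is to derive the coherent-information region by pure resource-inequality bookkeeping, feeding the entanglement-assisted quantum communication protocol of the preceding corollary into entanglement distribution so as to pay back every ebit that the assisted protocol borrows. First I would recall the resource inequality established just above,
\begin{equation}
\langle\mathcal{N}\rangle+\left(H(A)_{\rho}-\tfrac{1}{2}R_{1}\right)[qq]_{AC}+\left(H(B)_{\rho}-\tfrac{1}{2}R_{2}\right)[qq]_{BC}\geq\tfrac{1}{2}R_{1}[q\rightarrow q]_{AC}+\tfrac{1}{2}R_{2}[q\rightarrow q]_{BC},
\end{equation}
which holds whenever $R_{1}\leq I(A;C|B)_{\rho}$, $R_{2}\leq I(B;C|A)_{\rho}$, and $R_{1}+R_{2}\leq I(AB;C)_{\rho}$.

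Next I would apply entanglement distribution $[q\rightarrow q]\geq[qq]$ to the output of this inequality, spending $H(A)_{\rho}-\tfrac{1}{2}R_{1}$ of the $AC$ qubit channels to manufacture exactly $H(A)_{\rho}-\tfrac{1}{2}R_{1}$ ebits on $AC$, and likewise $H(B)_{\rho}-\tfrac{1}{2}R_{2}$ of the $BC$ qubit channels to produce the matching number of ebits on $BC$. These freshly created ebits are precisely the ones borrowed on the left-hand side, so they cancel and the protocol becomes catalytic with no net entanglement, leaving
\begin{equation}
\langle\mathcal{N}\rangle\geq\left(R_{1}-H(A)_{\rho}\right)[q\rightarrow q]_{AC}+\left(R_{2}-H(B)_{\rho}\right)[q\rightarrow q]_{BC}.
\end{equation}
Before reading off rates I would verify the bookkeeping is consistent, i.e.\ that $H(A)_{\rho}-\tfrac{1}{2}R_{1}\geq0$ in the regime of interest; this holds because $I(A;C|B)=I(A;CB)\leq2H(A)$ once $A$ and $B$ are known to be uncorrelated, so the borrowed amount never exceeds what the output channels can resupply.

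Finally I would identify the net rates $S_{1}=R_{1}-H(A)_{\rho}$ and $S_{2}=R_{2}-H(B)_{\rho}$ as coherent informations. The crucial structural fact is that $\phi^{A'A}$ and $\psi^{B'B}$ are prepared independently and only the primed systems enter $\mathcal{N}^{A'B'\rightarrow C}$, so tracing out $C$ gives $\rho^{AB}=\rho^{A}\otimes\rho^{B}$; hence $H(A|B)_{\rho}=H(A)_{\rho}$, $H(B|A)_{\rho}=H(B)_{\rho}$, and $H(AB)_{\rho}=H(A)_{\rho}+H(B)_{\rho}$. The chain rule $I(A;C|B)=H(A|B)+I(A\rangle C|B)$ then yields $S_{1}\leq I(A;C|B)_{\rho}-H(A)_{\rho}=I(A\rangle C|B)_{\rho}$, symmetrically $S_{2}\leq I(B\rangle C|A)_{\rho}$, and for the sum $I(AB;C)-H(A)-H(B)=H(C)-H(ABC)=I(AB\rangle C)_{\rho}$. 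Since the achievable $(S_{1},S_{2})$ region is merely the translate of the achievable $(R_{1},R_{2})$ pentagon by $(-H(A)_{\rho},-H(B)_{\rho})$, the three stated inequalities describe the full region.

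I do not expect a genuine obstacle, since every step is standard resource-inequality algebra; the one point demanding care is the product structure $\rho^{AB}=\rho^{A}\otimes\rho^{B}$, because it is exactly the independence $H(A|B)=H(A)$ that collapses the conditional mutual informations of the assisted region into the coherent informations claimed here. Were $A$ and $B$ correlated, subtracting $H(A)$ and $H(B)$ would not reproduce $I(A\rangle C|B)$ and $I(AB\rangle C)$, so this is the hinge of the argument.
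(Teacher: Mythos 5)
Your proposal is correct and follows essentially the same route as the paper: combine the resource inequality of the preceding corollary with entanglement distribution $\left[q\rightarrow q\right]\geq\left[qq\right]$, cancel the catalytic ebits against the freshly generated ones, and read off the net rates $S_{1}=R_{1}-H\left(A\right)_{\rho}$ and $S_{2}=R_{2}-H\left(B\right)_{\rho}$. Your explicit use of the product structure $\rho^{AB}=\rho^{A}\otimes\rho^{B}$ to turn the mutual-information bounds into coherent informations makes precise a step the paper leaves implicit, and your identification $S_{1}\leq I\left(A\rangle C|B\right)_{\rho}$ matches the corollary statement, whereas the paper's proof body contains a typo writing $I\left(A\rangle B|C\right)_{\rho}$ in its final display.
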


\begin{proof}
The protocol from the above corollary in turn leads to a proof of an
achievable rate region for unassisted quantum communication over a multiple
access channel:%
\begin{align}
&  \left\langle \mathcal{N}\right\rangle +\left(  H\left(  A\right)  _{\rho
}-\frac{1}{2}R_{1}\right)  \left[  qq\right]  _{AC}+\left(  H\left(  B\right)
_{\rho}-\frac{1}{2}R_{2}\right)  \left[  qq\right]  _{BC}\nonumber\\
\geq &  \frac{1}{2}R_{1}\left[  q\rightarrow q\right]  _{AC}+\frac{1}{2}%
R_{2}\left[  q\rightarrow q\right]  _{BC}\\
\geq &  \left(  R_{1}-H\left(  A\right)  _{\rho}\right)  \left[  q\rightarrow
q\right]  _{AC}+\left(  R_{2}-H\left(  B\right)  _{\rho}\right)  \left[
q\rightarrow q\right]  _{BC}+\left(  H\left(  A\right)  _{\rho}-\frac{1}%
{2}R_{1}\right)  \left[  qq\right]  _{AC}+\left(  H\left(  B\right)  _{\rho
}-\frac{1}{2}R_{2}\right)  \left[  qq\right]  _{BC}.
\end{align}
The second inequality follows from the fact that we can perform entanglement
distribution using noiseless quantum channels. After resource cancellation,
this leads to
\begin{align}
\left\langle \mathcal{N}\right\rangle  &  \geq\left(  R_{1}-H\left(  A\right)
_{\rho}\right)  \left[  q\rightarrow q\right]  _{AC}+\left(  R_{2}-H\left(
B\right)  _{\rho}\right)  \left[  q\rightarrow q\right]  _{BC}\\
&  =S_{1}\left[  q\rightarrow q\right]  _{AC}+S_{1}\left[  q\rightarrow
q\right]  _{BC},
\end{align}
where%
\begin{align}
S_{1} &  \leq I\left(  A\rangle B|C\right)  _{\rho},\\
S_{2} &  \leq I\left(  B\rangle A|C\right)  _{\rho},\\
S_{1}+S_{2} &  \leq I\left(  AB\rangle C\right)  _{\rho}.
\end{align}
Again, both of these two capacity regions can be achieved without time
sharing, thanks to our simultaneous decoder.
\end{proof}

\section{Entanglement-Assisted Bosonic Multiple Access Channel}

This final section details our last contribution---an achievable rate region
for entanglement-assisted classical communication over a bosonic multiple
access channel (see Refs.~\cite{WPGCRSL11,EW07}\ for a nice review of bosonic
channels). Perhaps the simplest model for this channel is the following
beamsplitter transformation (Yen and Shapiro~\cite{YS05} considered unassisted
communication over such a channel):%
\begin{align}
\hat{c}  &  =\sqrt{\eta}\hat{a}+\sqrt{1-\eta}\hat{b},\\
\hat{e}  &  =-\sqrt{1-\eta}\hat{a}+\sqrt{\eta}\hat{b},
\end{align}
where $\hat{a}$ is the annihilation operator representing the first sender
Alice's input signal, $\hat{b}$ is the annihilation operator representing the
second sender Bob's input signal, $\hat{c}$ is the annihilation operator for
the receiver's output, and $\hat{e}$ is the annihilation operator for an
inaccessible environment output of the channel. We prove the following theorem:

\begin{theorem}
\label{thm:ea-bosonic} Suppose that Alice is allowed a mean photon number
$N_{S_{a}}$ at her transmitter and Bob is allowed a mean photon number
$N_{S_{b}}$ at his transmitter. Then the following rate region is achievable
for entanglement-assisted transmission of classical information over the
beamsplitter quantum multiple access channel:%
\begin{align}
R_{1}  &  \leq g\left(  N_{S_{a}}\right)  +g\left(  \left(  \lambda_{BC}%
^{+}+1\right)  /2-1\right)  +g\left(  \left(  \lambda_{BC}^{-}+1\right)
/2-1\right)  -g\left(  \eta N_{S_{b}}+\left(  1-\eta\right)  N_{S_{a}}\right)
,\label{eq:EA-boson-region-1}\\
R_{2}  &  \leq g\left(  N_{S_{b}}\right)  +g\left(  \left(  \lambda_{AC}%
^{+}+1\right)  /2-1\right)  +g\left(  \left(  \lambda_{AC}^{-}+1\right)
/2-1\right)  -g\left(  \eta N_{S_{b}}+\left(  1-\eta\right)  N_{S_{a}}\right)
,\\
R_{1}+R_{2}  &  \leq g\left(  N_{S_{a}}\right)  +g\left(  N_{S_{b}}\right)
+g\left(  \eta N_{S_{a}}+\left(  1-\eta\right)  N_{S_{b}}\right)  -g\left(
\eta N_{S_{b}}+\left(  1-\eta\right)  N_{S_{a}}\right)  ,
\label{eq:EA-boson-region-3}%
\end{align}
where%
\begin{align}
g\left(  N\right)   &  \equiv\left(  N+1\right)  \log\left(  N+1\right)
-N\log N,\\
\lambda_{AC}^{\left(  \pm\right)  }  &  =\left(  1-\eta\right)  \left\vert
N_{S_{a}}-N_{S_{b}}\right\vert \pm\sqrt{\left(  1-\eta\right)  ^{2}\left(
N_{S_{a}}-N_{S_{b}}\right)  ^{2}+2\left(  1-\eta\right)  \left(  2N_{S_{a}%
}N_{S_{b}}+N_{S_{a}}+N_{S_{b}}\right)  +1},\\
\lambda_{BC}^{\left(  \pm\right)  }  &  =\eta\left\vert N_{S_{a}}-N_{S_{b}%
}\right\vert \pm\sqrt{\eta^{2}\left(  N_{S_{a}}-N_{S_{b}}\right)  ^{2}%
+2\eta\left(  2N_{S_{a}}N_{S_{b}}+N_{S_{a}}+N_{S_{b}}\right)  +1}.
\end{align}
(Observe that $\lambda_{AC}^{\left(  \pm\right)  }$ and $\lambda_{BC}^{\left(
\pm\right)  }$ are related by the substitution $\eta\leftrightarrow1-\eta$.)
\end{theorem}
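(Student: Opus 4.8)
The plan is to invoke the entanglement-assisted multiple-access coding theorem---either the Hsieh--Devetak--Winter region of Ref.~\cite{HDW08} or its simultaneous-decoding refinement in Theorem~\ref{thm:simul-decoder}---with a specific Gaussian choice of the shared entangled states, and then reduce the whole problem to a covariance-matrix calculation. Concretely, I would let Alice and Charlie share a two-mode squeezed vacuum $\ket{\phi}^{A'A}$ with mean photon number $N_{S_a}$ per mode and let Bob and Charlie share an independent two-mode squeezed vacuum $\ket{\psi}^{B'B}$ with mean photon number $N_{S_b}$, with Alice feeding $A'=\hat a$ and Bob feeding $B'=\hat b$ into the beamsplitter. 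Since the beamsplitter is a Gaussian unitary and the inputs are Gaussian, the output state $\rho^{ABC}$ is Gaussian, and achievability of the stated region follows at once once I show that $I(A;C|B)_\rho$, $I(B;C|A)_\rho$, and $I(AB;C)_\rho$ evaluate to the right-hand sides of \eqref{eq:EA-boson-region-1}--\eqref{eq:EA-boson-region-3}.

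Next I would expand each quantity into von Neumann entropies via the chain rules $I(A;C|B)=H(A)+H(BC)-H(ABC)$ (using that $A$ and $B$ are uncorrelated, so $H(AB)=H(A)+H(B)$), $I(B;C|A)=H(B)+H(AC)-H(ABC)$, and $I(AB;C)=H(A)+H(B)+H(C)-H(ABC)$. The single-mode entropies are immediate from the thermal character of the reduced states: $H(A)=g(N_{S_a})$ and $H(B)=g(N_{S_b})$ because each reference mode of a two-mode squeezed vacuum is thermal; the output mode $\hat c$ is thermal with mean photon number $\eta N_{S_a}+(1-\eta)N_{S_b}$, so $H(C)=g(\eta N_{S_a}+(1-\eta)N_{S_b})$; and since $\ket{\phi}\otimes\ket{\psi}$ is pure and the beamsplitter isometry sends it to a pure state on $ABCE$, I get $H(ABC)=H(E)=g(\eta N_{S_b}+(1-\eta)N_{S_a})$, the mode $\hat e$ carrying mean photon number $(1-\eta)N_{S_a}+\eta N_{S_b}$. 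This accounts for every $g$-term in the region except the two-mode contributions, and in particular produces the common $-g(\eta N_{S_b}+(1-\eta)N_{S_a})$ term in all three bounds.

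The substantive computation is the evaluation of the two remaining two-mode entropies $H(BC)$ and $H(AC)$ through the symplectic-eigenvalue formalism. I would write the input covariance matrix as the direct sum of the two squeezed-vacuum covariance matrices, propagate the $(\hat a,\hat b)$ block through the beamsplitter symplectic $S=\left(\begin{smallmatrix}\sqrt{\eta}\,I & \sqrt{1-\eta}\,I \\ -\sqrt{1-\eta}\,I & \sqrt{\eta}\,I\end{smallmatrix}\right)$, and read off the reduced covariance matrix of $(B,C)$. It has the standard two-mode form with diagonal blocks $(2N_{S_b}+1)I$ and $\left(2[\eta N_{S_a}+(1-\eta)N_{S_b}]+1\right)I$ and off-diagonal block $\sqrt{1-\eta}\,\bigl(2\sqrt{N_{S_b}(N_{S_b}+1)}\bigr)Z$ with $Z=\mathrm{diag}(1,-1)$, the correlation surviving only because the $\hat b$ component of $\hat c$ is correlated with $B$. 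Applying the two-mode formula $2\nu_\pm^2=\Delta\pm\sqrt{\Delta^2-4\det\gamma}$ with $\Delta=\det A+\det B+2\det C$ then yields a pair of symplectic eigenvalues, which are the $\lambda_{BC}^{(\pm)}$ of the theorem, so that $H(BC)$ is the sum of the corresponding single-mode entropies $g\!\left((\lambda_{BC}^{(+)}+1)/2-1\right)+g\!\left((\lambda_{BC}^{(-)}+1)/2-1\right)$; the block $(A,C)$ is handled identically and produces $\lambda_{AC}^{(\pm)}$ under the expected $\eta\leftrightarrow 1-\eta$ exchange. Assembling these entropies into the three mutual informations reproduces \eqref{eq:EA-boson-region-1}--\eqref{eq:EA-boson-region-3}.

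I expect the main obstacle to be the symplectic-eigenvalue algebra: one must verify that $\det C=-4(1-\eta)N_{S_b}(N_{S_b}+1)$ combines with the diagonal determinants so that $\sqrt{\det\gamma_{BC}}=2\eta(2N_{S_a}N_{S_b}+N_{S_a}+N_{S_b})+1$ and that $\Delta$ collapses to precisely the expression appearing under the radical in $\lambda_{BC}^{(\pm)}$, with the analogous identities for $AC$; a minor bookkeeping point is keeping the quadrature convention consistent so that a thermal state of mean photon number $N$ carries symplectic eigenvalue $2N+1$ and entropy $g(N)$. A second, more conceptual caveat is that the coding theorem of Ref.~\cite{HDW08} is stated for finite-dimensional systems, so to apply it to this energy-constrained bosonic channel I would invoke the standard truncation and limiting argument for Gaussian channels (or the established infinite-dimensional versions of the entanglement-assisted machinery); I do not expect this to alter the rate region, only to require the usual care with the mean-photon-number constraints.
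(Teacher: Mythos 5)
Your proposal is correct and follows essentially the same route as the paper's own proof: two-mode squeezed vacua shared with Charlie, the Hsieh--Devetak--Winter region transferred to the energy-constrained bosonic setting by the same truncation/limiting argument (which the paper handles in a footnote), the identical seven-entropy reduction (your conditional forms $I(A;C|B)$ and $I(B;C|A)$ equal the paper's $I(A;BC)$ and $I(B;AC)$ because $\rho^{AB}$ is a product state), and the same covariance matrices and symplectic-eigenvalue algebra, with your determinant identity $\sqrt{\det\gamma_{BC}}=2\eta\left(2N_{S_a}N_{S_b}+N_{S_a}+N_{S_b}\right)+1$ checking out. The only cosmetic discrepancy is a sign convention: the standard two-mode formula returns the positive symplectic eigenvalues $\left\vert\lambda_{BC}^{(\pm)}\right\vert$ (the paper's displayed $\lambda_{BC}^{(-)}$ is actually negative), which is immaterial since $g\left(\left(\lambda+1\right)/2-1\right)$ takes the same value at $\lambda$ and $-\lambda$, so the resulting entropies and rate region agree.
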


\begin{proof}
We assume the most natural entangled states that Alice and Charlie and Bob and
Charlie can share: a two-mode squeezed vacuum~\cite{GK04,WPGCRSL11}. This
state has the following form:%
\begin{equation}
\sum_{n=0}^{\infty}\sqrt{\frac{N_{S}^{n}}{\left(  N_{S}+1\right)  ^{n+1}}%
}\left\vert n\right\rangle \left\vert n\right\rangle ,
\end{equation}
where $N_{S}$ is the average number of photons in one mode (after tracing over
the other), Alice or Bob has the first mode, and Charlie has the second mode.
The covariance matrix for such a state is as follows~\cite{WPGCRSL11}:%
\begin{equation}
V_{\text{TMS}}\left(  N_{S}\right)  \equiv%
\begin{bmatrix}
2N_{S}+1 & 0 & 2\sqrt{N_{S}\left(  N_{S}+1\right)  } & 0\\
0 & 2N_{S}+1 & 0 & -2\sqrt{N_{S}\left(  N_{S}+1\right)  }\\
2\sqrt{N_{S}\left(  N_{S}+1\right)  } & 0 & 2N_{S}+1 & 0\\
0 & -2\sqrt{N_{S}\left(  N_{S}+1\right)  } & 0 & 2N_{S}+1
\end{bmatrix}
.
\end{equation}
The covariance matrix for the overall state before the channel acts is as
follows:%
\begin{equation}
V^{AA^{\prime}BB^{\prime}}\equiv V_{\text{TMS}}\left(  N_{S_{a}}\right)
\oplus V_{\text{TMS}}\left(  N_{S_{b}}\right)  ,
\end{equation}
where $N_{S_{a}}$ is the average number of photons in one share of the state
that Alice shares with Charlie and $N_{S_{b}}$ is the average number of
photons in one share of the state that Bob shares with Charlie.

The symplectic operator for a beamsplitter unitary is as
follows~\cite{WPGCRSL11}:%
\begin{equation}
S_{\text{BS}}^{A^{\prime}B^{\prime}}\equiv%
\begin{bmatrix}
\sqrt{\eta}I & \sqrt{1-\eta}I\\
-\sqrt{1-\eta}I & \sqrt{\eta}I
\end{bmatrix}
,
\end{equation}
and the covariance matrix of the state resulting from the beamsplitter
interaction is%
\begin{equation}
V^{ACBE}\equiv\left(  S_{\text{BS}}^{A^{\prime}B^{\prime}}\oplus
I^{AB}\right)  V^{AA^{\prime}BB^{\prime}}\left(  (S_{\text{BS}}^{A^{\prime
}B^{\prime}})^{T}\oplus I^{AB}\right)  ,
\end{equation}
where modes $C$ and $E$ emerge from the output ports of the beamsplitter (with
input ports $A^{\prime}$ and $B^{\prime}$).

Hsieh \textit{et al}.~proved that the following rate region is achievable for
entanglement-assisted communication over a quantum multiple access channel
$\mathcal{M}$:%
\begin{align}
R_{1}  &  \leq I\left(  A;BC\right)  _{\rho},\\
R_{2}  &  \leq I\left(  B;AC\right)  _{\rho},\\
R_{1}+R_{2}  &  \leq I\left(  AB;C\right)  _{\rho},
\end{align}
where $\rho^{ABC}$ is a state of the following form:%
\begin{equation}
\rho^{ABC}\equiv\mathcal{M}^{A^{\prime}B^{\prime}\rightarrow C}(\phi
^{AA^{\prime}}\otimes\psi^{BB^{\prime}}),
\end{equation}
and $\phi^{AA^{\prime}}$ and $\psi^{BB^{\prime}}$ are pure, bipartite
states~\cite{HDW08}. Their theorem applies to finite-dimensional systems, but
nevertheless, we apply their theorem to the infinite-dimensional setting by
means of a limiting argument.\footnote{The argument is similar to those
appearing Refs.~\cite{YS05,G08}, for example, and is simply that an
infinite-dimensional Hilbert space with a mean photon-number constraint is
effectively identical to a finite-dimensional Hilbert space. Suppose that we
truncate the Hilbert space at the channel input so that it is spanned by the
Fock number states $\left\{  \left\vert 0\right\rangle ,\left\vert
1\right\rangle ,\ldots,\left\vert K\right\rangle \right\}  $ where $K\gg
N_{S}$. Thus, all coherent states, squeezed states, and thermal states become
truncated to this finite-dimensional Hilbert space. Applying the
Hsieh-Devetak-Winter theorem to squeezed states in this truncated Hilbert
space gives a capacity region which is strictly an inner bound to the region
in (\ref{eq:EA-boson-region-1}-\ref{eq:EA-boson-region-3}). As we let $K$ grow
without bound, the entropies given by the Hsieh-Devetak-Winter theorem
converge to the entropies in (\ref{eq:EA-boson-region-1}%
-\ref{eq:EA-boson-region-3}).} By inspecting the above theorem, it becomes
clear that it is necessary to compute just seven entropies in order to
determine the achievable rate region: $H\left(  A\right)  _{\rho}$, $H\left(
B\right)  _{\rho}$, $H\left(  C\right)  _{\rho}$, $H\left(  AB\right)  _{\rho
}$, $H\left(  AC\right)  _{\rho}$, $H\left(  BC\right)  _{\rho}$, and
$H\left(  ABC\right)  _{\rho}$. Observe that $H\left(  ABC\right)  _{\rho
}=H\left(  E\right)  _{\rho}$ if we define $E$ as the environment of the
channel. In order to determine these entropies, we just need to figure out the
covariance matrices for each of the seven different systems corresponding to
these entropies because the entropies are a function of the symplectic
eigenvalues of these covariance matrices. These seven different covariance
matrices are as follows:%
\begin{gather}
V^{E}=%
\begin{bmatrix}
2\left(  \eta N_{S_{b}}+\left(  1-\eta\right)  N_{S_{a}}\right)  +1 & 0\\
0 & 2\left(  \eta N_{S_{b}}+\left(  1-\eta\right)  N_{S_{a}}\right)  +1
\end{bmatrix}
,\\
V^{A}=%
\begin{bmatrix}
2N_{S_{a}}+1 & 0\\
0 & 2N_{S_{a}}+1
\end{bmatrix}
,\\
V^{B}=%
\begin{bmatrix}
2N_{S_{b}}+1 & 0\\
0 & 2N_{S_{b}}+1
\end{bmatrix}
,\\
V^{C}=%
\begin{bmatrix}
2\left(  \eta N_{S_{a}}+\left(  1-\eta\right)  N_{S_{b}}\right)  +1 & 0\\
0 & 2\left(  \eta N_{S_{a}}+\left(  1-\eta\right)  N_{S_{b}}\right)  +1
\end{bmatrix}
,
\end{gather}%
\begin{equation}
V^{AB}=%
\begin{bmatrix}
2N_{S_{a}}+1 & 0 & 0 & 0\\
0 & 2N_{S_{a}}+1 & 0 & 0\\
0 & 0 & 2N_{S_{b}}+1 & 0\\
0 & 0 & 0 & 2N_{S_{b}}+1
\end{bmatrix}
,
\end{equation}%
\begin{equation}
V^{AC}=%
\begin{bmatrix}
2N_{S_{a}}+1 & 0 & 2\sqrt{\eta}\sqrt{N_{S_{a}}\left(  N_{S_{a}}+1\right)  } &
0\\
0 & 2N_{S_{a}}+1 & 0 & -2\sqrt{\eta}\sqrt{N_{S_{a}}\left(  N_{S_{a}}+1\right)
}\\
2\sqrt{\eta}\sqrt{N_{S_{a}}\left(  N_{S_{a}}+1\right)  } & 0 & 2\left(  \eta
N_{S_{a}}+\overline{\eta}N_{S_{b}}\right)  +1 & 0\\
0 & -2\sqrt{\eta}\sqrt{N_{S_{a}}\left(  N_{S_{a}}+1\right)  } & 0 & 2\left(
\eta N_{S_{a}}+\overline{\eta}N_{S_{b}}\right)  +1
\end{bmatrix}
, \label{eq:CM-AC}%
\end{equation}%
\begin{equation}
V^{BC}=%
\begin{bmatrix}
2\left(  \eta N_{S_{a}}+\overline{\eta}N_{S_{b}}\right)  +1 & 0 &
2\sqrt{\overline{\eta}}\sqrt{N_{S_{b}}\left(  N_{S_{b}}+1\right)  } & 0\\
0 & 2\left(  \eta N_{S_{a}}+\overline{\eta}N_{S_{b}}\right)  +1 & 0 &
-2\sqrt{\overline{\eta}}\sqrt{N_{S_{b}}\left(  N_{S_{b}}+1\right)  }\\
2\sqrt{\overline{\eta}}\sqrt{N_{S_{b}}\left(  N_{S_{b}}+1\right)  } & 0 &
2N_{S_{b}}+1 & 0\\
0 & -2\sqrt{\overline{\eta}}\sqrt{N_{S_{b}}\left(  N_{S_{b}}+1\right)  } & 0 &
2N_{S_{b}}+1
\end{bmatrix}
, \label{eq:CM-BC}%
\end{equation}
where $\overline{\eta}\equiv1-\eta$. The five entropies $H\left(  A\right)
_{\rho}$, $H\left(  B\right)  _{\rho}$, $H\left(  C\right)  _{\rho}$,
$H\left(  E\right)  _{\rho}$, and $H\left(  AB\right)  _{\rho}$ are
straightforward to compute because their covariance matrices all correspond to
those for thermal states:%
\begin{align}
H\left(  A\right)   &  =g\left(  N_{S_{a}}\right)  ,\\
H\left(  B\right)   &  =g\left(  N_{S_{b}}\right)  ,\\
H\left(  C\right)   &  =g\left(  \eta N_{S_{a}}+\left(  1-\eta\right)
N_{S_{b}}\right)  ,\\
H\left(  ABC\right)   &  =H\left(  E\right)  =g\left(  \eta N_{S_{b}}+\left(
1-\eta\right)  N_{S_{a}}\right)  ,\\
H\left(  AB\right)   &  =g\left(  N_{S_{a}}\right)  +g\left(  N_{S_{b}%
}\right)  .
\end{align}
We can calculate the other entropies $H\left(  AC\right)  $ and $H\left(
BC\right)  $ by computing the symplectic eigenvalues of the covariance
matrices in (\ref{eq:CM-AC}) and (\ref{eq:CM-BC}), respectively:%
\begin{align}
\lambda_{AC}^{\left(  \pm\right)  }  &  =\left(  1-\eta\right)  \left\vert
N_{S_{a}}-N_{S_{b}}\right\vert \pm\sqrt{\left(  1-\eta\right)  ^{2}\left(
N_{S_{a}}-N_{S_{b}}\right)  ^{2}+2\left(  1-\eta\right)  \left(  2N_{S_{a}%
}N_{S_{b}}+N_{S_{a}}+N_{S_{b}}\right)  +1},\\
\lambda_{BC}^{\left(  \pm\right)  }  &  =\eta\left\vert N_{S_{a}}-N_{S_{b}%
}\right\vert \pm\sqrt{\eta^{2}\left(  N_{S_{a}}-N_{S_{b}}\right)  ^{2}%
+2\eta\left(  2N_{S_{a}}N_{S_{b}}+N_{S_{a}}+N_{S_{b}}\right)  +1}.
\end{align}
Recall that we find the symplectic eigenvalues of a matrix $V$ by computing
the eigenvalues of the matrix $\left\vert iJV\right\vert $ \cite{WPGCRSL11}
where%
\begin{equation}
J\equiv\bigoplus_{i=1}^{n}%
\begin{bmatrix}
0 & 1\\
-1 & 0
\end{bmatrix}
,
\end{equation}
and $n$ is the number of modes. These symplectic eigenvalues lead to the
following values for the entropies:%
\begin{align}
H\left(  AC\right)   &  =g\left(  \left(  \lambda_{AC}^{+}+1\right)
/2-1\right)  +g\left(  \left(  \lambda_{AC}^{-}+1\right)  /2-1\right)  ,\\
H\left(  BC\right)   &  =g\left(  \left(  \lambda_{BC}^{+}+1\right)
/2-1\right)  +g\left(  \left(  \lambda_{BC}^{-}+1\right)  /2-1\right)  ,
\end{align}
by exploiting the fact that the entropy of a Gaussian state $\rho$ is the
following function of its symplectic eigenvalues $\left\{  \nu_{k}\right\}  $
\cite{WPGCRSL11}:%
\begin{equation}
H\left(  \rho\right)  =\sum_{k}g\left(  \left(  \nu_{k}+1\right)  /2-1\right)
.
\end{equation}
Thus, an achievable rate region for the entanglement-assisted bosonic multiple
access channel is as stated in the theorem.
\end{proof}

Figure~\ref{fig:EA-bosonic} plots several achievable rate regions given by
Theorem~\ref{thm:ea-bosonic} as the transmissivity parameter $\eta$ varies
from 0 to 1. The first plot has Alice's mean photon number much higher than
Bob's, while the second plot sets them equal.\begin{figure}[ptb]
\begin{center}
\includegraphics[
width=6.5337in
]{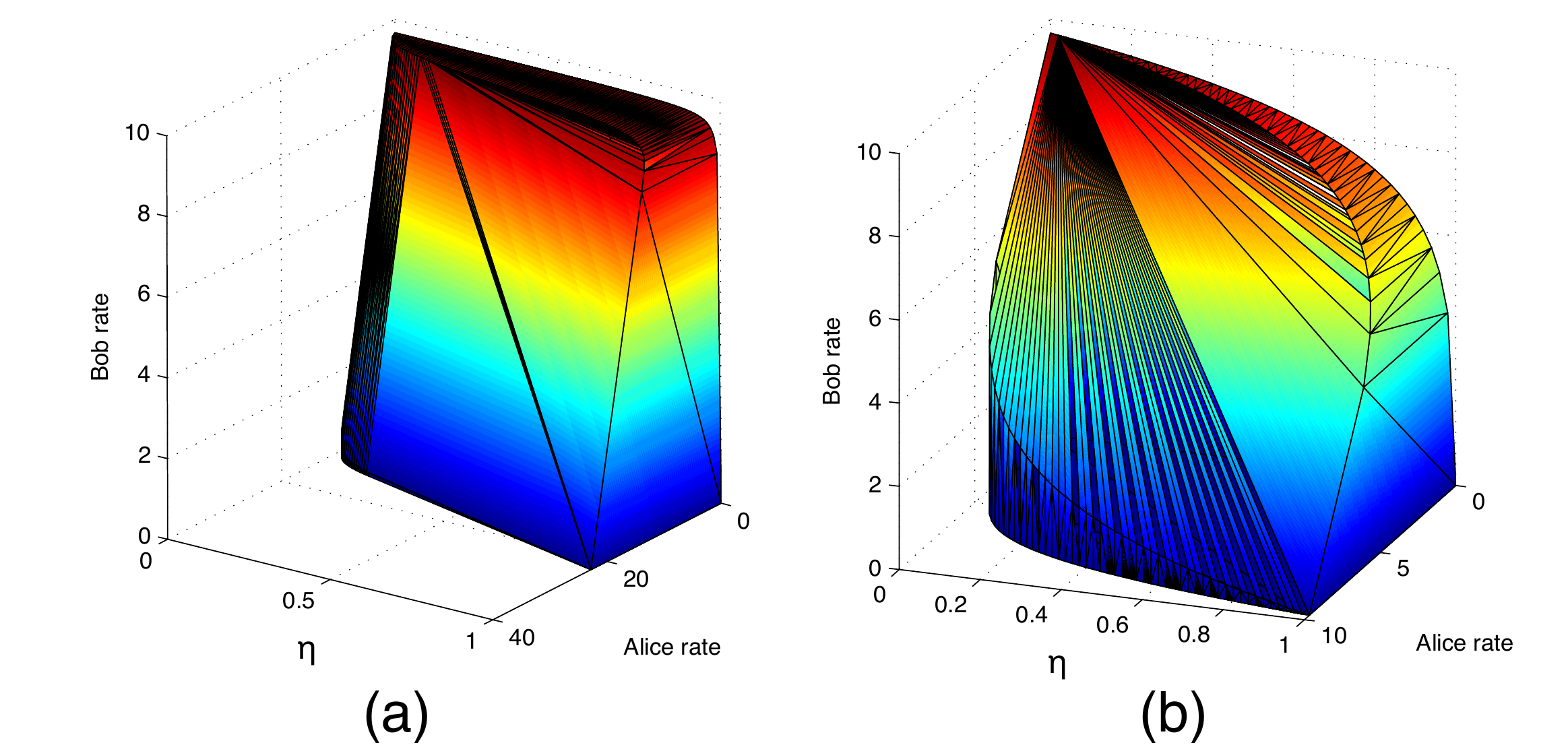}
\end{center}
\caption{The figure displays the achievable rate region from
Theorem~\ref{thm:ea-bosonic} for the beamsplitter multiple access channel as
the beamsplitter transmissivity $\eta$ varies from 0 to 1. (a) The region as
$\eta$ varies when Alice and Bob's mean input photon number are fixed at
$N_{S_{a}}=1000$ and $N_{S_{b}}=10$, respectively. (b) The region as $\eta$
varies when $N_{S_{a}}=10$ and $N_{S_{b}}=10$.}%
\label{fig:EA-bosonic}%
\end{figure}

\subsection{Comparison with the Unassisted Bosonic Multiple Access Rate
Region}

We would also like to compare the achievable rate region given by
Theorem~\ref{thm:ea-bosonic} to the Yen-Shapiro outer bound for unassisted
classical communication over the beamsplitter bosonic multiple access
channel~\cite{YS05}. Consider that the Yen-Shapiro outer bound is as follows:%
\begin{align}
R_{1}  &  \leq g\left(  N_{S_{a}}\right)  ,\\
R_{2}  &  \leq g\left(  N_{S_{b}}\right)  ,\\
R_{1}+R_{2}  &  \leq g\left(  \eta N_{S_{a}}+\left(  1-\eta\right)  N_{S_{b}%
}\right)  . \label{eq:unassisted-sum-rate}%
\end{align}
They derived this outer bound with two straightforward arguments. First, if
$N_{S_{a}}$ and $N_{S_{b}}$ are the respective mean photon numbers at the
channel input, then the mean photon number at the output is $\eta N_{S_{a}%
}+\left(  1-\eta\right)  N_{S_{b}}$, and the Holevo quantity can never exceed
$g\left(  \eta N_{S_{a}}+\left(  1-\eta\right)  N_{S_{b}}\right)
$~\cite{GGLMSY04}. The individual rate bounds follow by assuming that the
receiver gets access to both output ports of the channel. The best strategy
would then be simply to invert the beamsplitter, and the rate bounds follow
from a similar argument (that the Holevo quantity for mean photon number
constraints $N_{S_{a}}$ and $N_{S_{b}}$ cannot exceed $g\left(  N_{S_{a}%
}\right)  $ and $g\left(  N_{S_{b}}\right)  $, respectively).

It is straightforward to demonstrate that the sum rate bound in
Theorem~\ref{thm:ea-bosonic}\ always exceeds the sum rate bound in
(\ref{eq:unassisted-sum-rate}). Consider that the difference between these two
sum rate bounds is%
\begin{equation}
g\left(  N_{S_{a}}\right)  +g\left(  N_{S_{b}}\right)  -g\left(  \eta
N_{S_{b}}+\left(  1-\eta\right)  N_{S_{a}}\right)  ,
\end{equation}
and this quantity is always positive because $g\left(  x\right)  $ is positive
and monotone increasing for $x\geq0$ (i.e., supposing WLOG\ that $N_{S_{a}%
}\geq N_{S_{b}}$, it follows that $N_{S_{a}}\geq\eta N_{S_{b}}+\left(
1-\eta\right)  N_{S_{a}}$ and thus $g\left(  N_{S_{a}}\right)  \geq g\left(
\eta N_{S_{b}}+\left(  1-\eta\right)  N_{S_{a}}\right)  $). The individual
rate bounds are incomparable as Figure~\ref{fig:unassist-vs-assist}%
\ demonstrates---there are examples of channels and photon number constraints
for which the assisted region contains or does not contain the Yen-Shapiro
unassisted outer bound.

\begin{figure}[ptb]
\begin{center}
\includegraphics[
width=6.0in
]{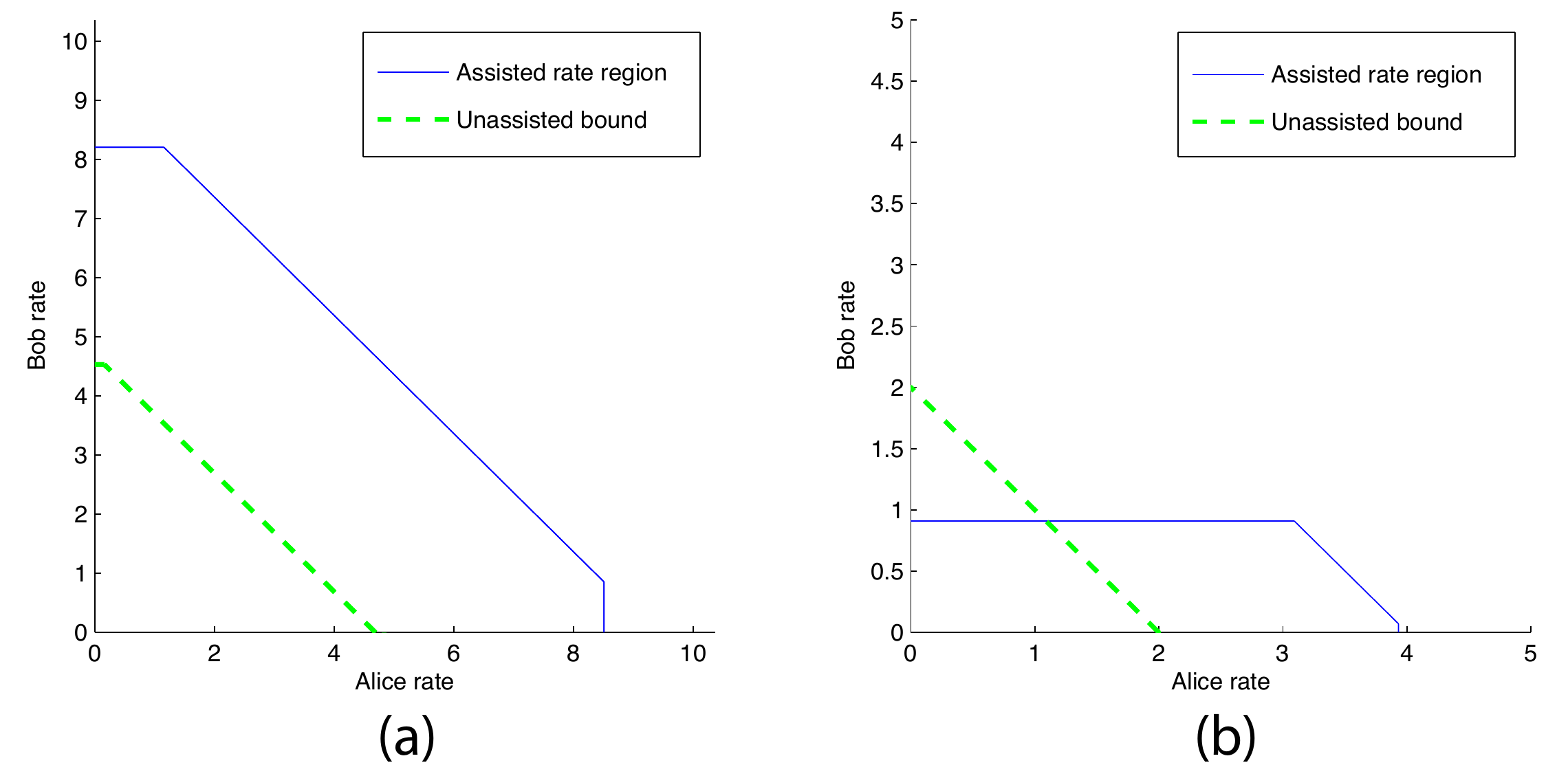}
\end{center}
\caption{The figure compares the achievable rate region from
Theorem~\ref{thm:ea-bosonic}~with the Yen-Shapiro outer bound on the
unassisted region~\cite{YS05} for two examples. (a) The first example shows
that our assisted achievable rate region contains the Yen-Shapiro outer bound
on the unassisted region when $N_{S_{a}}=10$, $N_{S_{b}}=8$, and $\eta=1/2$.
(b) The second example shows that our assisted achievable rate region does
\textit{not} contain the Yen-Shapiro outer bound on the unassisted region when
$N_{S_{a}}=1$, $N_{S_{b}}=1$, and $\eta=0.95$.}%
\label{fig:unassist-vs-assist}%
\end{figure}

\section{Conclusion}

We have discussed five different scenarios for entanglement-assisted classical
communication:\ sequential decoding for a single-sender, single-receiver
channel, sequential and successive decoding for a multiple access channel,
simultaneous decoding, coherent simultaneous decoding, and communication over
a bosonic channel. Our third contribution gives further progress toward
proving the quantum simultaneous decoding conjecture from Ref.~\cite{FHSSW11}%
\ (see Appendix~A in the thesis of Dutil for a different manifestation of this
conjecture in distributed compression~\cite{D11}).

Several open questions remain. It would of course be good to prove that the
quantum simultaneous decoding conjecture holds in the general case for
entanglement-assisted classical communication or even to broaden the classes
of channels or the conditions for which it holds. It would be worthwhile to
determine whether our strategy for entanglement-assisted classical
communication over a bosonic multiple access channel is optimal.

We are grateful to Vittorio Giovannetti for suggesting the idea of extending
the GLM\ sequential decoder to the entanglement-assisted case, and we thank
Pranab Sen for sharing his results in Ref.~\cite{S11a} and for pointing out
that a slight modification of our proof technique from the first version of
this article solves the quantum simultaneous decoding conjecture for two
senders. MMW\ acknowledges useful discussions with Omar Fawzi, Patrick Hayden,
Ivan Savov, and Pranab Sen during the development of Ref.~\cite{FHSSW11}.
MMW\ acknowledges financial support from the MDEIE\ (Qu\'{e}bec)
PSR-SIIRI\ international collaboration grant.

\appendix

\section{Appendix}

\label{sec:sequential-packing-proof}

\begin{proof}
[Proof of the Sequential Packing Lemma]Our proof below is essentially
identical to the proof given in Ref.~\cite{GLM10}, with the exception that it
extracts only the most basic conditions needed (these conditions are given in
the statement of the theorem). Given a message set $\mathcal{M}=\left\{
1,2,\dots,\left\vert \mathcal{M}\right\vert \right\}  $, we construct a code
$\mathcal{C}\equiv\left\{  c_{m}\right\}  _{m\in\mathcal{M}}$ randomly such
that each $c_{m}$ takes a value in $\mathcal{X}$ with probability
$p_{X}\left(  c_{m}\right)  $. Using this code, Alice chooses a message $m$
from the message set $\mathcal{M}$ and encodes it in the quantum codeword
$\rho_{c_{m}}$. To decode the message $m$, Bob performs the following steps:

\begin{enumerate}
\item Starting from $k=1$, Bob tries to determine if he received the $k$th message.

\item Bob first makes a projective measurement with the code subspace
projector $\Pi$ to determine if the received state is in the code subspace.

\item If the answer is NO, then an error has occurred and Bob aborts the protocol.

\item If the answer is YES, Bob performs another projective measurement on the
post-measurement state using the codeword subspace projector $\Pi_{c_{k}}$.

\item If the answer is YES, then Bob declares to have received the $k$th
message and stops the protocol.

\item If the answer is NO, then Bob increments $k$ and goes back to Step~2 if
$k<\left\vert \mathcal{M}\right\vert $. If $k=\left\vert \mathcal{M}%
\right\vert $, Bob declares that an error has occurred and aborts the protocol.
\end{enumerate}

As derived in Ref.~\cite{GLM10}, the following POVM $\left\{  \Lambda
_{m}\right\}  _{m\in\mathcal{M}}$ corresponds to the above sequential decoding
scheme:%
\begin{equation}
\Lambda_{m}\equiv\bar{Q}_{c_{1}}\cdots\bar{Q}_{c_{m-1}}\bar{\Pi}_{c_{m}}%
\bar{Q}_{c_{m-1}}\cdots\bar{Q}_{c_{1}},
\end{equation}
where for any operator $\Theta$, we define $\bar{\Theta}$ as%
\begin{equation}
\bar{\Theta}\equiv\Pi\Theta\Pi,
\end{equation}
and%
\begin{equation}
Q_{x}\equiv I-\Pi_{x}.
\end{equation}

We analyze the performance of this sequential decoding scheme by computing a
lower bound on the expectation of the average success probability, where the
expectation is with respect to all possible codes:%
\begin{align}
\mathbb{E}_{\mathcal{C}}\left\{  \bar{p}_{\text{succ}}\left(  \mathcal{C}%
\right)  \right\}   &  =\sum_{c_{1},\dots,c_{\left\vert \mathcal{M}\right\vert
}}p_{X}\left(  c_{1}\right)  \cdots p_{X}\left(  c_{\left\vert \mathcal{M}%
\right\vert }\right)  \frac{1}{\left\vert \mathcal{M}\right\vert }\sum
_{m=1}^{\left\vert \mathcal{M}\right\vert }\mathrm{Tr}\left\{  \Pi_{c_{m}}%
\bar{Q}_{c_{m-1}}\cdots\bar{Q}_{c_{1}}\rho_{c_{m}}\bar{Q}_{c_{1}}\cdots\bar
{Q}_{c_{m-1}}\right\} \\
&  =\frac{1}{\left\vert \mathcal{M}\right\vert }\sum_{l=0}^{\left\vert
\mathcal{M}\right\vert -1}\sum_{x,c_{1},\dots,c_{l}}p_{X}\left(  x\right)
p_{X}\left(  c_{1}\right)  \cdots p_{X}\left(  c_{l}\right)  \mathrm{Tr}%
\left\{  \Pi_{x}\bar{Q}_{c_{l}}\cdots\bar{Q}_{c_{1}}\rho_{x}\bar{Q}_{c_{1}%
}\cdots\bar{Q}_{c_{l}}\right\} \\
&  =\frac{1}{\left\vert \mathcal{M}\right\vert }\sum_{l=0}^{\left\vert
\mathcal{M}\right\vert -1}\sum_{x,c_{1},\dots,c_{l}}p_{X}\left(  x\right)
p_{X}\left(  c_{1}\right)  \cdots p_{X}\left(  c_{l}\right)  \sum_{y}%
\sum_{y^{\prime}\in\mathcal{T}_{x}}\lambda_{x,y}\left\vert \left\langle
\psi_{x,y^{\prime}}\right\vert \bar{Q}_{c_{1}}\cdots\bar{Q}_{c_{l}}\left\vert
\psi_{x,y}\right\rangle \right\vert ^{2}. \label{eq:succ-lower-bound-1}%
\end{align}
We obtain the last equality by writing out the spectral decomposition of
$\Pi_{x}$ and $\rho_{x}$:%
\begin{align}
\rho_{x}  &  =\sum_{y}\lambda_{x,y}\left\vert \psi_{x,y}\right\rangle
\left\langle \psi_{x,y}\right\vert ,\\
\Pi_{x}  &  =\sum_{y\in\mathcal{T}_{x}}\left\vert \psi_{x,y}\right\rangle
\left\langle \psi_{x,y}\right\vert .
\end{align}
We note that $\rho_{x}$ and $\Pi_{x}$ commute by assumption and therefore
share common eigenstates. We use $\mathcal{T}_{x}$ to index a subset of the
eigenstates of $\rho_{x}$.

The following lower bound applies to the rightmost term in
(\ref{eq:succ-lower-bound-1}):%
\begin{align}
\sum_{y}\sum_{y^{\prime}\in\mathcal{T}_{x}}\lambda_{x,y}\left\vert
\left\langle \psi_{x,y^{\prime}}\right\vert \bar{Q}_{c_{1}}\cdots\bar
{Q}_{c_{l}}\left\vert \psi_{x,y}\right\rangle \right\vert ^{2}  &  \geq
\sum_{y\in\mathcal{T}_{x}}\lambda_{x,y}\left\vert \left\langle \psi
_{x,y}\right\vert \bar{Q}_{c_{1}}\cdots\bar{Q}_{c_{l}}\left\vert \psi
_{x,y}\right\rangle \right\vert ^{2}\\
&  =\sum_{y\in\mathcal{T}_{x}}\lambda_{x,y}\left\vert \left\langle \psi
_{x,y}\right\vert \bar{Q}_{c_{1}}\cdots\bar{Q}_{c_{l}}\left\vert \psi
_{x,y}\right\rangle \right\vert ^{2}\sum_{y}\lambda_{x,y}\label{eq:beforeCS}\\
&  \geq\left\vert \sum_{y\in\mathcal{T}_{x}}\lambda_{x,y}\left\langle
\psi_{x,y}\right\vert \bar{Q}_{c_{1}}\cdots\bar{Q}_{c_{l}}\left\vert
\psi_{x,y}\right\rangle \right\vert ^{2}\label{eq:afterCS}\\
&  =\left\vert \mathrm{Tr}\left\{  \Pi_{x}\rho_{x}\Pi_{x}\bar{Q}_{c_{1}}%
\cdots\bar{Q}_{c_{l}}\right\}  \right\vert ^{2}.
\end{align}
The first inequality follows by eliminating some positive terms from the
summation. The second inequality follows by applying the Cauchy-Schwarz
inequality. The last equality follows by exploiting the assumed commutative
relation between $\Pi_{x}$ and $\rho_{x}$. Therefore, the following lower
bound applies to the expectation of the average success probability:%
\begin{equation}
\mathbb{E}_{\mathcal{C}}\left\{  \bar{p}_{\text{succ}}\left(  \mathcal{C}%
\right)  \right\}  \geq\frac{1}{\left\vert \mathcal{M}\right\vert }\sum
_{l=0}^{\left\vert \mathcal{M}\right\vert -1}\sum_{x,c_{1},\dots,c_{l}}%
p_{X}\left(  x\right)  p_{X}\left(  c_{1}\right)  \cdots p_{X}\left(
c_{l}\right)  \left\vert \mathrm{Tr}\left\{  \Pi_{x}\rho_{x}\Pi_{x}\bar
{Q}_{c_{1}}\cdots\bar{Q}_{c_{l}}\right\}  \right\vert ^{2}.
\end{equation}
We again apply the Cauchy-Schwarz inequality to the inner summation:%
\begin{align}
&  \sum_{x,c_{1},\dots,c_{l}}p_{X}\left(  x\right)  p_{X}\left(  c_{1}\right)
\cdots p_{X}\left(  c_{l}\right)  \left\vert \mathrm{Tr}\left\{  \Pi_{x}%
\rho_{x}\Pi_{x}\bar{Q}_{c_{1}}\cdots\bar{Q}_{c_{l}}\right\}  \right\vert
^{2}\\
\geq &  \left\vert \sum_{x,c_{1},\dots,c_{l}}p_{X}\left(  x\right)
p_{X}\left(  c_{1}\right)  \cdots p_{X}\left(  c_{l}\right)  \mathrm{Tr}%
\left\{  \Pi_{x}\rho_{x}\Pi_{x}\bar{Q}_{c_{1}}\cdots\bar{Q}_{c_{l}}\right\}
\right\vert ^{2}\\
=  &  \left\vert \mathrm{Tr}\left\{  \left(  \sum_{x}p_{X}\left(  x\right)
\Pi_{x}\rho_{x}\Pi_{x}\right)  \left(  \sum_{c_{1}}p_{X}\left(  c_{1}\right)
\bar{Q}_{c_{1}}\right)  \cdots\left(  \sum_{c_{l}}p_{X}\left(  c_{l}\right)
\bar{Q}_{c_{l}}\right)  \right\}  \right\vert ^{2}\\
=  &  \left\vert \mathrm{Tr}\left\{  W_{1}\mathcal{Q}^{l}\right\}  \right\vert
^{2},
\end{align}
where we define%
\begin{align}
W_{q}  &  \equiv\sum_{x}p_{X}\left(  x\right)  \Pi_{x}\rho_{x}^{q}\Pi_{x},\\
\mathcal{Q}  &  \equiv\sum_{x}p_{X}\left(  x\right)  \bar{Q}_{x},
\end{align}
and it is understood that $\mathcal{Q}^{0}=\Pi$ (an abuse of notation
explained further below). Therefore, we obtain the following lower bound on
the expectation of the average success probability:%
\begin{equation}
\mathbb{E}_{\mathcal{C}}\left\{  \bar{p}_{\text{succ}}\left(  \mathcal{C}%
\right)  \right\}  \geq\frac{1}{\left\vert \mathcal{M}\right\vert }\sum
_{l=0}^{\left\vert \mathcal{M}\right\vert -1}\left\vert \mathrm{Tr}\left\{
W_{1}\mathcal{Q}^{l}\right\}  \right\vert ^{2}. \label{eq:succ-prob-1-1}%
\end{equation}

In order to proceed, we note that
\begin{align}
\mathcal{Q}  &  =\sum_{x}p_{X}\left(  x\right)  \bar{Q}_{x}\\
&  =\Pi\left(  \sum_{x}p_{X}\left(  x\right)  \left(  I-\Pi_{x}\right)
\right)  \Pi\\
&  \leq I,
\end{align}
and therefore%
\begin{equation}
\mathrm{Tr}\left\{  W_{1}\mathcal{Q}^{l}\right\}  =\mathrm{Tr}\left\{
W_{1}\mathcal{Q}^{\frac{l-1}{2}}\mathcal{QQ}^{\frac{l-1}{2}}\right\}
\leq\mathrm{Tr}\left\{  W_{1}\mathcal{Q}^{l-1}\right\}  .
\end{equation}

Given this observation, we can further lower bound the success probability by
taking the smallest term of the summation from (\ref{eq:succ-prob-1-1}):%
\begin{align}
\mathbb{E}_{\mathcal{C}}\left\{  \bar{p}_{\text{succ}}\left(  \mathcal{C}%
\right)  \right\}   &  \geq\left\vert \mathrm{Tr}\left\{  W_{1}\mathcal{Q}%
^{\left\vert \mathcal{M}\right\vert -1}\right\}  \right\vert ^{2}\\
&  =\left\vert \mathrm{Tr}\left\{  W_{1}\left(  \bar{I}-\bar{W}_{0}\right)
^{\left\vert \mathcal{M}\right\vert -1}\right\}  \right\vert ^{2}\\
&  =\left\vert \mathrm{Tr}\left\{  \sum_{z=0}^{\left\vert \mathcal{M}%
\right\vert -1}\binom{\left\vert \mathcal{M}\right\vert -1}{z}\left(
-1\right)  ^{z}W_{1}\bar{I}^{\left\vert \mathcal{M}\right\vert -z}\bar{W}%
_{0}^{z}\right\}  \right\vert ^{2}\\
&  =\left\vert \sum_{z=0}^{\left\vert \mathcal{M}\right\vert -1}%
\binom{\left\vert \mathcal{M}\right\vert -1}{z}\left(  -1\right)
^{z}\mathrm{Tr}\left\{  W_{1}\Pi\bar{W}_{0}^{z}\right\}  \right\vert ^{2}%
\end{align}
Here, we define a function $f_{z}$ as%
\begin{equation}
f_{z}\equiv\mathrm{Tr}\left\{  W_{1}\Pi\bar{W}_{0}^{z}\right\}  ,
\end{equation}
where $z$ is a nonnegative integer. We abused the notation of $\bar{W}_{0}%
^{z}$ here, which does \textit{not }mean to raise the eigenvalues of $\bar
{W}_{0}$ to the power $z$ in its spectral decomposition, but rather%
\begin{equation}
\bar{W}_{0}^{z}=\prod_{i=1}^{z}\bar{W_{0}},
\end{equation}
as it arises from the binomial expansion. We note that $f_{0}=\mathrm{Tr}%
\left\{  W_{1}\Pi\right\}  $ and the function $f_{z}$ is always positive.
Thus, the above expression is equal to the following one:%
\begin{align}
&  \left\vert \sum_{z=0}^{\left\vert \mathcal{M}\right\vert -1}\binom
{\left\vert \mathcal{M}\right\vert -1}{z}\left(  -1\right)  ^{z}%
f_{z}\right\vert ^{2}\\
&  =\left\vert f_{0}+\sum_{z=1}^{\left\vert \mathcal{M}\right\vert -1}%
\binom{\left\vert \mathcal{M}\right\vert -1}{z}\left(  -1\right)  ^{z}%
f_{z}\right\vert ^{2}\\
&  =\left\vert A\right\vert ^{2},
\end{align}
with%
\begin{equation}
A\equiv f_{0}+\sum_{z=1}^{\left\vert \mathcal{M}\right\vert -1}\binom
{\left\vert \mathcal{M}\right\vert -1}{z}\left(  -1\right)  ^{z}f_{z}.
\end{equation}
We then have%
\begin{equation}
A\geq2f_{0}-\sum_{z=0}^{\left\vert \mathcal{M}\right\vert -1}\binom{\left\vert
\mathcal{M}\right\vert -1}{z}f_{z}.
\end{equation}
The function $f_{z}$ satisfies the following two properties:%
\begin{align}
f_{0}  &  \geq1-2\epsilon,\\
f_{z}  &  \leq\left(  \frac{d}{D}\right)  ^{z}f_{0}.
\end{align}
We now prove this. First we show that $f_{0}$ is $\epsilon$-close to one:
\allowdisplaybreaks[1]%
\begin{align}
f_{0}  &  =\mathrm{Tr}\left\{  W_{1}\Pi\right\} \\
&  =\sum_{x}p_{X}\left(  x\right)  \mathrm{Tr}\left\{  \Pi_{x}\rho_{x}\Pi
_{x}\Pi\right\} \\
&  =\sum_{x}p_{X}\left(  x\right)  \mathrm{Tr}\left\{  \Pi_{x}\rho_{x}%
\Pi\right\} \\
&  =\sum_{x}p_{X}\left(  x\right)  \mathrm{Tr}\left\{  \left(  I-\left(
I-\Pi_{x}\right)  \right)  \rho_{x}\Pi\right\} \\
&  =\sum_{x}p_{X}\left(  x\right)  \mathrm{Tr}\left\{  \rho_{x}\Pi\right\}
-\sum_{x}p_{X}\left(  x\right)  \mathrm{Tr}\left\{  (I-\Pi_{x})\rho_{x}%
\Pi\right\} \\
&  \geq\sum_{x}p_{X}\left(  x\right)  \mathrm{Tr}\left\{  \rho_{x}\Pi\right\}
-\sum_{x}p_{X}\left(  x\right)  \mathrm{Tr}\left\{  (I-\Pi_{x})\rho
_{x}\right\} \\
&  \geq\sum_{x}p_{X}\left(  x\right)  \mathrm{Tr}\left\{  \rho_{x}\Pi\right\}
-\epsilon\nonumber\\
&  \geq1-2\epsilon
\end{align}
The third equality follows by the commutative relation between $\Pi_{x}$ and
$\rho_{x}$. The second inequality follows from the condition
(\ref{eq:unit-prob-2}). Now we will upper bound the function $f_{z}$ in terms
of $f_{z-1}$, and we show that we can upper bound $f_{z}$ in terms of
$f_{z-1}$, and as a result, in terms of $f_{0}$.
\begin{align}
f_{z}  &  =\mathrm{Tr}\left\{  W_{1}\bar{W}_{0}^{z}\right\} \\
&  =\mathrm{Tr}\left\{  \sqrt{W_{1}}\bar{W}_{0}^{\frac{z-1}{2}}\bar{W}_{0}%
\bar{W}_{0}^{\frac{z-1}{2}}\sqrt{W_{1}}\right\} \\
&  =\mathrm{Tr}\left\{  \sqrt{W_{1}}\bar{W}_{0}^{\frac{z-1}{2}}\Pi\left(
\sum_{x}p_{X}\left(  x\right)  \Pi_{x}\right)  \Pi\bar{W}_{0}^{\frac{z-1}{2}%
}\sqrt{W_{1}}\right\} \\
&  \leq d\cdot\mathrm{Tr}\left\{  \sqrt{W_{1}}\bar{W}_{0}^{\frac{z-1}{2}}%
\Pi\left(  \sum_{x}p_{X}\left(  x\right)  \Pi_{x}\rho_{x}\Pi_{x}\right)
\Pi\bar{W}_{0}^{\frac{z-1}{2}}\sqrt{W_{1}}\right\} \\
&  \leq d\cdot\mathrm{Tr}\left\{  \sqrt{W_{1}}\bar{W}_{0}^{\frac{z-1}{2}}%
\Pi\rho\Pi\bar{W}_{0}^{\frac{z-1}{2}}\sqrt{W_{1}}\right\} \\
&  \leq\frac{d}{D}\mathrm{Tr}\left\{  \sqrt{W_{1}}\bar{W}_{0}^{\frac{z-1}{2}%
}\Pi\bar{W}_{0}^{\frac{z-1}{2}}\sqrt{W_{1}}\right\} \\
&  \leq\frac{d}{D}\mathrm{Tr}\left\{  W_{1}\bar{W}_{0}^{z-1}\right\} \\
&  =\frac{d}{D}f_{z-1}\\
\Rightarrow f_{z}  &  \geq\left(  \frac{d}{D}\right)  ^{z}f_{0}%
\end{align}
In this derivation, we used the conditions (\ref{eq:equi-part-1}) and
(\ref{eq:equi-part-2}) and the fact that $W_{1}$ and $\bar{W}_{0}$ are positive.

Therefore, using the above two inequalities, we get that%
\begin{align}
A  &  \geq2f_{0}-f_{0}\sum_{z=0}^{\left\vert \mathcal{M}\right\vert -1}%
\binom{\left\vert \mathcal{M}\right\vert -1}{z}\left(  \frac{d}{D}\right)
^{z}\\
&  =f_{0}\left(  2-\left(  1+\frac{d}{D}\right)  ^{\left\vert \mathcal{M}%
\right\vert -1}\right) \\
&  \geq\left(  1-2\epsilon\right)  \left(  2-e^{\frac{d}{D}\left\vert
\mathcal{M}\right\vert }\right)  .
\end{align}
The last inequality follows from the fact that $1+x\leq e^{x}$ for all $x$,
and our analysis completes with the observation that%
\begin{equation}
\mathbb{E}_{\mathcal{C}}\left\{  \bar{p}_{\text{succ}}\left(  \mathcal{C}%
\right)  \right\}  \geq\left\vert A\right\vert ^{2}\geq\left\vert \left(
1-2\epsilon\right)  \left(  2-e^{\frac{d}{D}\left\vert \mathcal{M}\right\vert
}\right)  \right\vert ^{2},
\end{equation}
as long as $2-\exp\left\{  d\left\vert \mathcal{M}\right\vert /D\right\}  $ is positive.
\end{proof}

\bibliographystyle{plain}
\bibliography{Ref}

\end{document}